\documentclass[12pt]{article}

\usepackage{wrapfig}
\usepackage{graphicx}
\usepackage[none]{hyphenat}
\usepackage[english]{babel}
\usepackage[utf8]{inputenc}
\usepackage[T1]{fontenc}

\usepackage{authblk}
\usepackage{tikz-cd}

\usepackage{caption}
\usepackage{csquotes}
\usepackage{mathtools,amssymb,amsfonts,amsthm}

\usepackage[breaklinks,unicode=true]{hyperref}
\usepackage[capitalise]{cleveref} 

\theoremstyle{plain}
\newtheorem{theorem}{Theorem}
\newtheorem*{theorem*}{Theorem}
\newtheorem{lemma}[theorem]{Lemma}
\newtheorem*{lemma*}{Lemma}
\newtheorem{proposition}[theorem]{Proposition}
\newtheorem*{proposition*}{Proposition}
\newtheorem{corollary}[theorem]{Corollary}
\newtheorem*{corollary*}{Corollary}

\theoremstyle{definition}
\newtheorem{definition}{Definition}
\newtheorem*{definition*}{Definition}
\newtheorem{example}{Example}
\newtheorem*{example*}{Example}

\crefname{theorem}{Theorem}{Theorems}
\Crefname{theorem}{Theorem}{Theorems}
\crefname{lemma}{Lemma}{Lemmas}
\Crefname{lemma}{Lemma}{Lemmas}
\crefname{proposition}{Proposition}{Propositions}
\Crefname{Prop}{Proposition}{Propositions}
\crefname{corollary}{Corollary}{Corollaries}
\Crefname{corollary}{Corollary}{Corollaries}
\crefname{definition}{Definition}{Definitions}
\Crefname{definition}{Definition}{Definitions}
\crefname{example}{Example}{Examples}
\Crefname{example}{Example}{Examples}
\crefname{theorem}{Theorem}{Theorems}

\newtheorem*{exercise*}{Exercise}
\crefname{exercise}{exercise}{exercises}
\Crefname{exercise}{Exercise}{Exercises}  

\theoremstyle{remark}

\newtheorem*{remarkx*}{Remark}
\crefname{remark}{Remark}{Remarks}
\Crefname{remark}{Remark}{Remarks}
\newenvironment{remark}
  {\pushQED{\qed}\remarkx}
  {\popQED\endremarkx}
\newenvironment{remark*}
  {\pushQED{\qed}\remarkx*}
  {\popQED\endremarkx*}



\title{Characteristic foliations of material evolution: from remodeling to aging}

\author[-]{V\'ictor Manuel Jim\'enez}
\author[+*]{Manuel de León}
\author[ç]{Marcelo Epstein}

\affil[1]{\href{mailto:victormanuel.jimenez@uah.es}{victormanuel.jimenez@uah.es}}

\affil[2]{\href{mailto:mdeleon@icmat.es}{mdeleon@icmat.es}}

\affil[3]{\href{mailto:epstein@enme.ucalgary.ca}{epstein@enme.ucalgary.ca}}

\affil[-]{Universidad de Alcal\'a (UAH), Departamento de F\'isica y Matem\'aticas.
Av. de Le\'on, 4A, 28805 Alcalá de Henares, Madrid, Spain} 

\affil[*]{Instituto de Ciencias Matem\'aticas (CSIC-UAM-UC3M-UCM),
C\textbackslash Nicol\'as Cabrera, 13-15, Campus Cantoblanco, UAM
28049 Madrid, Spain} 

\affil[+]{Real Academia de Ciencias Exactas, Fisicas y Naturales, C/de Valverde
22, 28004 Madrid, Spain}

\affil[ç]{Department of Mechanical Engineering. University of Calgary. 2500 University Drive NW, Calgary, Alberta, Canada, T2N IN4}

\date{\today}

\begin{document}

\sloppy

\maketitle

Keywords:  Lie groupoid, uniformity, material groupoid, material evolution, remodeling.\\
\thanks{}
MSC 2000:  74A20, 53C12 , 22A22

\begin{abstract}

For any body-time manifold $\mathbb{R} \times \mathcal{B}$ there exists a groupoid, called \textit{material groupoid}, encoding all the material properties of the evolution material. A smooth distribution, the \textit{material distribution}, is constructed to deal with the case in which the material groupoid is not a Lie groupoid. This new tool provides a unified framework to deal with general non-uniform evolution materials.
\end{abstract}

\maketitle

\tableofcontents

\section{Introduction}

In this paper, we will use the approach developed by Walter Noll [6], based on the notion of the so-called material diffeomorphism between pairs of points in the body, namely, a map between the respective tangent spaces that renders the constitutive responses identical. In Noll’s terminology, a body is said to be materially uniform if all of its points are mutually materially isomorphic. In a uniform body, a smooth field of material isomorphisms is nothing but a distant parallelism whose integrability is equivalente to the homogeneity of the body. The existence of material symmetries works like a gauge freedom for these parallelims, so that the geometric notion associated to the body is the so-called $G$-structure, where $G$ is a nodel for the group of material symmetries.. In this context, local homogeneity is equivalent to the integrability of the associated $G$-structure. However, this approach depends on the choice of a linear frame (some archetype) and it cannot be used for non-uniform bodies.

So, the natural extension of the theory was the consideration of a more general kind of algebraic/geometric structures, say Lie groupoids. Indeed, the point now is to consider all the material isomorphisms at the same time, which provides just the so-called material groupoid. This theory has been extensively developed in the book \cite{VMMDME} (see also \cite{CHARDIST,MGEOEPS,MD}) even for more general materials where the material groupoid is not a Lie groupoid.

One of the contributions of the paper is the construction of groupoids canonically associated with the evolution of a material \cite{EPSTEIN201572,MEPMDLSEG,EPSBOOK2}, to present a new framework. This has been done to study simple materials in previous papers, but we now consider this new scenario. This technique will permit us to define a type of global remodeling of non-uniform bodies (Definition \ref{definition17}) and a definition of differentiable global aging (Definition \ref{smoothaging342345}).

Recall that the infinitesimal approximation of a Lie groupoid is its Lie algebroid, just as the tangent space to a manifold at a point is the linear approximation of a neighborhood of the point. One of the constructions we have developed in several previous papers is that of the \textit{characteristic distribution} associated to a subgroupoid of a Lie groupoid, even when the subgroupoid is not differentiable. This construction is somehow a generalization of the Lie algebroid associated to a Lie groupoid \cite{CHARDIST,VMMDME}.

The characteristic distributions associated to material groupoids are constructed, which give rise to the respective foliations. Uniform aging is also presented for the first time in this paper. It expresses that, although the body ages, all material points age in the same way over time.

Instead to consider a simple body $\mathcal{B}$ as in simple materials, we consider the so-called body-time manifold as the fibre bundle $\mathcal{C} = \mathbb{R} \times \mathcal{B}$ over $\mathbb{R}$. The embbeddings are now fiber bundle embbeddings into the trivial fibre bundle $\mathbb{R}\times \mathbb{R}^{3}$ over $\mathbb{R}$; such an embbedding $\Phi$ is called a \textit{history}, since for a given pòint $X_{0}$ in $\mathcal{B}$, the family $\Phi_{t}\left( X_{0}\right)$ describes the history of the material point. The crucial point now is to consider the vertical subbundle associated to the body-time manifold $\mathcal{C}$, $\mathcal{V}$, and the associated frame groupoid $\Phi \left( \mathcal{V} \right) \rightrightarrows \mathcal{C}$. This groupoid will play a similar role of the 1-jets groupoid $\Pi^{1}\left( \mathcal{B}, \mathcal{B}\right)$ on $\mathcal{B}$ for elastic simple materials (see Part \ref{partelasticmat}). This permits to define the corresponding material groupoid $\Omega \left( \mathcal{C}\right)$ for a given constitutive law as a subgroupoid of $\Phi \left( \mathcal{V} \right)$, consisting just in those material isomorphisms, connecting not only material points but also the different instants of time. So, we may consider a temporal counterpart of uniformity called remodeling. Indeed, a material particle $X \in \mathcal{B}$ is presenting a remodeling when it is connected with all the instants by a material isomorphism, i.e., all the points at $\mathbb{R} \times \{X\}$ are connected by material isomorphisms; if this happen for all the material points, then it is said that $\mathcal{C}$ presents a global remodeling. In other words, the material properties of the body do not change along the time. Particular cases are the phenomena of growth and resorption (remodeling with volume increase or volume decrease of the material body). This kind evolution may be found in biological tissues \cite{RODRIGUEZ1994455} or Wolff’s law of trabecular architecture of bones \cite{TURNER19921}).

We also consider the aging phenomenum. The definition is very simple; a material particle $X \in \mathcal{B}$ is presenting a aging when it is not presenting a remodeling, i.e., not all the instants are connected by a material isomorphism. In other words, the material response is not preserved along the time via material isomorphisms, and the constitutive properties are changing with the time. Our approach allow us to introduce the concept of smooth aging.

In a more technical way, these are some of the most relevant results contained in the paper:\\

\noindent{\textbf{Corollary \ref{4.4.second2324.uniform2445}:}\\
Let be a body-time manifold $\mathcal{C}$ with some (and hence all of them) state uniform. $\mathcal{C}$ is presenting a smooth uniform remodeling if, and only if, $\Omega \left( \mathcal{C}\right)$ is a transitive Lie subgroupoid of $\Phi \left( \mathcal{V} \right)$.}\\

\noindent{\textbf{Theorem \ref{14.1.second323.globalevolutionsd}:}\\
Let be a body-time manifold $\mathcal{C}$. The body-material foliation $\mathcal{F}$ (resp. uniform material foliation $\mathcal{G}$) divides $\mathcal{C}$ into maximal smooth uniform remodeling processes (resp. uniform remodeling processes).}\\

\noindent{\textbf{Theorem \ref{14.1.second323.globalevolutionsd.dimesions}:}\\
Let be a body-time manifold $\mathcal{C}$. $\mathcal{C}$ presents a smooth uniform remodeling process (resp. uniform remodeling) if, and only if, $dim  \left( A \Omega \left( \mathcal{C} \right)^{\sharp}_{\left(t,X\right)}\right) = 4 $ (resp. $dim  \left( A \Omega \left( \mathcal{C} \right)^{B}_{\left(t,X\right)}\right) = 4 $) for all instant $t$ and particle $X$, with $A \Omega \left( \mathcal{C}\right)^{\sharp}_{\left(t,X\right)}$ (resp. $A \Omega \left( \mathcal{C}\right)^{B}_{\left(t,X\right)}$) the fibre of $A \Omega \left( \mathcal{C}\right)^{\sharp}$ (resp. $A \Omega \left( \mathcal{C}\right)^{B}$) at $\left(t,X\right)$.\\
}

Roughly speaking, \textbf{Corollary \ref{4.4.second2324.uniform2445}} establishes that the uniform differentiable remodeling is equivalent to $\Omega \left( \mathcal{C}\right)$ being a transitive Lie subgrupoid of $\Phi \left( \mathcal{V} \right)$. This clarifies the difference between remodeling and uniform differentiable remodeling. \textbf{Theorem \ref{14.1.second323.globalevolutionsd}} proves that there are two maximal foliations of $\mathcal{C}$ separating the evolution of the material into uniform remodeling and smooth uniform remodelings, respectively. On the other hand, \textbf{Theorem \ref{14.1.second323.globalevolutionsd.dimesions}} shows that, using the material distribution, one can imagine the shape of the foliation associated with the differentiable uniform remodeling by calculating the dimensions of its leaves. In particular, if the dimension is 4 at any material point, the evolution has a uniform and differentiable remodeling. Thus, studying whether the evolution has a uniform and differentiable remodeling is reduced to the study of the linear equation (\ref{Eqmaterialgroupoid12timedependent}).

On the other hand, \textbf{Theorem \ref{14.1.second}} determines the material foliation by uniformly differentiable components of the body at each of the instants. As one can imagine, in \textbf{Proposition \ref{proprelacionimport234}} it is proved that if one freezes the evolution in the leaves given by \textbf{Theorem \ref{14.1.second323.globalevolutionsd}}, the leaves of \textbf{Theorem \ref{14.1.second}} are recovered. Finally, \textbf{Theorem  \ref{computationalproposition24124}} is the analogue of \textbf{Theorem \ref{14.1.second323.globalevolutionsd.dimesions}} for differentiable remodeling, thus giving a \textit{computational condition} (linear equation) for studying differentiable remodeling.

Due to the length of this paper, it is important to note that Parts I and II consist mainly of preliminaries to make the text as self-contained as possible. Indeed, Part I introduces the basic concepts of groupoids, following the references \cite{KMG,JNM}. It also includes a construction that is essential throughout the paper, the so-called \textit{characteristic distribution}, which was introduced for the first time in \cite{CHARDIST}. Part II is devoted to a quick introduction to the theory of simple materials and the concept of uniformity. A first use of the characteristic distribution is shown in this part following \cite{MD,MGEOEPS,GENHOM} (see also the book \cite{VMMDME}). Finally, section 4 in part III is devoted to the introduction of the concept of material evolution in a very abstract and setting. A reader who is familiar with these topics could skip these two parts and go directly to the three last parts of the paper where the new results are  described.

\part{Groupoids and distributions}

\section{Groupoids}
We will start with a very brief introduction on \textit{(Lie) groupoids} which turns out to be crucial to understand the results shown in this paper. Groupoids are a natural generalization of groups which were presented in 1926 by Brandt \cite{HBU}. Furthermore, adding differential structures we obtain the notion of \textit{Lie groupoid} which was firtly introduced by Ehresmann in a series of articles \cite{CELC,CELP,CES,CEC} and redefined in \cite{JPRA} by Pradines.\\
We will follow the most relevant reference on groupoids \cite{KMG}. In \cite{EPSBOOK} and \cite{WEINSGROUP} we can find a more intuitive view of this topic. The book \cite{JNM} (in Spanish) is also recommendable as a rigurous introduction to groupoids.
\begin{definition}
\rm
Let $ M$ be a set. A \textit{groupoid} over $M$ is given by a set $\Gamma$ provided with the maps $\alpha,\beta : \Gamma \rightarrow M$ (\textit{source map} and \textit{target map} respectively), $\epsilon: M \rightarrow \Gamma$ (\textit{section of identities}), $i: \Gamma \rightarrow \Gamma$ (\textit{inversion map}) and $\cdot : \Gamma_{\left(2\right)} \rightarrow \Gamma$ (\textit{composition law}) where for each $k \in \mathbb{N}$, $\Gamma_{\left(k\right)}$ is given by $k$ points $ \left(g_{1}, \hdots , g_{k}\right) \in \Gamma \times \stackrel{k)}{\ldots} \times \Gamma $ such that $\alpha\left(g_{i}\right)=\beta\left(g_{i+1}\right)$ for $i=1, \hdots , k -1$. It satisfy the following properties:\\
\begin{itemize}
\item[(1)] $\alpha$ and $\beta$ are surjective and for each $\left(g,h\right) \in \Gamma_{\left(2\right)}$,
$$ \alpha\left(g \cdot h \right)= \alpha\left(h\right), \ \ \ \beta\left(g \cdot h \right) = \beta\left(g\right).$$
\item[(2)] Associative law with the composition law, i.e.,
$$ g \cdot \left(h \cdot k\right) = \left(g \cdot h \right) \cdot k, \ \forall \left(g,h,k\right) \in \Gamma_{\left(3\right)}.$$
\item[(3)] For all $ g \in \Gamma$,
$$ g \cdot \epsilon \left( \alpha\left(g\right)\right) = g = \epsilon \left(\beta \left(g\right)\right)\cdot g .$$
Therefore,
$$ \alpha \circ  \epsilon \circ \alpha = \alpha , \ \ \ \beta \circ \epsilon \circ \beta = \beta.$$
Since $\alpha$ and $\beta$ are surjetive, we have that
$$ \alpha \circ \epsilon = Id_{M}, \ \ \ \beta \circ \epsilon = Id_{M},$$
where the map $Id_{M}$ is the identity map at $M$.
\item[(4)] For each $g \in \Gamma$,
$$i\left(g\right) \cdot g = \epsilon \left(\alpha\left(g\right)\right) , \ \ \ g \cdot i\left(g\right) = \epsilon \left(\beta\left(g\right)\right).$$
Then,
$$ \alpha \circ i = \beta , \ \ \ \beta \circ i = \alpha.$$
\end{itemize}
These maps are called \textit{structure maps}. The usual notation for a grupoid is $ \Gamma \rightrightarrows M$.
\end{definition}
\noindent{$M$ is denoted by $\Gamma_{\left(0\right)}$ and it is identified with the set $\epsilon \left(M\right)$ of identities of $\Gamma$. $\Gamma$ is also denoted by $\Gamma_{\left(1\right)}$. The elements of $M$ are called \textit{objects} and the elements of $\Gamma$ are called \textit{morpishms}. Furthermore, for each $g \in \Gamma$ the element $i \left( g \right)$ is called \textit{inverse of $g$} and it is denoted by $g^{-1}$.}\\

\begin{definition}
\rm
Let $\Gamma \rightrightarrows M$ be a groupoid. The map $\left(\alpha , \beta\right) : \Gamma \rightarrow M \times M$ is called the \textit{anchor map}. The space of sections of the anchor map is denoted by $\Gamma_{\left(\alpha, \beta\right)} \left(\Gamma\right)$.
\end{definition}  
Roughly speaking, a groupoid may be thought as a set of ``\textit{arrows}'' ($\Gamma$) joining points ($M$) next to a composition law with similar rules to the composition of maps.
\begin{definition}
\rm

If $\Gamma_{1} \rightrightarrows M_{1}$ and $\Gamma_{2} \rightrightarrows M_{2}$ are two groupoids then a \textit{morphism of groupoids} from $\Gamma_{1} \rightrightarrows M_{1}$ to $\Gamma_{2} \rightrightarrows M_{2}$ consists of two maps $\Phi : \Gamma_{1} \rightarrow \Gamma_{2}$ and $\phi : M_{1} \rightarrow M_{2}$ satisfying the commutative relations of the following diagrams, 

\begin{center}
 \begin{tikzcd}[column sep=huge,row sep=huge]
\Gamma_{1}\arrow[r, "\Phi"] &\Gamma_{2} \arrow[d, "\alpha_{2}"] &   &    \Gamma_{1}\arrow[r, "\Phi"] &\Gamma_{2} \arrow[d, "\beta_{2}"]\\
 M_{1} \arrow[u,"\alpha_{1}"] \arrow[r,"\phi"] & M_{2} &  &  M_{1} \arrow[u,"\beta_{1}"] \arrow[r,"\phi"] & M_{2}
 \end{tikzcd}
\end{center}

\begin{center}
 \begin{tikzcd}[column sep=huge,row sep=huge]
\left(\Gamma_{1}\right)_{(2)} \arrow[d,"\Phi_{(2)}"] \arrow[r, "\cdot"] &\Gamma_{1} \arrow[d, "\Phi"] \\
\left(\Gamma_{2}\right)_{(2)} \arrow[r,"\cdot"] & \Gamma_{2} 
 \end{tikzcd}
\end{center}
where, 
$$\Phi_{(2)} \left( g_{1} , h_{1} \right) = \left( \Phi \left(g_{1} \right) ,  \Phi \left(h_{1} \right) \right)$$
for all $\left( g_{1} , h_{1} \right) \in \left(\Gamma_{1}\right)_{(2)}$. Equivalently, for any $g_{1} \in \Gamma_{1}$
\begin{equation}\label{4}
\alpha_{2} \left( \Phi \left(g_{1}\right)\right) = \phi \left(\alpha_{1} \left(g_{1} \right)\right), \ \ \ \ \ \ \ \beta_{2} \left( \Phi \left(g_{1}\right)\right) = \phi \left(\beta_{1} \left(g_{1} \right)\right),
\end{equation}
where $\alpha_{i}$ and $\beta_{i}$ are the source and the target maps of $\Gamma_{i} \rightrightarrows M_{i}$ respectively, for $i=1,2$, and preserves the composition, i.e.,
$$\Phi \left( g_{1} \cdot h_{1} \right) = \Phi \left(g_{1}\right) \cdot \Phi \left(h_{1}\right), \ \forall \left(g_{1} , h_{1} \right) \in \Gamma_{\left(2\right)}.$$
We will denote this morphism as $\Phi$.
\end{definition}

\noindent{An immediate consequence is that $\Phi$ preserves the identities, i.e.,}
$$\Phi \circ  \epsilon_{1} = \epsilon_{2} \circ \phi,$$
where $\epsilon_{i}$ is the section of identities of $\Gamma_{i} \rightrightarrows M_{i}$ for $i=1,2$.\\
Using the notion of morphism of groupoids, we may define a \textit{subgroupoid} of a groupoid $\Gamma \rightrightarrows M$ as a groupoid $\Gamma' \rightrightarrows M'$ such that $M' \subseteq M$, $\Gamma' \subseteq \Gamma$ and the inclusion map is a morphism of groupoids. More explicitly, $\Gamma' \left( \subseteq \Gamma\right) \rightrightarrows M' \left(\subseteq M\right)$ is a subgroupoid of $\Gamma \rightrightarrows M$ if it is groupoid with the same structure maps than $\Gamma$.
\begin{example}\label{5}
\rm
A group is a groupoid over a point. Indeed, let $G$ be a group and $e$ the identity element of $G$. Then, $G \rightrightarrows \{e\}$ is a groupoid, where the operation law of the groupoid, $\cdot$, is the operation in $G$.
\end{example}
\begin{example}\label{7}

\rm
For any set $A$, we shall consider the product space $ A \times A$. Then, the maps,
\begin{itemize}
\item[] $\alpha \left(a,b\right) = a , \ \ \beta \left(a,b\right)=b, \ \forall \left(a,b\right) \in  A \times A$
\item[] $\left(c,b\right)\cdot\left(a,c\right)= \left(a,b\right), \ \forall \left( c,b\right),\left(a,c\right) \in  A \times A$
\item[] $ \epsilon \left(a\right) = \left(a,a\right), \ \forall a \in A$
\item[] $ \left(a,b\right)^{-1}=\left(b,a\right), \ \forall \left(a,b\right) \in  A \times A$
\end{itemize}
endow $A \times A$ with a structure of groupoid over $A$, called the \textit{pair groupoid}.
\end{example}

\noindent{Observe that, if $\Gamma \rightrightarrows M$ is an arbitrary groupoid over $M$, then the anchor map $\left(\alpha , \beta\right) : \Gamma \rightarrow M \times M$ is a morphism from $\Gamma \rightrightarrows M$ to the pair groupoid of $M$.\\
Next, let us give the key example of groupoid in this paper.}
\begin{example}\label{8}
\rm
Let us consider a vector bundle $A$ on a manifold $M$. For each $z\in M$, denote by $A_{z}$ the fibre of $A$ over $z$. Then, $\Phi \left(A\right)$ is the set of linear isomorphisms $L_{x,y}: A_{x} \rightarrow A_{y}$, for $x,y \in M$ and it may be endowed with the structure of groupoid with the following structure maps,
\begin{itemize}
\item[(i)] $\alpha\left(L_{x,y}\right) = x$
\item[(ii)] $\beta\left(L_{x,y}\right) = y$
\item[(iii)] $L_{y,z} \cdot G_{x,y} = L_{y,z} \circ G_{x,y}, \ L_{y,z}: A_{y} \rightarrow A_{z}, \ G_{x,y}: A_{x} \rightarrow A_{y}$
\end{itemize}
This groupoid is called the \textit{frame groupoid on $A$}. A particular relevant case arises when we choose $A$ equal to the tangent bundle $TM$ of $M$. In this latter case, the groupoid will be called \textit{1-jets groupoid on} $M$ and denoted by $\Pi^{1} \left(M,M\right)$. Notice that any isomorphism $L_{x,y}: T_{x}M \rightarrow T_{y}M$ may be written as a $1-$jet $j_{x,y}^{1} \psi$ of a local diffeomorphism $\psi$ from $M$ to $M$ such that $\psi \left( x \right) = y$. Remember that the $1-$jet $j_{x,y}^{1} \psi$ is given by that induced tangent map $T_{x}\psi: T_{x}M \rightarrow T_{y}M$. To study in detail the formalism of $1-$jets see \cite{SAUND}.
\end{example}
\begin{definition}\label{58}
\rm
Let $\Gamma \rightrightarrows M$ be a groupoid with $\alpha$ and $\beta$ the source map and target map, respectively. For each $x \in M$, the set
$$\Gamma^{x}_{x}= \beta^{-1}\left(x\right) \cap \alpha^{-1}\left(x\right),$$
is called the \textit{isotropy group of} $\Gamma$ at $x$. The set
$$\mathcal{O}\left(x\right) = \beta\left(\alpha^{-1}\left(x\right)\right) = \alpha\left(\beta^{-1}\left( x\right)\right),$$
is called the \textit{orbit} of $x$, or \textit{the orbit} of $\Gamma$ through $x$.
\end{definition}

\noindent{Notice that the orbit of a point $x$ consists of the points which are ``\textit{connected}" with $x$ by a morphism in the groupoid while the isotropy group is given by the morphisms connecting $x$ with $x$. Of course, the composition law is globally defined inside the isotropy groups. Thus, the isotropy groups inherits a \textit{bona fide} group structure.}
\begin{definition}
\rm
If $\mathcal{O}\left(x\right) =\{x\}$, or equivalently $\beta^{-1}\left(x\right) = \alpha^{-1}\left( x \right)=\Gamma_{x}^{x}$, then $x$ is called a \textit{fixed point}.  \textit{The orbit space of} $\Gamma$ is the space of orbits of $\Gamma $ on $M$. If $\mathcal{O}\left(x\right) = M$ for all $x \in M$ (or equivalently $\left(\alpha,\beta\right) : \Gamma  \rightarrow M \times M$ is a surjective map) the groupoid $\Gamma \rightrightarrows M$ is called \textit{transitive}. If every $x \in M$ is fixed point, then the groupoid $\Gamma \rightrightarrows M$ is called \textit{totally intransitive}. Furthermore, a subset $N$ of $M$ is called \textit{invariant} \index{Invariant} if it is a union of some orbits.\\
Finally, the sets,
$$ \alpha^{-1} \left(x \right) = \Gamma_{x}, \ \ \ \ \ \beta^{-1} \left(x \right) = \Gamma^{x},$$
are called $\alpha-$\textit{fibre at} $x$ and $\beta-$\textit{fibre at $x$}, respectively.
\end{definition}

\begin{definition}\label{9}
\rm
Let $\Gamma \rightrightarrows M$ be a groupoid. We may define the \textit{left translation on $g \in \Gamma$} as the map $L_{g} : \Gamma^{\alpha\left(g\right)} \rightarrow \Gamma^{\beta\left(g\right)}$, given by
$$ h \mapsto  g \cdot h .$$
We may define the right translation on $g$, $R_{g} : \Gamma_{\beta\left(g\right)} \rightarrow \Gamma_{ \alpha \left(g\right)}$ analogously. 
\end{definition}
\noindent{Note that, the identity map on $\Gamma^{x}$ may be written as the following translation map,}
\begin{equation}\label{10} 
Id_{\Gamma^{x}} = L_{\epsilon \left(x\right)}.
\end{equation}
For any $ g \in \Gamma $, the left (resp. right) translation on $g$, $L_{g}$ (resp. $R_{g}$), is a bijective map with inverse $L_{g^{-1}}$ (resp. $R_{g^{-1}}$).\\

Different kind of structures may be imposed on a groupoid. In particular, we are interested in the so-called \textit{Lie groupoids} which are endowed with a differentiable structure.
\begin{definition}
\rm
A \textit{Lie groupoid} is a groupoid $\Gamma \rightrightarrows M$ such that $\Gamma$ is a smooth manifold, $M$ is a smooth manifold and the structure maps are smooth. Furthermore, the source and the target map are submersions.\\
A \textit{Lie groupoid morphism} is a groupoid morphism which is differentiable. An \textit{embedding of Lie groupoids} is a Lie groupoid morphism $\left( \Phi , \phi \right)$ such that $\Phi$ and $\phi$ are embeddings. A \textit{Lie subgroupoid} of $\Gamma \rightrightarrows M$ is a Lie groupoid $\Gamma' \rightrightarrows M'$ such that $\Gamma' $ and $M'$ are submanifolds of $\Gamma$ and $M$ respectively, and the inclusion maps $i_{\Gamma'} : \Gamma' \hookrightarrow \Gamma$ $i_{M'} : M' \hookrightarrow M$ become a morphism of Lie groupoids. $\Gamma' \rightrightarrows M'$ is said to be a \textit{reduced Lie subgroupoid} if it is transitive and $M'=M$.
\end{definition}
It is not difficult to check that if there exists a reduced Lie subgroupoid of a groupoid $\Gamma \rightrightarrows M$, then $\Gamma \rightrightarrows M$ is transitive.\\\\
\noindent{Notice that the following statements are immediate:}
\begin{itemize}
    \item $\epsilon$ is an injective immersion.
    \item For each $g \in \Gamma$, the left translation $L_{g}$ (resp. right translation $R_{g}$) is a diffeomorphism, for all $g \in \Gamma$.
    \item For each $k \in \mathbb{N}$, $\Gamma_{\left(k\right)}$ is a smooth manifold, for all $k \in \mathbb{N}$.
    \item The $\beta-$fibres and the $\alpha-$fibres are closed submanifolds of $\Gamma$.
\end{itemize}
\begin{example}\label{11}
\rm
A Lie group is a Lie groupoid over a point. 
\end{example}
\begin{example}\label{45}
\rm
Let $M$ be a manifold. The pair groupoid $M \times M \rightrightarrows M$ is a Lie groupoid.
\end{example}

\begin{example}\label{15}
\rm
The frame groupoid $\Phi \left( A \right)$ on a vector bundle $A$ is a Lie groupoid (see example \ref{8}). Indeed, let $\left(x^{i}\right)$ and $\left(y^{j}\right)$ be local coordinates on open neighbourhood $U, V \subseteq M$ and $\{\alpha_{p}\}$ and $\{ \beta_{q}\}$ be local basis of sections of $A_{U}$ and $A_{V}$ respectively. The corresponding local coordinates $\left(x^{i} \circ \pi, \alpha^{p}\right)$ and $\left(y^{j} \circ \pi, \beta^{q}\right)$ on $A_{U}$ and $A_{V}$ are given by
\begin{itemize}
\item For all $a \in A_{U}$,
$$ a =  \alpha^{p}\left(a\right) \alpha_{p}\left(x^{i}\left(\pi \left(a\right)\right)\right).$$\\
\item For all $a \in A_{V}$,
$$ a =  \beta^{q}\left(a\right) \beta_{q}\left(y^{j}\left(\pi \left(a\right)\right)\right).$$
\end{itemize}
\noindent{Then, it may be constructed a local coordinate system on $\Phi \left(A\right)$
$$ \Phi \left(A_{U,V}\right) : \left(x^{i} , y^{j}_{i}, y^{j}_{i}\right),$$
where, $A_{U,V} = \alpha^{-1}\left(U\right) \cap \beta^{-1}\left(V\right)$ and for each $L_{x,y} \in \alpha^{-1}\left(x\right) \cap \beta^{-1}\left(y\right) \subseteq \alpha^{-1}\left(U\right) \cap \beta^{-1}\left(V\right)$,}
\begin{itemize}\label{16}
\item $x^{i} \left(L_{x,y}\right) = x^{i} \left(x\right)$.
\item $y^{j} \left(L_{x,y}\right) = y^{j} \left( y\right)$.
\item $y^{j}_{i}\left( L_{x,y}\right)  = A_{L_{x,y}}$, where $A_{L_{x,y}}$ is the associated matrix to the induced map of $L_{x,y}$ by the local coordinates $\left(x^{i} \circ \pi, \alpha^{p}\right)$ and $\left(y^{j} \circ \pi, \beta^{q}\right)$.
\end{itemize}
In the particular case of the $1-$jets groupoid on $M$, $\Pi^{1}\left(M,M\right)$, the local coordinates will be denoted as follows
\begin{equation}\label{17}
\Pi^{1}\left(U,V\right) : \left(x^{i} , y^{j}, y^{j}_{i}\right),
\end{equation}
where, for each $ j^{1}_{x,y} \psi \in \Pi^{1}\left(U,V\right)$
\begin{itemize}
\item $x^{i} \left(j^{1}_{x,y} \psi\right) = x^{i} \left(x\right)$.
\item $y^{j} \left(j^{1}_{x,y}\psi \right) = y^{j} \left( y\right)$.
\item $y^{j}_{i}\left( j^{1}_{x,y}\psi\right)  = \dfrac{\partial \left(y^{j}\circ \psi\right)}{\partial x^{i}_{| x} }$.
\end{itemize}

\end{example}
The most important example of groupoid in this paper will be the \textit{material groupoid} which will be constructed as a subgroupoid of special cases of the frame groupoid. In particular, we will deal with the $1-$jets groupoid $\Pi^{1}\left(\mathcal{B}, \mathcal{B}\right)$ on a manifold $\mathcal{B}$ (body) and a frame groupoid $\Phi \left( \mathcal{V} \right)$ of the vertical bundle $\mathcal{V}$ of a given vector bundle $\mathcal{C}$ (material evolution).

\section{Characterististic distribution}

\noindent{From now on, we will consider the following elements:  \textit{$ \Gamma \rightrightarrows M$ will be a Lie groupoid and $\overline{\Gamma}$ will be a subgroupoid of $\Gamma$ (not necessarily a Lie subgroupoid of $\Gamma$) over the same manifold $M$}.\\\\
\noindent{We will also denote by $\overline{\alpha}$, $\overline{\beta}$, $\overline{\epsilon}$ and $\overline{i}$ the restrictions of the structure maps $\alpha$, $\beta$, $\epsilon$ and $i$ of $\Gamma$ to $\overline{\Gamma}$ (see the diagram below)}}\\

\begin{center}
 \begin{tikzcd}[column sep=huge,row sep=huge]
\overline{\Gamma}\arrow[r, hook, "j"] \arrow[rd, shift right=0.5ex] \arrow[rd, shift left=0.5ex]&\Gamma \arrow[d, shift right=0.5ex] \arrow[d, shift left=0.5ex] \\
& M 
 \end{tikzcd}
\end{center}

\noindent{where $j$ is the inclusion map. Thus, we will construct the so-called \textit{characteristic distribution} $A \overline{\Gamma}^{T}$ (\cite{VMMDME,CHARDIST}).\\
A (local) vector field $\Theta \in \frak X_{loc} \left( \Gamma \right)$ on $\Gamma$ will be called \textit{admissible} for the couple $\left( \Gamma , \overline{\Gamma} \right)$ if it satisfies that,
\begin{itemize}
\item[(i)] $\Theta$ is tangent to the $\beta-$fibres, 
$$ \Theta \left( g \right) \in T_{g} \beta^{-1} \left( \beta \left( g \right) \right),$$
for all $g$ in the domain of $\Theta$.
\item[(ii)] $\Theta$ is invariant by left translations,
$$ \Theta \left( g \right) = T_{\epsilon \left( \alpha \left( g \right) \right) } L_{g} \left( \Theta \left( \epsilon \left( \alpha \left( g \right) \right) \right) \right),$$
for all $g $ in the domain of $\Theta$.
\item[(iii)] The (local) flow $\varphi^{\Theta}_{t}$ of $\Theta$ satisfies
$$\varphi^{\Theta}_{t} \left( \epsilon \left( x \right)\right) \subseteq \overline{\Gamma}, $$
for all $x \in M$.
\end{itemize}
So, roughly speaking, an admissible vector field is a left invariant vector field on $\Gamma$ whose flow at the identities is totally contained in $\overline{\Gamma}$. We denotes the family of admissible vector fields for the couple $\left( \Gamma , \overline{\Gamma} \right)$ by $\mathcal{C}_{\left( \Gamma , \overline{\Gamma} \right)}$ or simply $\mathcal{C}$ if there is no danger of confusion.\\
Then, for each $ g \in \Gamma$, $A \overline{\Gamma}^{T}_{g}$ is the vector subspace of $T_{g} \Gamma$ linearly generated by the evaluation of the admissible vector fields at $g$. Observe that, for all $g \in \Gamma$, the zero vector $0_{g} \in T_{g} \Gamma$ is contained in the fibre of the distribution at $g$, namely $A \overline{\Gamma}^{T}_{g}$ (we remit to \cite{VMMDME,CHARDIST,GENHOM} for non trivial examples). Furthermore, it satisfies that a vector field $\Theta$ of $\Gamma$ holds conditions (i) and (ii) if, and only if, its local flow $\varphi^{\Theta}_{t}$ is left-invariant or, equivalently,
$$ L_{g} \circ \varphi^{\Theta}_{t} = \varphi^{\Theta}_{t} \circ L_{g}, \ \forall g,t.$$
Therefore, condition (iii) is equivalent to the following,
\begin{itemize}
\item[(iii)'] The (local) flow $\varphi^{\Theta}_{t}$ of $\Theta$ at $\overline{g}$ is totally contained in $\overline{\Gamma}$, for all $\overline{g} \in \overline{\Gamma}$.
\end{itemize}
So, the admissible vector fields are the left-invariant vector fields on $\Gamma$ whose integral curves are confined inside or outside $\overline{\Gamma}$.\\
The distribution $A\overline{\Gamma}^{T}$  generated by the vector spaces $A\overline{\Gamma}^{T}_{g}$ is called \textit{characteristic distribution of $\overline{\Gamma}$}. As an immediate result, we have that this distribution is \textit{differentiable}.\\
\begin{remark}
\rm
This construction of the characteristic distribution associated to a subgroupoid $\overline{\Gamma}$ of a Lie groupoid $\Gamma$ may be thought as a generalization of the construction of the associated Lie algebroid to a given Lie groupoid (see \cite{KMG}).\\
\end{remark}

The algebraic structure associated to a groupoid allows us to define more objects. Particularly, one of them is a smooth distribution over the base $M$ denoted by $A \overline{\Gamma}^{\sharp}$. The other is a ``\textit{differentiable}" correspondence $A\overline{\Gamma}$ which associates to any point $x$ of $M$ a vector subspace of $T_{\epsilon \left( x \right) } \Gamma$. Both constructions are characterized by the following diagram

\begin{large}
\begin{center}
 \begin{tikzcd}[column sep=huge,row sep=huge]
\Gamma\arrow[r, "A \overline{\Gamma}^{T}"] &\mathcal{P} \left( T \Gamma \right) \arrow[d, "T\alpha"] \\
 M \arrow[u,"\epsilon"] \arrow[r,"A \overline{\Gamma}^{\sharp}"] \arrow[ru,dashrightarrow, "A \overline{\Gamma}"]&\mathcal{P} \left( T M \right)
 \end{tikzcd}
\end{center}
\end{large}

%

\vspace{15pt}
\noindent{where $\mathcal{P} \left( E \right)$ defines the power set of $E$. Therefore, for any $x \in M$, the fibres are characterized by,}
\begin{eqnarray*}
A \overline{\Gamma}_{x} &=&  A \overline{\Gamma}^{T}_{\epsilon \left( x \right)}\\
A \overline{\Gamma}^{\sharp}_{x}  &=& T_{\epsilon \left( x \right) } \alpha \left( A \overline{\Gamma}_{x} \right)
\end{eqnarray*}
\noindent{The distribution $A \overline{\Gamma}^{\sharp}$ is called \textit{base-characteristic distribution of $\overline{\Gamma}$}.}\\
Notice that, taking into account that $A \overline{\Gamma}^{T}$ is locally generated by left-invariant vector field, we have that for each $g \in \Gamma$,
$$ A \overline{\Gamma}^{T}_{g} = T_{\epsilon \left( \alpha \left( g \right) \right)} L_{g} \left( A \overline{\Gamma}^{T}_{\epsilon \left( \alpha \left( g \right)\right)} \right),$$
i.e., the characteristic distribution is \textit{left-invariant}.}\\
\begin{theorem}[\cite{VMMDME,CHARDIST}]\label{10.24}
Let $\Gamma \rightrightarrows M$ be a Lie groupoid and $\overline{\Gamma}$ be a subgroupoid of $\Gamma$ (not necessarily a Lie groupoid) over $M$. Then, the characteristic distribution $A \overline{\Gamma}^{T}$ is integrable and its associated foliation $\overline{\mathcal{F}}$ of $\Gamma$ satisfies that $\overline{\Gamma}$ is a union of leaves of $\overline{\mathcal{F}}$.
\end{theorem}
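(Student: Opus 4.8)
The plan is to exploit the left-invariance of $A\overline{\Gamma}^{T}$ together with the fact that the admissible vector fields are, by construction, tangent to the $\beta$-fibres and have integral curves confined inside or outside $\overline{\Gamma}$ (condition (iii)'). Integrability will follow from a Hermann-type theorem: a smooth distribution generated by a family of vector fields whose flows preserve the distribution is integrable. Concretely, I would first check that $A\overline{\Gamma}^{T}$ is invariant under the flows $\varphi^{\Theta}_{t}$ of every admissible $\Theta$. Because each admissible vector field is left-invariant, its flow commutes with all left translations $L_{g}$; since $A\overline{\Gamma}^{T}$ is itself left-invariant (as noted just before the statement, $A\overline{\Gamma}^{T}_{g}=T_{\epsilon(\alpha(g))}L_{g}(A\overline{\Gamma}^{T}_{\epsilon(\alpha(g))})$), pushing the distribution forward by $\varphi^{\Theta}_{t}$ amounts to pushing forward the fibres over the identity section, and there the family $\mathcal{C}$ of admissible fields is closed in the appropriate sense. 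I would verify that $(\varphi^{\Theta}_{t})_{*}$ sends an admissible vector field to an admissible one — this is the key closure property — so that $A\overline{\Gamma}^{T}$ is preserved by all such flows; Hermann's theorem (or the Stefan–Sussmann theorem) then gives integrability and a generalized foliation $\overline{\mathcal{F}}$.

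Next I would identify the leaves. Fix $g\in\Gamma$; the leaf $\overline{\mathcal{F}}_{g}$ through $g$ is obtained by flowing $g$ along finite concatenations of (local) admissible flows, $\varphi^{\Theta_{1}}_{t_{1}}\circ\cdots\circ\varphi^{\Theta_{k}}_{t_{k}}(g)$. The point is that each $\varphi^{\Theta}_{t}$ stays tangent to the $\beta$-fibre through $g$ (condition (i)), so the whole leaf lies in a single $\beta$-fibre $\beta^{-1}(\beta(g))$; moreover left-invariance gives $\varphi^{\Theta}_{t}\circ L_{g}=L_{g}\circ\varphi^{\Theta}_{t}$, so the leaf through $g$ is $L_{g}$ applied to the leaf through $\epsilon(\alpha(g))$. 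Hence it suffices to understand the leaves through identities.

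Now I would show $\overline{\Gamma}$ is a union of leaves. One inclusion: for $\overline{g}\in\overline{\Gamma}$, condition (iii)' says the flow of every admissible $\Theta$ through $\overline{g}$ stays in $\overline{\Gamma}$; applying this to finite concatenations shows $\overline{\mathcal{F}}_{\overline{g}}\subseteq\overline{\Gamma}$, so every leaf meeting $\overline{\Gamma}$ is contained in $\overline{\Gamma}$. Since leaves partition $\Gamma$, this means $\overline{\Gamma}$ is exactly the union of those leaves it meets, which is the claim. The nontrivial direction of the whole argument — the main obstacle — is the closure property that the push-forward of an admissible vector field by an admissible flow is again admissible (equivalently, that $A\overline{\Gamma}^{T}$ is genuinely flow-invariant): conditions (i) and (ii) are preserved essentially formally by left-invariance and tangency to $\beta$-fibres, but checking that condition (iii) survives — that the flow of the transported field still confines identities to $\overline{\Gamma}$ — requires using the groupoid multiplication to rewrite $(\varphi^{\Theta}_{s})_{*}\Psi$-flows in terms of products of $\overline{\Gamma}$-elements, and this is where one uses that $\overline{\Gamma}$ is a \emph{subgroupoid} (closed under composition and inversion), not merely a subset. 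Once that is in place, the Hermann/Stefan–Sussmann machinery and the leaf description above finish the proof.
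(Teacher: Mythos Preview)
Your proposal is correct and follows essentially the same route the paper indicates: the paper does not give a self-contained proof here but states that the result ``is a consequence of the celebrated Stefan--Sussmann's theorem,'' deferring the details to \cite{VMMDME,CHARDIST}. Your outline---verifying flow-invariance of $A\overline{\Gamma}^{T}$ via the closure of the family of admissible vector fields under push-forward by admissible flows, invoking Stefan--Sussmann for integrability, and then using condition (iii)$'$ to conclude that each leaf meeting $\overline{\Gamma}$ is contained in $\overline{\Gamma}$---is exactly the argument carried out in those references, and you have correctly isolated the one genuinely nontrivial step (preservation of condition (iii) under push-forward, which hinges on $\overline{\Gamma}$ being a subgroupoid).
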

This result is a consequence of the celebrated Stefan-Sussman's theorem \cite{PS,HJS} which deals with the integrability of singular distributions.\\
So, the distribution $A \overline{\Gamma}^{T}$ is the tangent distribution of a smooth (possibly) singular foliation $\overline{\mathcal{F}}$. Each leaf at a point $g \in \Gamma$ is denoted by $\overline{\mathcal{F}} \left( g \right)$. Furthermore, the family of the leaves of $\overline{\mathcal{F}}$ at points of $\overline{\Gamma}$ is called the \textit{characteristic foliation of $\overline{\Gamma}$}. Note that the leaves of the characteristic foliation covers $\overline{\Gamma}$ but it is not exactly a foliation of $\overline{\Gamma}$ ($\overline{\Gamma}$ is not necessarily a manifold). The foliation $\overline{\mathcal{F}}$ satisfies that
\begin{itemize}
\item[(i)] For any $g \in \Gamma$,
$$\overline{\mathcal{F}} \left( g \right) \subseteq \Gamma^{\beta \left( g \right)}.$$
Indeed, if $g \in \overline{\Gamma}$, then
$$\overline{\mathcal{F}} \left( g \right) \subseteq \overline{\Gamma}^{\beta \left( g \right)}.$$
\item[(ii)] For any $g ,h \in \Gamma$ such that $\alpha \left( g \right) = \beta \left( h \right)$, we have
$$\overline{\mathcal{F}} \left( g \cdot h\right) = g \cdot \overline{\mathcal{F}} \left(  h\right).$$
\end{itemize}
In this way, without any assumption of differentiability over $\overline{\Gamma}$, we have that $\overline{\Gamma}$ is union of leaves of a foliation of $\Gamma$. This provides some kind of ``\textit{differentiable}" structure over $\overline{\Gamma}$. The following result provides us an intuition about the maximality condition of the characteristic foliation.\\
\begin{corollary}\label{10.33}
Let $\overline{\mathcal{H}}$ be a foliation of $\Gamma$ such that $\overline{\Gamma}$ is a union of leaves of $\overline{\mathcal{H}}$ and
$$\overline{\mathcal{H}} \left(  g \right)  \subset \Gamma^{\beta \left( g \right)}.$$
Then, the characteristic foliation $\overline{\mathcal{F}}$ is coarser that $\overline{\mathcal{H}}$, i.e.,
\begin{equation}\label{10.100}
 \overline{\mathcal{H}} \left( g \right) \subseteq \overline{\mathcal{F}} \left( g \right)  , \ \forall g \in \Gamma.
\end{equation}
\begin{proof}
The result follows from the facts of that $\overline{\mathcal{H}}$ is generated by left-invariant vector fields and any of these left-invariant vector field $\Theta \in T \overline{\mathcal{H}}$ is obviously tangent to the characteristic distribution.
\hfill
\end{proof}
\end{corollary}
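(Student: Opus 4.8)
The plan is to convert the leafwise inclusion into a pointwise inclusion of distributions and then read off the conclusion from \Cref{10.24}. By that theorem $\overline{\mathcal{F}}$ is precisely the foliation integrating the characteristic distribution $A\overline{\Gamma}^{T}$, so it suffices to prove
\[
T_{g}\overline{\mathcal{H}} \;\subseteq\; A\overline{\Gamma}^{T}_{g}, \qquad g \in \Gamma .
\]
Once this is established, the leaf $\overline{\mathcal{H}}(g)$ is a connected immersed submanifold everywhere tangent to the integrable distribution $A\overline{\Gamma}^{T}$, hence contained in a single leaf of $\overline{\mathcal{F}}$; since $g\in\overline{\mathcal{H}}(g)\cap\overline{\mathcal{F}}(g)$, that leaf must be $\overline{\mathcal{F}}(g)$, which is exactly the asserted inclusion $\overline{\mathcal{H}}(g)\subseteq\overline{\mathcal{F}}(g)$.

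The heart of the argument is then to exhibit, locally, a frame of $T\overline{\mathcal{H}}$ made of vector fields that are \emph{admissible} for the couple $(\Gamma,\overline{\Gamma})$. Since $\overline{\mathcal{H}}$ is generated by left-invariant vector fields, around any point of $\Gamma$ I would pick left-invariant vector fields $\Theta_{1},\dots,\Theta_{k}$ on $\Gamma$ forming a local frame of $T\overline{\mathcal{H}}$ (extend a frame of $T\overline{\mathcal{H}}$ along the unit section by left translations, using that $T\overline{\mathcal{H}}$ is left-invariant). Then I would verify the three defining conditions of admissibility for each $\Theta_{i}$: condition (ii), left-invariance, holds by construction; condition (i), tangency to the $\beta$-fibres, holds because every leaf of $\overline{\mathcal{H}}$ lies in a $\beta$-fibre and the $\beta$-fibres are submanifolds of $\Gamma$, so $T_{g}\overline{\mathcal{H}}=T_{g}\overline{\mathcal{H}}(g)\subseteq T_{g}\beta^{-1}(\beta(g))$; and condition (iii), that the flow takes identities into $\overline{\Gamma}$, holds because an integral curve of a vector field tangent to $\overline{\mathcal{H}}$ stays inside a single leaf of $\overline{\mathcal{H}}$, the identity $\epsilon(x)$ belongs to $\overline{\Gamma}$ (a subgroupoid over the same base contains all the identities), and $\overline{\Gamma}$ is a union of leaves of $\overline{\mathcal{H}}$, so the flow line through $\epsilon(x)$ remains in $\overline{\mathcal{H}}(\epsilon(x))\subseteq\overline{\Gamma}$. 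Thus each $\Theta_{i}$ is admissible, whence $\Theta_{i}(g)\in A\overline{\Gamma}^{T}_{g}$, and since the $\Theta_{i}(g)$ span $T_{g}\overline{\mathcal{H}}$ the desired inclusion of distributions follows at every $g$.

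The main obstacle I anticipate is bookkeeping rather than conceptual, and it sits in two places. First, one must be precise about why $T\overline{\mathcal{H}}$ admits a left-invariant local frame: this is exactly what ``$\overline{\mathcal{H}}$ is generated by left-invariant vector fields'' means, and it is the only hypothesis beyond $\overline{\mathcal{H}}(g)\subset\Gamma^{\beta(g)}$ and ``$\overline{\Gamma}$ a union of leaves'' that is actually used, so it should be recorded carefully. Second, on the foliation side, passing from the pointwise inclusion $T\overline{\mathcal{H}}\subseteq A\overline{\Gamma}^{T}$ to the leafwise inclusion $\overline{\mathcal{H}}(g)\subseteq\overline{\mathcal{F}}(g)$ must accommodate the possibly singular character of $\overline{\mathcal{F}}$; the clean way is to use the Stefan-Sussman description (already invoked for \Cref{10.24}) of the leaves of $\overline{\mathcal{F}}$ as orbits of the admissible vector fields, together with the fact that each leaf of $\overline{\mathcal{H}}$ is reached from $g$ by concatenating flows of vector fields tangent to $\overline{\mathcal{H}}$, all of which are tangent to $A\overline{\Gamma}^{T}$ by the previous step, so that the inclusion of tangent distributions forces the inclusion of leaves.
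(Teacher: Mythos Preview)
Your approach is correct and essentially the same as the paper's: exhibit left-invariant generators of $T\overline{\mathcal{H}}$ as admissible vector fields for the couple $(\Gamma,\overline{\Gamma})$, hence tangent to $A\overline{\Gamma}^{T}$, and then pass from the inclusion of tangent distributions to the inclusion of leaves. Your remark that left-invariance of $\overline{\mathcal{H}}$ is an extra hypothesis actually being used (and should be recorded) is well taken; the paper's one-line proof invokes it as a ``fact'' rather than listing it among the assumptions.
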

\noindent{As a consequence, \textit{the fibres $\overline{\Gamma}^{ x }$ are submanifolds of $\Gamma$ for all $x \in M$ if, and only if, $\overline{\Gamma}^{ x } = \overline{\mathcal{F}} \left( \epsilon \left( x \right) \right)$ for all $x \in M$.}}\\
\begin{proposition}[Consistency]\label{consistencyproperty2904}
Let be $\Gamma \rightrightarrows M$, $\Gamma' \rightrightarrows M'$ two Lie groupoids and $\Phi: \Gamma \rightarrow \Gamma'$ an embedding of Lie groupoids. Consider a (non necessarily Lie) subgroupoid $\overline{\Gamma}$ of $\Gamma$. Then, the image of the characteristic foliation $\overline{\mathcal{F}}$ of $\overline{\Gamma}$ by $\Phi$ is the chacteristic foliation of $\Phi \left( \overline{\Gamma} \right)$ as a subgroupoid of $\Gamma'$.
\begin{proof}
First at all, notice that $\Phi \left(\Gamma \right)$ is Lie groupoid of $\Gamma'$ on $\phi \left( M \right)$, where $\phi$ is projection of $\Phi$ on the base manifolds, because $\Phi$ is an embedding of Lie groupoids.\\
Let $\Theta \in \frak X_{loc} \left( \Gamma \right)$ be an admissible vector field for the couple $\left( \Gamma , \overline{\Gamma} \right)$, i.e.,
\begin{itemize}
\item $\Theta$ is left-invariant.
\item The (local) flow $\varphi^{\Theta}_{t}$ of $\Theta$ satisfies
$$\varphi^{\Theta}_{t} \left( \epsilon \left( x \right)\right) \subseteq \overline{\Gamma}, $$
for all $x \in M$.
\end{itemize}
Then, the pushforward $\Phi_{*}\Theta$ is an admissible vector field for the couple $\left( \Phi \left(\Gamma \right), \Phi \left(\overline{\Gamma} \right)\right)$. In fact, since $\Phi$ is a morphism of Lie groupoids, we have that $\Phi_{*}\Theta$ is left-invariant.\\
On the other hand, the (local) flow of $\Phi_{*}\Theta$ is given by $\Phi \circ \varphi^{\Theta}_{t} \circ \Phi^{-1}$, where $\varphi^{\Theta}_{t}$ is the local flow of $\Theta$. So, at each $x = \phi \left(y\right) \in \phi \left( M \right)$, the local flow of $\Phi_{*}\Theta$ at the identity on $y$, $\Phi \circ \varphi^{\Theta}_{t} \left( \epsilon \left( x \right) \right)$ is totally contained in $\Phi \left( \Gamma \right)$, i.e.,
$$\Phi \circ \varphi^{\Theta}_{t} \left( \epsilon \left( x \right)\right) \in \Phi \left(\overline{\Gamma}\right), \ \forall t.$$
Analogously, given an admissible vector field $\Lambda$ for the couple $\left( \Phi \left(\Gamma \right), \Phi \left(\overline{\Gamma} \right)\right)$, the pushforward $\Phi^{-1}_{*}\Lambda$ is an admissible vector field for the couple $\left( \Gamma , \overline{\Gamma} \right)$. Hence, we have proved that the image of the characteristic foliation $\overline{\mathcal{F}}$ of $\overline{\Gamma}$ by $\Phi$ is the chacteristic foliation of $\Phi \left( \overline{\Gamma} \right)$ as a subgroupoid of $\Phi \left(\Gamma \right)$.\\
Finally, due to the fact that $\Phi \left(\Gamma \right)$ is a Lie subgroupoid of $\Gamma'$ and $\Phi \left( \overline{\Gamma}\right)$ is contained in $\Phi \left(\Gamma \right)$, any admissible vector field $\Lambda$ for the couple $\left( \Phi \left(\Gamma \right), \Phi \left(\overline{\Gamma} \right)\right)$ may be (globally) extended, by using left translations, to an admissible vector field $\tilde{\Lambda}$ for the couple $\left( \Gamma' , \Phi \left(\overline{\Gamma} \right)\right)$. In fact, the extension $\tilde{\Lambda}$ is, by construction, a left-invariant vector field on $\Gamma'$ and its flow at the identities is completely contained in $ \Phi \left(\overline{\Gamma} \right)$. On the other hand, analogously, the restriction to $\Phi \left( \Gamma \right)$ of any admissible vector field $\Theta$ for the couple $\left( \Gamma', \Phi \left(\overline{\Gamma} \right)\right)$ is an admissible vector field for the couple $\left(\Phi\left( \Gamma \right) , \Phi \left(\overline{\Gamma} \right)\right)$. Therefore, the characteristic distribution of $\overline{\Gamma}$ as a subgroupoid of $\Phi \left( \overline{\Gamma}\right)$ is the restriction of the characteristic distribution of $\overline{\Gamma}$ as a subgroupoid of $ \overline{\Gamma}'$.
\end{proof}
\end{proposition}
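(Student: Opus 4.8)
The plan is to rephrase everything in terms of admissible vector fields and transport them along $\Phi$, which is legitimate because $\Phi$ is a diffeomorphism onto its image. First I would record the preliminary facts: since $\Phi$ is an embedding of Lie groupoids, $\Phi(\Gamma)\rightrightarrows\phi(M)$ is a Lie subgroupoid of $\Gamma'\rightrightarrows M'$, the map $\Phi\colon\Gamma\to\Phi(\Gamma)$ is a diffeomorphism intertwining source, target, identities, inversion and composition, and $\Phi(\overline{\Gamma})$ is a (possibly non-Lie) subgroupoid of $\Phi(\Gamma)$ over its full base $\phi(M)$. In particular the pushforward of local vector fields along $\Phi$ makes sense between $\Gamma$ and $\Phi(\Gamma)$.

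The core step is to show that $\Theta\in\mathfrak{X}_{loc}(\Gamma)$ is admissible for $(\Gamma,\overline{\Gamma})$ if and only if $\Phi_{*}\Theta$ is admissible for $(\Phi(\Gamma),\Phi(\overline{\Gamma}))$, by transferring the three defining conditions. Tangency to $\beta$-fibres transfers because $\beta'\circ\Phi=\phi\circ\beta$, so $\Phi$ carries $\beta$-fibres of $\Gamma$ to $\beta$-fibres of $\Phi(\Gamma)$. Left-invariance transfers because $\Phi$ intertwines left translations, $\Phi\circ L_{g}=L_{\Phi(g)}\circ\Phi$, so it conjugates the local flows: $\varphi^{\Phi_{*}\Theta}_{t}=\Phi\circ\varphi^{\Theta}_{t}\circ\Phi^{-1}$. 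Finally, since $\Phi\circ\epsilon=\epsilon'\circ\phi$, the flow of $\Phi_{*}\Theta$ through the identity $\epsilon'(\phi(x))$ is $\Phi(\varphi^{\Theta}_{t}(\epsilon(x)))$, which lies in $\Phi(\overline{\Gamma})$ exactly when $\varphi^{\Theta}_{t}(\epsilon(x))$ lies in $\overline{\Gamma}$. Running the same three checks with $\Phi^{-1}$ gives the converse, so $\Theta\mapsto\Phi_{*}\Theta$ is a bijection between the two families of admissible vector fields. Consequently $T\Phi$ maps $A\overline{\Gamma}^{T}$ fibrewise onto the characteristic distribution of $\Phi(\overline{\Gamma})$ in $\Phi(\Gamma)$, and since the integrating foliation is determined by the distribution (\cref{10.24}), $\Phi$ maps $\overline{\mathcal{F}}$ — hence the characteristic foliation of $\overline{\Gamma}$ — onto the characteristic foliation of $\Phi(\overline{\Gamma})$ computed inside $\Phi(\Gamma)$.

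It then remains to see that enlarging the ambient Lie groupoid from $\Phi(\Gamma)$ to $\Gamma'$ changes nothing along $\Phi(\Gamma)$. Given $\Lambda$ admissible for $(\Phi(\Gamma),\Phi(\overline{\Gamma}))$, I would extend it to a left-invariant vector field $\tilde{\Lambda}$ on $\Gamma'$ by left-translating its values along the identities of $\Phi(\Gamma)$ (and setting it to $0$ over the part of $M'$ outside $\phi(M)$); its flow through any identity $\epsilon'(\phi(x))$ equals that of $\Lambda$ and hence remains in $\Phi(\overline{\Gamma})$, so $\tilde{\Lambda}$ is admissible for $(\Gamma',\Phi(\overline{\Gamma}))$. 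Conversely, an admissible vector field for $(\Gamma',\Phi(\overline{\Gamma}))$ has its flow through each identity $\epsilon'(\phi(x))$ trapped in $\Phi(\overline{\Gamma})\subseteq\Phi(\Gamma)$, so it is tangent to $\Phi(\Gamma)$ at the identities and, by left-invariance, all along $\Phi(\Gamma)$; hence it restricts to an admissible vector field for $(\Phi(\Gamma),\Phi(\overline{\Gamma}))$. So the two characteristic distributions agree along $\Phi(\Gamma)$ and the leaves through points of $\Phi(\overline{\Gamma})$ coincide, which together with the previous paragraph yields the claim.

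The step I expect to be the main obstacle is this last comparison: one must check that the left-translated extension $\tilde{\Lambda}$ is genuinely \emph{smooth} on $\Gamma'$ (this is where being an embedding — so that $\Phi(\Gamma)$ is an embedded submanifold — is actually used), and that an admissible vector field on $\Gamma'$ is tangent to $\Phi(\Gamma)$ along it rather than merely meeting $\Phi(\overline{\Gamma})$ at isolated identities; one should also keep in mind that if $\phi$ is not surjective the phrase ``characteristic foliation of $\Phi(\overline{\Gamma})$ as a subgroupoid of $\Gamma'$'' has to be read through the embedded Lie subgroupoid $\Phi(\Gamma)$. By contrast, the middle step — transporting the three admissibility conditions along the diffeomorphism $\Phi$ — is purely mechanical.
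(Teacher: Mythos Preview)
Your proposal is correct and follows essentially the same two-step strategy as the paper: first transport admissible vector fields along the diffeomorphism $\Phi\colon\Gamma\to\Phi(\Gamma)$ to identify the characteristic foliations of $\overline{\Gamma}$ in $\Gamma$ and of $\Phi(\overline{\Gamma})$ in $\Phi(\Gamma)$, then compare the latter with the characteristic foliation of $\Phi(\overline{\Gamma})$ in the larger ambient $\Gamma'$ by extending and restricting left-invariant vector fields. Your version is in fact more scrupulous than the paper's on the technical points you flag at the end (smoothness of the left-translated extension, tangency along $\Phi(\Gamma)$, and the base-manifold mismatch when $\phi$ is not surjective); the paper simply asserts these steps without further comment.
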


Thus, this results show a \textit{consistency property} in the definition of the characteristic distribution. In particular, the characteristic foliation (resp. distribution) does not depend on the ``\textit{ambient space}''.\\\\
\noindent{Notice that, analogously to Theorem \ref{10.24}, we may proved that the base-characteristic distribution $A \overline{\Gamma}^{\sharp}$ is integrable. Thus, we will denote the foliation which integrates the base-characteristic distribution over the base $M$ by $\mathcal{F}$. For each point $x \in M$, the leaf of $\mathcal{F}$ containing $x$ will be denoted by $\mathcal{F} \left( x \right)$. $\mathcal{F}$ will be called the \textit{base-characteristic foliation of $\overline{\Gamma}$}.}\\
\begin{example}\label{10.41}
\rm
Let $\sim$ be an equivalence relation on a manifold $M$, i.e., a binary relation that is reflexive, symmetric and transitive. Then, define the subset $\mathcal{O}$ of $M \times M$ given by
\begin{equation}
\mathcal{O}:= \{ \left( x,y \right) \  : \    x \sim y \}.
\end{equation}
Hence, $\mathcal{O}$ is a subgroupoid of $M \times M$ over $M$. In fact, this is equivalent to the properties reflexive, symmetric and transitive. For each $x \in M$, we denote by $\mathcal{O}_{x}$ to the orbit around $x$,
$$\mathcal{O}_{x}:= \{ y  \  : \    x \sim y \}.$$
Notice that the orbits divide $M$ into a disjoint union of subsets. However, these are not (necessarily) submanifolds.\\
On the other hand, the base-characteristic foliation gives us a foliation $\mathcal{F}$ of $M$ such that
$$ \mathcal{F} \left( x \right) \subseteq \mathcal{O}_{x}, \ \forall x \in M.$$
So, consider a random equivalence relation on a manifold $M$. Maybe the orbits are not manifolds but we have proved that \textit{we may divide $M$ in a maximal foliation such that any orbit is a union of leaves.} This foliation is maximal in the sense that there is no any other coarser foliation of $M$ whose leaves are contained in the orbits (see theorem \ref{10.20} and corollary \ref{10.39}).
\end{example}

\noindent{Next, we will show that the leaves of $\mathcal{F}$ may be endowed with even more geometric structure. Indeed, we will construct a Lie groupoid structure over each leaf of $\mathcal{F}$.}\\
For each $x \in M$, let us consider the groupoid $\overline{\Gamma} \left( \mathcal{F} \left( x \right) \right)$ generated by $\overline{\mathcal{F}} \left( \epsilon \left( x \right) \right) $. Notice that, for each $\overline{h} \in \overline{\mathcal{F}} \left( \epsilon \left( x \right) \right)$,
$$  \overline{\mathcal{F}} \left( \epsilon \left( x \right) \right) = \overline{\mathcal{F}} \left(  \overline{h} \right) = \overline{h} \cdot  \overline{\mathcal{F}} \left( \epsilon \left( \alpha \left( \overline{h} \right) \right) \right).$$
Hence,
$$ \overline{\mathcal{F}} \left( \overline{h}^{-1}\right) = \overline{h}^{-1} \cdot  \overline{\mathcal{F}} \left( \epsilon \left( x \right) \right)  =  \overline{\mathcal{F}} \left( \epsilon \left( \alpha \left( \overline{h} \right)\right) \right).$$
On the other hand, let be $\overline{t} \in \overline{\mathcal{F}} \left( \epsilon \left( \alpha \left( \overline{h} \right)\right) \right)$. Therefore,
$$   \overline{\mathcal{F}} \left( \overline{h} \cdot \overline{t} \right) = \overline{h}\cdot  \overline{\mathcal{F}} \left( \overline{t}\right) =  \overline{h}\cdot \overline{\mathcal{F}} \left( \epsilon \left( \alpha \left( \overline{h}\right) \right)\right) =  \overline{\mathcal{F}} \left( \epsilon \left( x \right)\right).$$
i.e., $\overline{h} \cdot \overline{t} \in \overline{\mathcal{F}} \left( \epsilon \left( x \right)\right)$ and, hence, $\overline{t}$ can be written as $\overline{h}^{-1} \cdot \overline{g}$ with $\overline{g} \in \overline{\mathcal{F}} \left( \epsilon \left( x \right)\right).$ So, we have proved that
$$\overline{\mathcal{F}} \left( \epsilon \left( \alpha \left( \overline{h} \right)\right) \right) \subset \overline{\Gamma} \left( \mathcal{F} \left( x \right) \right),$$
for all $\overline{h} \in \overline{\mathcal{F}} \left( \epsilon \left( x\right) \right) $. In fact, by following the same argument we have that
\begin{equation}\label{10.17}
 \overline{\Gamma} \left( \mathcal{F} \left( x \right) \right) = \sqcup_{\overline{g} \in \overline{\mathcal{F}} \left( \epsilon \left( x \right) \right)} \overline{\mathcal{F}} \left( \epsilon \left( \alpha \left( \overline{g} \right) \right) \right),
\end{equation}
i.e., $\overline{\Gamma} \left( \mathcal{F} \left( x \right) \right)$ can be depicted as a disjoint union of fibres at the identities. Furthermore, $\overline{\Gamma} \left( \mathcal{F} \left( x \right) \right)$ may be equivalently defined as the smallest transitive subgroupoid of $\overline{\Gamma}$ which contains $\overline{\mathcal{F}} \left( \epsilon \left( x \right) \right)$. Observe that the $\beta-$fibre of this groupoid at a point $y \in \mathcal{F} \left( x \right)$ is given by $\overline{\mathcal{F}} \left( \epsilon \left( y \right) \right)$. Hence, the $\alpha-$fibre at $y$ is 
$$ \overline{\mathcal{F}}^{-1} \left( \epsilon \left( y\right) \right) = i \circ \overline{\mathcal{F}} \left( \epsilon \left( y \right) \right).$$
Furthermore, the groups $\overline{\mathcal{F}} \left( \epsilon \left( y \right) \right) \cap \Gamma_{ y}$ are exactly the isotropy groups of $\overline{\Gamma}\left( \mathcal{F} \left( x \right) \right) $.
\begin{theorem}\label{10.20}
For each $x \in M$ there exists a transitive Lie subgroupoid $\overline{\Gamma} \left( \mathcal{F} \left( x \right) \right)$ of $\Gamma$ with base $\mathcal{F} \left( x \right)$.

\end{theorem}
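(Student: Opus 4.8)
The plan is to equip the set $\overline{\Gamma}\left(\mathcal{F}\left(x\right)\right)$, defined just above via decomposition~\eqref{10.17}, with a smooth manifold structure making it a transitive Lie groupoid over the leaf $\mathcal{F}\left(x\right)$. The guiding idea is that $\overline{\Gamma}\left(\mathcal{F}\left(x\right)\right)$ is ``almost'' a leaf of the characteristic foliation $\overline{\mathcal{F}}$: by \Cref{10.24} the leaf $\overline{\mathcal{F}}\left(\epsilon\left(x\right)\right)$ is an immersed submanifold of $\Gamma$ contained in the single $\beta$-fibre $\Gamma^{x}$, and formula~\eqref{10.17} exhibits our groupoid as the disjoint union of the leaves $\overline{\mathcal{F}}\left(\epsilon\left(\alpha\left(\overline{g}\right)\right)\right)$ as $\overline{g}$ ranges over $\overline{\mathcal{F}}\left(\epsilon\left(x\right)\right)$. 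Since left translation by $\overline{g}\in\overline{\mathcal{F}}\left(\epsilon\left(x\right)\right)$ is a diffeomorphism $L_{\overline{g}^{-1}}$ carrying $\overline{\mathcal{F}}\left(\epsilon\left(x\right)\right)$ onto $\overline{\mathcal{F}}\left(\epsilon\left(\alpha\left(\overline{g}\right)\right)\right)$ (using property (ii) of $\overline{\mathcal{F}}$, namely $\overline{\mathcal{F}}\left(g\cdot h\right)=g\cdot\overline{\mathcal{F}}\left(h\right)$), all these leaves are mutually diffeomorphic. So the rough picture is: $\overline{\Gamma}\left(\mathcal{F}\left(x\right)\right)$ fibres over $\mathcal{F}\left(x\right)$ (via $\overline{\alpha}$), with each $\overline{\alpha}$-fibre a copy of the leaf $\overline{\mathcal{F}}\left(\epsilon\left(x\right)\right)$.

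First I would pin down the base: $\mathcal{F}\left(x\right)$ is the leaf through $x$ of the base-characteristic foliation, which by the discussion preceding the statement integrates the integrable distribution $A\overline{\Gamma}^{\sharp}$, so it is an immersed submanifold of $M$; moreover $\overline{\alpha}$ restricted to $\overline{\mathcal{F}}\left(\epsilon\left(x\right)\right)$ lands in $\mathcal{F}\left(x\right)$ because $T\alpha\bigl(A\overline{\Gamma}^{T}_{\epsilon\left(y\right)}\bigr)=A\overline{\Gamma}^{\sharp}_{y}$ by the defining diagram, and the characteristic distribution is $\alpha$-related to the base-characteristic one. Next I would define the smooth structure on the total space. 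Fix a choice of leaf coordinates: locally, $\overline{\mathcal{F}}\left(\epsilon\left(x\right)\right)$ over a plaque of $\mathcal{F}\left(x\right)$ looks like a product, and I would transport this along left translations using a local section $\sigma$ of $\overline{\alpha}$ through the identities — more precisely, over a neighbourhood $U\subseteq\mathcal{F}\left(x\right)$ pick the family of identities $\epsilon\left(y\right)$, $y\in U$, each lying on its leaf $\overline{\mathcal{F}}\left(\epsilon\left(y\right)\right)$, and use the map $\left(y,\overline{h}\right)\mapsto \overline{h}$ together with $L$ to chart $\overline{\alpha}^{-1}\left(U\right)\cap\overline{\Gamma}\left(\mathcal{F}\left(x\right)\right)$. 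The key point making this well defined is the left-invariance of $A\overline{\Gamma}^{T}$, i.e. $A\overline{\Gamma}^{T}_{g}=T_{\epsilon\left(\alpha\left(g\right)\right)}L_{g}\bigl(A\overline{\Gamma}^{T}_{\epsilon\left(\alpha\left(g\right)\right)}\bigr)$, which guarantees that left translations are leaf-preserving diffeomorphisms and hence that the transition maps between these local models are smooth.

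Then I would verify the Lie groupoid axioms. Surjectivity and submersivity of $\overline{\alpha},\overline{\beta}$ onto $\mathcal{F}\left(x\right)$ are built into the fibred construction; $\overline{\epsilon}$ is smooth because it is the restriction of $\epsilon$ and its image is a cross-section of the fibration; the inversion $\overline{i}$ is smooth because $i$ maps $\overline{\mathcal{F}}\left(\epsilon\left(y\right)\right)$ onto $\overline{\mathcal{F}}^{-1}\left(\epsilon\left(y\right)\right)$, which I identified above as the $\alpha$-fibre, and $i$ is a diffeomorphism of $\Gamma$. Smoothness of the partial multiplication reduces, via property (ii) of $\overline{\mathcal{F}}$ and left-invariance, to the statement that $\left(\overline{g},\overline{h}\right)\mapsto \overline{g}\cdot\overline{h}$ is smooth, which follows from smoothness of multiplication in $\Gamma$ together with the fact that the charts were built from left translations. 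Transitivity is immediate: $\overline{\alpha}\times\overline{\beta}$ is onto $\mathcal{F}\left(x\right)\times\mathcal{F}\left(x\right)$ since any two points $y,z\in\mathcal{F}\left(x\right)$ are joined by an element of $\overline{\mathcal{F}}\left(\epsilon\left(x\right)\right)$-translates, by~\eqref{10.17}. Finally, $\overline{\Gamma}\left(\mathcal{F}\left(x\right)\right)\hookrightarrow\Gamma$ is an immersion because each $\overline{\alpha}$-fibre is an immersed leaf of $\overline{\mathcal{F}}$ and the inclusion is smooth on charts, so it is a Lie subgroupoid.

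The main obstacle I anticipate is the well-definedness and smooth compatibility of the local models for the total space — that is, checking that the manifold topology one puts on $\overline{\Gamma}\left(\mathcal{F}\left(x\right)\right)$ is independent of the chosen local sections/plaques and that the $\overline{\alpha}$-fibres sit inside it as their intrinsic (immersed-leaf) topology rather than the subspace topology from $\Gamma$. This is the usual subtlety with singular foliations and immersed subgroupoids: one must argue, using the Stefan–Sussmann structure of $\overline{\mathcal{F}}$ and the left-invariance, that left translations restrict to homeomorphisms between leaves in their leaf topology, so that the transported charts glue. Everything else — the groupoid axioms, transitivity — is then a routine consequence of properties (i), (ii) of $\overline{\mathcal{F}}$ established before the statement and of \Cref{10.24}.
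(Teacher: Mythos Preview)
The paper does not present its own proof of this theorem; immediately after the statement it writes ``The proof of this result comes from some technical lemmas and may be found in \cite{VMMDME,CHARDIST}.'' So there is no in-paper argument to compare against.

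That said, your proposal is sound and follows precisely the route the paper's preceding discussion sets up: decomposition~\eqref{10.17} expresses $\overline{\Gamma}\left(\mathcal{F}\left(x\right)\right)$ as a union of $\beta$-fibres, each of which is a leaf $\overline{\mathcal{F}}\left(\epsilon\left(y\right)\right)$, and left-invariance of $A\overline{\Gamma}^{T}$ makes left translations into leaf-diffeomorphisms. You correctly isolate the one genuinely delicate step, namely showing that these leaves assemble smoothly as $y$ varies over $\mathcal{F}\left(x\right)$; everything else (groupoid axioms, transitivity, immersion into $\Gamma$) is indeed routine once the manifold structure is in place. The technical lemmas in the cited references address exactly this gluing, typically by exploiting that the flows $\varphi^{\Theta}_{t}$ of admissible vector fields act as local bisections and hence provide local product charts $U\times\overline{\mathcal{F}}\left(\epsilon\left(x\right)\right)\cap\Gamma^{x}_{x}$ for $\overline{\Gamma}\left(\mathcal{F}\left(x\right)\right)$ over plaques $U\subseteq\mathcal{F}\left(x\right)$---which is the precise mechanism behind the ``local sections $\sigma$ of $\overline{\alpha}$ through the identities'' you invoke. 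Your sketch would benefit from making that flow-based construction of the charts explicit rather than leaving it at the level of ``transport along left translations,'' but the strategy is correct.
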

The proof of this result comes from some technical lemmas and may be found in \cite{VMMDME,CHARDIST}.\\
Thus, we have divided the manifold $M$ into leaves $\mathcal{F} \left( x \right)$ which have a maximal structure of transitive Lie subgroupoids of $\Gamma$. 

\begin{corollary}[\cite{VMMDME}]\label{10.39}
Let $\mathcal{H}$ be a foliation of $M$ such that for each $x \in M$ there exists a transitive Lie subgroupoid $\Gamma \left( x \right)$ of $\Gamma$ over the leaf $\mathcal{H} \left( x \right)$ contained in $\overline{\Gamma}$ whose family of $\beta-$fibres defines a foliation on $\Gamma$. Then, the base-characteristic foliation $\mathcal{F}$ is coarser than $\mathcal{H}$, i.e.,
$$ \mathcal{H} \left( x \right) \subseteq \mathcal{F} \left( x \right) , \ \forall x \in M.$$
Futhermore it satisfies that
$$ \Gamma \left( x \right) \subseteq \overline{\Gamma} \left( \mathcal{F} \left( x \right) \right).$$
\end{corollary}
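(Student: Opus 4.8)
The plan is to lift the statement to the total space $\Gamma$, where the corresponding maximality result is already available (Corollary~\ref{10.33}), and then to push the resulting inclusion back down to $M$ by applying the source map $\alpha$. The two ingredients that make this work are, on the one hand, Corollary~\ref{10.33} (the characteristic foliation $\overline{\mathcal{F}}$ is the finest foliation of $\Gamma$ by subsets of $\beta$-fibres that has $\overline{\Gamma}$ as a union of leaves), and, on the other hand, Theorem~\ref{10.20} together with the description of $\overline{\Gamma}(\mathcal{F}(x))$ that precedes it, namely that $\overline{\Gamma}(\mathcal{F}(x))$ is a transitive Lie subgroupoid over $\mathcal{F}(x)$ whose $\beta$-fibre at a point $y\in\mathcal{F}(x)$ is exactly $\overline{\mathcal{F}}(\epsilon(y))$.

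First I would name the object supplied by the hypothesis: let $\overline{\mathcal{H}}$ denote the foliation of $\Gamma$ whose leaves are the $\beta$-fibres of the subgroupoids $\Gamma(x)$, $x\in M$. Then I would verify that $\overline{\mathcal{H}}$ satisfies the hypotheses of Corollary~\ref{10.33}. The leaf of $\overline{\mathcal{H}}$ through a point $g$ is a $\beta$-fibre $\Gamma(x)^{\beta(g)}$, which is contained in $\Gamma^{\beta(g)}$ because $\Gamma(x)$ is a subgroupoid of $\Gamma$; and if $\overline{g}\in\overline{\Gamma}$, the leaf of $\overline{\mathcal{H}}$ through $\overline{g}$ is a $\beta$-fibre of some $\Gamma(x)\subseteq\overline{\Gamma}$, hence it lies entirely in $\overline{\Gamma}$, so $\overline{\Gamma}$ is a union of leaves of $\overline{\mathcal{H}}$. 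Corollary~\ref{10.33} then gives $\overline{\mathcal{H}}(g)\subseteq\overline{\mathcal{F}}(g)$ for all $g\in\Gamma$; taking $g=\epsilon(y)$ with $y$ a point of a leaf $\mathcal{H}(x)$, this reads
$$\Gamma(x)^{y}=\overline{\mathcal{H}}(\epsilon(y))\subseteq\overline{\mathcal{F}}(\epsilon(y)).$$

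Now the two assertions follow by a short computation. Since $\Gamma(x)$ is transitive over $\mathcal{H}(x)$, one has $\alpha(\Gamma(x)^{x})=\mathcal{H}(x)$, and since $\overline{\mathcal{F}}(\epsilon(x))$ is a $\beta$-fibre of $\overline{\Gamma}(\mathcal{F}(x))$, whose base is the leaf $\mathcal{F}(x)$, one has $\alpha(\overline{\mathcal{F}}(\epsilon(x)))\subseteq\mathcal{F}(x)$. Applying $\alpha$ to the displayed inclusion with $y=x$ therefore gives $\mathcal{H}(x)\subseteq\mathcal{F}(x)$, which is the first assertion. For the second, use that the $\beta$-fibre of $\overline{\Gamma}(\mathcal{F}(x))$ at any $y\in\mathcal{F}(x)$ equals $\overline{\mathcal{F}}(\epsilon(y))$: since $\mathcal{H}(x)\subseteq\mathcal{F}(x)$ by what we just proved, the displayed inclusion yields $\Gamma(x)^{y}\subseteq\overline{\mathcal{F}}(\epsilon(y))\subseteq\overline{\Gamma}(\mathcal{F}(x))$ for every $y\in\mathcal{H}(x)$, and taking the union over $y\in\mathcal{H}(x)$ gives $\Gamma(x)\subseteq\overline{\Gamma}(\mathcal{F}(x))$.

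I expect the only real care to be bookkeeping rather than substance. One must be sure that the statement ``$\alpha$ of a $\beta$-fibre of a subgroupoid equals the corresponding orbit'' is invoked only for genuinely transitive subgroupoids (here $\Gamma(x)$) over the correct base leaf, that leaves of $\overline{\mathcal{H}}$ and of $\overline{\mathcal{F}}$ are evaluated only at identities $\epsilon(y)$ with $y$ in the appropriate leaf, and---at the very outset---that the hypothesis ``the family of $\beta$-fibres defines a foliation on $\Gamma$'' indeed makes $\overline{\mathcal{H}}$ a well-defined partition, so that two $\beta$-fibres that meet must coincide and the $\Gamma(x)$ agree along each $\mathcal{H}$-leaf. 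All of the genuine content is already carried by Corollary~\ref{10.33} and Theorem~\ref{10.20}.
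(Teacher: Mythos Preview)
The paper does not supply a proof of this corollary here; it simply records the result with a citation to \cite{VMMDME}. Your approach---lifting the hypothesis to a foliation $\overline{\mathcal{H}}$ of $\Gamma$ by $\beta$-fibres of the $\Gamma(x)$, invoking Corollary~\ref{10.33} to get $\overline{\mathcal{H}}(\epsilon(y))\subseteq\overline{\mathcal{F}}(\epsilon(y))$, and then applying $\alpha$ and the identification of $\overline{\mathcal{F}}(\epsilon(y))$ with the $\beta$-fibre of $\overline{\Gamma}(\mathcal{F}(x))$ from Theorem~\ref{10.20}---is exactly the natural deduction from the machinery the paper has set up, and is correct.

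Two minor remarks. First, in your parenthetical description of Corollary~\ref{10.33} you write that $\overline{\mathcal{F}}$ is the \emph{finest} such foliation; it is the \emph{coarsest} (the inclusion $\overline{\mathcal{H}}(g)\subseteq\overline{\mathcal{F}}(g)$ you then write is in the correct direction, so this is only a slip of wording). Second, the phrase ``the family of $\beta$-fibres defines a foliation on $\Gamma$'' in the hypothesis is somewhat loose, since the $\beta$-fibres of the $\Gamma(x)$ lie inside $\overline{\Gamma}$ and need not cover $\Gamma$; the intended reading (consistent with the proof of Corollary~\ref{10.33}, which passes through left-invariant generators) is that their left-translates foliate $\Gamma$. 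You flag this yourself in your final paragraph, and once one reads the hypothesis that way your argument goes through without change.
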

As a consequence, we have that $\overline{\Gamma}$ is a transitive Lie subgroupoid of $\Gamma$ if, and only if, $M = \mathcal{F} \left( x \right)$ and $\overline{\Gamma} =\overline{\Gamma} \left( \mathcal{F} \left( x \right) \right)$ for some $x \in M$.\\

Let us consider now the following equivalence relation $\sim$ on $M$ given by 
$$ x \sim y \ \ \ \Leftrightarrow \ \ \ \exists \overline{g} \in \overline{\Gamma}, \ \alpha \left( \overline{g} \right) = x, \ \beta \left( \overline{g} \right) = y.$$
Then, by example \ref{10.41}, we have a subgroupoid $\overline{\Gamma}^{B}$ of the pair groupoid $M \times M$. So, we may consider its associated base-characteristic distribution $A \overline{\Gamma}^{B}$ at $M$ which is called the \textit{transitive distribution of $\overline{\Gamma}$}. The associated base-characteristic foliation $\mathcal{G}$ of $M$ will be called \textit{transitive foliation of $\overline{\Gamma}$}. 
\begin{corollary}\label{10.42}
The base-characteristic foliation $\mathcal{F}$ based on the groupoid $\overline{\Gamma}$ is contained in the transitive foliation $\mathcal{G}$ of $\overline{\Gamma}$.
\begin{proof}
For each $x \in M$, $\mathcal{F} \left( x \right) \times \mathcal{F} \left( x \right)$ defines a transitive Lie subgroupoid of $M \times M$ over $\mathcal{F} \left( x \right)$ and the result follows from corollary \ref{10.39}.
\end{proof}
\end{corollary}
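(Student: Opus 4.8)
The plan is to deduce the inclusion $\mathcal{F}\subseteq\mathcal{G}$ from an application of \cref{10.39} to the pair groupoid $M\times M\rightrightarrows M$, taking as the subgroupoid the $\overline{\Gamma}^{B}$ attached to the equivalence relation $\sim$, and as the comparison foliation $\mathcal{H}=\mathcal{F}$. The transitive Lie subgroupoids that hypothesis asks for will be, for each $x\in M$, the pair groupoid $\mathcal{F}\left(x\right)\times\mathcal{F}\left(x\right)\rightrightarrows\mathcal{F}\left(x\right)$ of a single leaf of $\mathcal{F}$. So the work splits into three checks: that $\mathcal{F}\left(x\right)\times\mathcal{F}\left(x\right)$ really is a transitive Lie subgroupoid of $M\times M$ over the leaf $\mathcal{F}\left(x\right)$, that it is contained in $\overline{\Gamma}^{B}$, and that the family of its $\beta$-fibres, as $x$ varies, assembles into a foliation of $M\times M$.

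The first check is essentially formal: $\mathcal{F}\left(x\right)$ is a leaf of the (possibly singular) base-characteristic foliation $\mathcal{F}$ of $M$, hence an immersed submanifold of $M$; the pair groupoid of any manifold is transitive; and since $\mathcal{F}\left(x\right)\hookrightarrow M$ is an immersion, the inclusion $\mathcal{F}\left(x\right)\times\mathcal{F}\left(x\right)\hookrightarrow M\times M$ is a morphism of Lie groupoids. For the second check I would invoke \cref{10.20}, which furnishes a transitive Lie subgroupoid $\overline{\Gamma}\left(\mathcal{F}\left(x\right)\right)$ of $\Gamma$ with base $\mathcal{F}\left(x\right)$, and which is by construction contained in $\overline{\Gamma}$. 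Transitivity then says any two points $y,z\in\mathcal{F}\left(x\right)$ are joined by a morphism lying in $\overline{\Gamma}$, so $y\sim z$ and $\left(y,z\right)\in\overline{\Gamma}^{B}$; hence $\mathcal{F}\left(x\right)\times\mathcal{F}\left(x\right)\subseteq\overline{\Gamma}^{B}$.

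For the third check, observe that the $\beta$-fibre of $\mathcal{F}\left(x\right)\times\mathcal{F}\left(x\right)$ over $y\in\mathcal{F}\left(x\right)$ equals $\mathcal{F}\left(x\right)\times\left\{y\right\}=\mathcal{F}\left(y\right)\times\left\{y\right\}$. I would exhibit these as the leaves through diagonal points of the foliation $\pr_{1}^{*}\mathcal{F}$ of $M\times M$ whose leaf at $\left(a,b\right)$ is $\mathcal{F}\left(a\right)\times\left\{b\right\}$; that this is a foliation is seen by pairing foliated charts of $\mathcal{F}$ on the first factor with arbitrary charts on the second, and each $\beta$-fibre $\mathcal{F}\left(y\right)\times\left\{y\right\}$ is precisely the leaf $\pr_{1}^{*}\mathcal{F}\left(y,y\right)$. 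With the three checks in hand, \cref{10.39} delivers $\mathcal{F}\left(x\right)\subseteq\mathcal{G}\left(x\right)$ for every $x\in M$, which is exactly the asserted inclusion $\mathcal{F}\subseteq\mathcal{G}$.

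The step I expect to be the real obstacle is the third one: pinning down an ambient foliation of $M\times M$ whose leaves contain all the $\beta$-fibres, while keeping track of the fact that those $\beta$-fibres only sweep out the graph of $\mathcal{F}$ inside $M\times M$ (not all of it) and that their dimension may jump from leaf to leaf because $\mathcal{F}$ is in general singular. If one prefers to sidestep \cref{10.39} altogether, there is a more hands-on route through \cref{10.33}: check that $\pr_{1}^{*}\mathcal{F}$ is generated by the left-invariant vector fields on $M\times M$ of the form $\left(V,0\right)$ with $V$ tangent to $\mathcal{F}$, that $\overline{\Gamma}^{B}$ is a union of leaves of $\pr_{1}^{*}\mathcal{F}$ (using $\mathcal{F}\left(x\right)\subseteq\mathcal{O}\left(x\right)$, again via \cref{10.20}, together with transitivity of $\sim$), and that $\pr_{1}^{*}\mathcal{F}\left(g\right)\subseteq\left(M\times M\right)^{\beta\left(g\right)}$; then \cref{10.33} forces $\pr_{1}^{*}\mathcal{F}$ to be coarser than the characteristic foliation of $\overline{\Gamma}^{B}$, and pushing forward along $\alpha=\pr_{1}$ from the identities recovers $\mathcal{F}\subseteq\mathcal{G}$.
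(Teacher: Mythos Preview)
Your proposal is correct and follows exactly the same route as the paper: apply \cref{10.39} with $\mathcal{H}=\mathcal{F}$ and, for each $x\in M$, take $\Gamma(x)=\mathcal{F}(x)\times\mathcal{F}(x)$ as the transitive Lie subgroupoid of $M\times M$ over $\mathcal{F}(x)$ contained in $\overline{\Gamma}^{B}$. The paper compresses this to a single sentence and omits the verifications you spell out (in particular the containment $\mathcal{F}(x)\times\mathcal{F}(x)\subseteq\overline{\Gamma}^{B}$ via \cref{10.20} and the foliation-of-$\beta$-fibres condition), but the argument is the same.
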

Summarizing, for a fixed subgroupoid $\overline{\Gamma}$ of a Lie groupoid $\Gamma$ we have available three canonical foliations, $\overline{\mathcal{F}}$, $\mathcal{F}$ and $\mathcal{G}$. Roughly speaking, $\mathcal{G}$ divides the base manifold into a maximal foliation such that each leaf is transitive or, in other words, $\mathcal{G}$ divides the orbits of $\overline{\Gamma}$ into a maximal foliation of $M$. The main difference between the foliations $\mathcal{G}$ and $\mathcal{F}$ is that, with $\mathcal{F}$, we are not only requesting ``\textit{differentiability}'' on the base manifold $M$ but on the groupoid $\overline{\Gamma}$.\\
For instance, suppose that $\overline{\Gamma}$ is a transitive subgroupoid of $\Gamma$. Then, $\mathcal{G}$ consists in one unique leaf equal to $M$. However, if $\overline{\Gamma}$ is not a Lie subgroupoid of $\Gamma$ the base-characteristic foliation $\mathcal{F}$ does not have (necessarily) one unique leaf equal to $M$.\\
Apart from Example \ref{10.41}, we may study several relevant applications of the characteristic distribution. In \cite{VMMDME} we may find some of them. Here we are mainly interested in one of them, the so-called \textit{material distributions}, which will be presented in what follows.

\part{Elastic simple materials}\label{partelasticmat}

We will start dealing with the notion of \textit{simple material}. For a detailed introduction to this topic we refer to the books \cite{EPSBOOK,CCWAN,EPSBOOK2}. Another recommendable reference is \cite{JEMARS}.\\
A \textit{(deformable) body} is defined as an oriented manifold $\mathcal{B}$ of dimension $3$ which can be covered by just one chart. The points of the body $\mathcal{B}$ will be called \textit{body points} or \textit{material particles} and will be denoted by using capital letters ($X,Y,Z \in \mathcal{B}$). A \textit{sub-body} of $\mathcal{B}$ is an open subset $\mathcal{U}$ of the manifold $\mathcal{B}$.\\
The existence of the so-called \textit{configurations} arises from the need of manifesting the body into the ``\textit{real world}''. Thus, a configuration is an embedding $\phi : \mathcal{B} \rightarrow \mathbb{R}^{3}$. An \textit{infinitesimal configuration at a particle $X$} is given by the $1-$jet $j_{X,\phi \left(X\right)}^{1} \phi$ where $\phi$ is a configuration of $\mathcal{B}$. The points on the euclidean space $\mathbb{R}^{3}$ will be called \textit{spatial points} and will be denoted by lower case letters ($x,y,z\in \mathbb{R}^{3}$).\\
From now on, we will fix a configuration, denoted by $\phi_{0}$, called \textit{reference configuration}. The image $\mathcal{B}_{0} = \phi_{0} \left( \mathcal{B} \right)$ will be called \textit{reference state}. Coordinates in the reference configuration will be denoted by $X^{I}$, while any other coordinates will be denoted by $x^{i}$.\\
A \textit{deformation} of the body $\mathcal{B}$ is defined as the change of configurations $\kappa = \phi_{1} \circ \phi_{0}^{-1}$ or, equivalently a diffeomorphism from the reference state $\mathcal{B}_{0}$ to any other open subset $\mathcal{B}_{1}$ of $\mathbb{R}^{3}$. Analogously, an \textit{infinitesimal deformation at $\phi_{0}\left(X\right)$} is given by a $1-$jet $j_{\phi_{0}\left(X\right) , \phi \left(X\right)}^{1} \kappa$ where $\kappa$ is a deformation.\\

\noindent{A relevant goal in continuum mechanics is to study the motion of a body. Here, the internal properties of the body will play an important role (gum or rock are not deformed equally under the same loading).}\\
We may interpret this fact as the dymanical principles are not enough to characterize the motion of a deformable body. Thus, following \cite{WNOLLTHE}, the mechanical response of the body to the history of its deformations is supposed to be determined for the so-called \textit{constitutive equations}.\\
For \textit{elastic simple bodies}, or simply \textit{simple bodies}, \cite{CCWAN} we will assume that the constitutive law depends on a particle only on the infinitesimal deformation at the same particle. More explicitly, the \textit{mechanical response} for an (elastic) simple material $\mathcal{B}$, in a fixed reference configuration $\phi_{0}$, is formalized as a differentiable map $W$ from the set $\mathcal{B} \times Gl \left( 3 , \mathbb{R} \right)$, where $Gl \left( 3 , \mathbb{R} \right)$ is the general linear group of $3 \times 3$-regular matrices, to a fixed (finite dimensional) vector space $V$. In general, $V$ will be the space of \textit{stress tensors}.\\
Indeed, in continuum mechanics, the contact forces at a particle $X$ in a given configuration $\phi$ are characterized by a symmetric second-order tensor 
$$T_{X,\phi}: \mathbb{R}^{3} \rightarrow \mathbb{R}^{3}$$
on $\mathbb{R}^{3}$ called the \textit{stress tensor}. A physical interpretation is that $T_{X,\phi}$ turns the unit normal of a smooth surface into the stress vector acting on the surface at $\phi \left( X \right)$. Then, the mechanical response is given by the following identity:
$$W \left( X , F \right) = T_{X,\phi},$$
where $F$ is the $1-$jet at $\phi_{0} \left( X \right)$ of $\phi \circ \phi_{0}^{-1}$.\\
At this point, we should introduce the \textit{rule of change of reference configuration}. In particular, let $\phi_{1}$ be another configuration and $W_{1}$ be the mechanical response associated to $\phi_{1}$. Then, 
\begin{equation}\label{1.5}
 W_{1} \left( X , F \right) = W \left( X , F \cdot C_{01} \right),
\end{equation}
for all regular matrix $F$ where $C_{01}$ is the associated matrix to the $1-$jet at $\phi_{0} \left( X \right)$ of $\phi_{1} \circ \phi_{0}^{-1}$. Equivalently,
\begin{equation}\label{1.4}
 W \left( X , F_{0} \right) = W_{1} \left( X , F_{1} \right),
\end{equation}
where $F_{i}$, $i=0,1$, is the associated matrix to the $1-$jet at $\phi_{i} \left( X \right)$ of $\phi \circ \phi_{i}^{-1}$ with $\phi$ a configuration.
It is important to remark that Eq. (\ref{1.5}) implies that we may define $W$ as a map on the space of $1$-jets of (local) configurations which is \textit{independent on the chosen reference configuration}. In fact, for each configuration $\phi$ we will define
$$ W \left( j^{1}_{X, x} \phi \right) = W \left( X , F \right),$$
where $F$ is the associated matrix to the $1-$jet at $\phi_{0} \left( X \right)$ of $\phi \circ \phi_{0}^{-1}$.\\
Notice that any sub-body inherits the  structure of elastic simple body from the body $\mathcal{B}$. This local property permits us to compare the material properties at the particles of the body. In particular, we may study when two particles $X$ and $Y$ are made of the same material. To do this, we will introduce the notion of \textit{material isomorphisms}.

\begin{definition}\label{1.33}
\rm
Let $\mathcal{B}$ be a body. Two material particles $X , Y \in \mathcal{B}$ are said to be \textit{materially isomorphic} if there exists a local diffeomorphism $\psi$ from an open neighbourhood $\mathcal{U} \subseteq \mathcal{B}$ of $X$ to an open neighbourhood $\mathcal{V} \subseteq \mathcal{B}$ of $Y$ such that $\psi \left(X\right) =Y$ and
\begin{equation}\label{1.3.first}
W \left( X , F \cdot P \right) = W \left( Y , F \right),
\end{equation}
for all infinitesimal deformation $F$ where $P$ is given by the Jacobian matrix of $\phi_{0} \circ \psi \circ \phi_{0}^{-1}$ at $\phi_{0} \left( X \right)$. The $1-$jets of local diffeomorphisms satisfying Eq. (\ref{1.3.first}) are called \textit{material isomorphisms}. A material isomorphism from $X$ to itself is called a \textit{material symmetry}. In cases where it causes no confusion we often refer to associated matrix $P$ as the material isomorphism (or symmetry).
\end{definition}
So, intuitively, two points are materially isomorphic if the constitutive equation of one of them differs from the other only by an application of a linear transportation, i.e., the are made of the same material.\\
It is remarkable that the relation of being ``\textit{materially isomorphic}" defines an equivalence relation (symmetric, reflexive and transitive) over the body manifold $\mathcal{B}$.\\
We will denote by $G \left( X \right)$ to the set of all material symmetries at particle $X$. As a consequence we have that every $G \left( X \right)$ is a group. Therefore, we may prove that the material symmetry groups of materially isomorphic particles are conjugated, i.e., if $X$ and $Y$ are material isomorphic we have that
$$ G \left( Y \right) = P \cdot G \left( X \right) \cdot P^{-1},$$
where $P$ is a material isomorphism from $X$ to $Y$.
\begin{proposition}
Let $\mathcal{B}$ be a body. Two body points $X$ and $Y$ are materially isomorphic if, and only if, there exist two (local) configurations $\phi_{1}$ and $\phi_{2}$ such that
$$ W_{1} \left( X , F \right)= W_{2} \left( Y , F \right), \ \forall F,$$
where $W_{i}$ is the mechanical response associated to $\phi_{i}$ for $i=1,2$.
\begin{proof}
Let $j_{X,Y}^{1}\psi$ be a material isomorphism from $X$ to $Y$. Then, we may prove the result by imposing,
$$\phi_{2} = \phi_{1}\circ \psi,$$
where $\phi_{i}$ is the reference configuration for $W_{i}$.

\end{proof}
\end{proposition}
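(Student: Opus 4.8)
The plan is to establish the two implications separately, in each case reducing the statement to elementary $1$-jet bookkeeping together with the change-of-reference-configuration rule (Eq.~(\ref{1.5})).

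For the ``only if'' direction I would start from a material isomorphism, that is, a local diffeomorphism $\psi$ from a neighbourhood of $X$ onto a neighbourhood of $Y$ with $\psi(X)=Y$ and $W(X,F\cdot P)=W(Y,F)$ for all $F$, where $P$ is the Jacobian of $\phi_{0}\circ\psi\circ\phi_{0}^{-1}$ at $\phi_{0}(X)$. Then I would take $\phi_{1}$ to be \emph{any} local configuration around $X$ (for instance $\phi_{0}$ itself) and set $\phi_{2}:=\phi_{1}\circ\psi^{-1}$, which is a local configuration around $Y$ satisfying $\phi_{2}(Y)=\phi_{1}(X)$ by construction. Writing $W_{1}(X,F)=W(X,F\cdot C_{01})$ and $W_{2}(Y,F)=W(Y,F\cdot C_{02})$, where $C_{0i}$ is the $1$-jet of $\phi_{i}\circ\phi_{0}^{-1}$ at the relevant base point, the desired identity $W_{1}(X,F)=W_{2}(Y,F)$ becomes, after applying the material-isomorphism equation (\ref{1.3.first}) to $F\cdot C_{02}$, the purely algebraic statement $C_{01}=C_{02}\cdot P$. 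This is precisely the chain rule for $1$-jets applied to the factorization $\phi_{1}\circ\phi_{0}^{-1}=(\phi_{1}\circ\psi^{-1}\circ\phi_{0}^{-1})\circ(\phi_{0}\circ\psi\circ\phi_{0}^{-1})$, so it holds automatically.

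For the ``if'' direction, suppose $W_{1}(X,F)=W_{2}(Y,F)$ for all $F$, with $W_{i}$ the mechanical response of a local configuration $\phi_{i}$. The natural candidate for a material isomorphism is $\psi:=\phi_{2}^{-1}\circ\phi_{1}$, but this sends $X$ to $Y$ only if $\phi_{2}(Y)=\phi_{1}(X)$, which need not hold. I would first arrange this by replacing $\phi_{2}$ with $\tau\circ\phi_{2}$, where $\tau$ is the translation of $\mathbb{R}^{3}$ carrying $\phi_{2}(Y)$ to $\phi_{1}(X)$; since a translation has identity linear part, its $1$-jet is trivial and the response $W_{2}$ is unchanged. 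After this normalization, $\psi=\phi_{2}^{-1}\circ\phi_{1}$ is, on a small enough neighbourhood, a local diffeomorphism of $\mathcal{B}$ with $\psi(X)=Y$, and substituting $F\mapsto F\cdot C_{02}^{-1}$ in the hypothesis rewritten as $W(X,F\cdot C_{01})=W(Y,F\cdot C_{02})$ gives $W(X,F\cdot C_{02}^{-1}\cdot C_{01})=W(Y,F)$; it then only remains to identify $C_{02}^{-1}\cdot C_{01}$, again via the $1$-jet chain rule, with the Jacobian of $\phi_{0}\circ\psi\circ\phi_{0}^{-1}$ at $\phi_{0}(X)$, so that $P:=C_{02}^{-1}\cdot C_{01}$ is exactly the material isomorphism required by Definition~\ref{1.33}.

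The change-of-reference manipulations are routine once Eq.~(\ref{1.5}) is in hand. The only real subtlety --- the part I would be most careful about --- is the base-point matching in the converse: one has to use the fact that the mechanical response only depends on the \emph{derivative} of a configuration, so that composition with an $\mathbb{R}^{3}$-translation is invisible to $W$, which is exactly what lets one promote the germ-level identity to an honest pointwise material isomorphism. The rest is just tracking base points carefully when composing $1$-jets.
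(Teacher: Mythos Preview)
Your proof is correct and follows essentially the same idea as the paper's: relate the two reference configurations via the material isomorphism $\psi$, then invoke the change-of-reference rule~(\ref{1.5}). The paper's proof is in fact much terser --- it only sketches the ``only if'' direction by imposing $\phi_{2}=\phi_{1}\circ\psi$ and leaves the converse implicit --- whereas you supply both directions in full, including the translation-normalization step needed to ensure $\psi(X)=Y$ in the converse. That extra care is warranted (indeed, your choice $\phi_{2}=\phi_{1}\circ\psi^{-1}$ is the one that is defined near $Y$, which is what one actually needs), but the underlying mechanism is identical to the paper's.
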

This result is crucial to understand the idea of material isomorphism. Thus, two material points will be made of the same material if their mechanical responses are the same up to a change of reference configuration.\\

\begin{definition}\label{1.17}
\rm
A body $\mathcal{B}$ is said to be \textit{uniform} if all of its body points are materially isomorphic.
\end{definition}
Intutively, a body is uniform if all the points are made of the same material. Let $\mathcal{B}$ be a uniform body and a fixed body point $X_{0}$; for any other body point $Y$ we may find a material isomorphism from $X_{0}$ to $Y$, say $P \left( Y \right) \in Gl \left( 3, \mathbb{R} \right)$. Then, we shall construct a map $P : \mathcal{B} \rightarrow   Gl \left( 3, \mathbb{R} \right)$ consisting of material isomorphisms. However, $P$ is not in general a differentiable map.
\begin{definition}\label{1.7}
\rm
A body $\mathcal{B}$ is said to be \textit{smoothly uniform} if for each point $X \in \mathcal{B}$ there is a neighbourhood $\mathcal{U} $ around $X$ and a smooth map $P : \mathcal{U} \rightarrow Gl \left( 3, \mathbb{R} \right)$ such that for all $Y \in \mathcal{U}$ it satisfies that $P \left(Y\right)$ is a material isomorphism from $X$ to $Y$. The map $P$ is called a \textit{right (local) smooth field of material isomorphisms}. A \textit{left (local) smooth field of material isomorphisms} is defined analogously.
\end{definition}
Let $P$ be a right (local) smooth field of material isomorphisms. Hence, the mechanical response of the sub-body $\mathcal{U}$ satisfies that
$$W \left( Y , F \right) = W \left( X , F \cdot P \left( Y \right)\right),$$
for all $Y \in \mathcal{U}$. Then, we may define
$$ \overline{W} \left( F \right) = W \left( X , F \right).$$
Therefore,
\begin{equation}\label{1.8}
W \left( Y , F \right) = \overline{W}\left( F \cdot P \left( Y \right)\right).
\end{equation}
Eq. (\ref{1.8}) is interpreted as that the dependence of the mechanical response (near to a material particle) of the body is given by a multiplication of $F$ to the right by a right smooth field of material isomorphisms.
\begin{proposition}\label{1.18}
Let $\mathcal{B}$ be a body. Then, $\mathcal{B}$ is (smoothly) uniform if, and only if, there exists a (differentiable) map $\overline{W}: Gl \left( 3, \mathbb{R} \right) \rightarrow V$ satisfying Eq. (\ref{1.8}) for a (differentiable) map $P : \mathcal{U} \rightarrow Gl \left( 3, \mathbb{R} \right)$.
\begin{proof}
Assume that Eq. (\ref{1.8}) is satisfied for a map $P$ and fix a material point $X$. Then, consider
$$Q : \mathcal{U} \rightarrow Gl \left( 3, \mathbb{R} \right)$$
given by
\begin{equation}\label{AnotherEqMI34}
Q \left( Y  \right) = P \left(Y \right) P\left(  X  \right)^{-1}.
\end{equation}
Therefore, $Q$ is a left (smooth) field of material isomorphisms.
\end{proof}
\end{proposition}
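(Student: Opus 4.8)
The plan is to recognize the proposition as a reformulation of the definitions of (smooth) uniformity in terms of a single ``archetypal'' constitutive response $\overline{W}$, and to prove the two implications by unwinding Definitions~\ref{1.33}, \ref{1.7} and~\ref{1.17}; I would treat the parenthetical ``(smoothly)''/``(differentiable)'' as one coordinated pair of hypotheses so that the smooth and non-smooth cases run in parallel.

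For the forward implication I would start from (smooth) uniformity and produce $\overline{W}$ and $P$. If $\mathcal{B}$ is smoothly uniform, Definition~\ref{1.7} already hands me, around any prescribed point $X$, a sub-body $\mathcal{U}\ni X$ and a smooth field $P:\mathcal{U}\to Gl(3,\mathbb{R})$ with each $P(Y)$ a material isomorphism from $X$ to $Y$; if $\mathcal{B}$ is merely uniform, Definition~\ref{1.17} together with the remark following it (fix a base point $X$ and pick, for every $Y\in\mathcal{B}$, a material isomorphism to $Y$) gives the same data with $\mathcal{U}=\mathcal{B}$ but no differentiability of $P$. In either case I would set $\overline{W}(F):=W(X,F)$, which is differentiable because $W:\mathcal{B}\times Gl(3,\mathbb{R})\to V$ is; then Definition~\ref{1.33} applied to the material isomorphism $P(Y)$ gives $W(Y,F)=W(X,F\cdot P(Y))=\overline{W}(F\cdot P(Y))$, i.e.\ Eq.~(\ref{1.8}).

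For the converse, suppose Eq.~(\ref{1.8}) holds for some $\overline{W}$ and some $P:\mathcal{U}\to Gl(3,\mathbb{R})$. The only step that requires a small idea is that the abstract archetype $\overline{W}$ need not be the response of any actual body point, so $P$ only connects ``the archetype'' to $Y$ rather than a genuine body point to $Y$; to fix this I would choose $X\in\mathcal{U}$ and renormalize by $Q(Y):=P(Y)\,P(X)^{-1}$, which is legitimate since $P(X)\in Gl(3,\mathbb{R})$. Evaluating $\overline{W}$ twice,
$$W(X,F\cdot Q(Y))=\overline{W}(F\cdot Q(Y)\cdot P(X))=\overline{W}(F\cdot P(Y))=W(Y,F)$$
for all $F$, so by Definition~\ref{1.33} each $Q(Y)$ is a material isomorphism from $X$ to $Y$; in particular all points of $\mathcal{U}$ are mutually materially isomorphic. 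Taking $\mathcal{U}=\mathcal{B}$ in the non-smooth case yields uniformity, while in the smooth case $Q$ is differentiable (it is $P$ multiplied on the right by the constant $P(X)^{-1}$) and hence a smooth field of material isomorphisms around $X$; letting $X$ range over the covering sub-bodies gives smooth uniformity.

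I do not expect a real obstacle here: the argument is essentially definition-chasing. The only genuinely non-obvious move is the renormalization $Q=P\,P(X)^{-1}$, which shifts the ``base point'' of the field of isomorphisms from the abstract archetype to a bona fide material particle $X$; everything else is bookkeeping, in particular keeping the quantifier structure of ``(smoothly) uniform'' aligned with that of ``(differentiable) $\overline{W}$ and (differentiable) $P$'', and being consistent about whether one records $Q$ as a left or a right smooth field of material isomorphisms.
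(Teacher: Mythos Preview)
Your proposal is correct and follows essentially the same route as the paper: the key step in both is the renormalization $Q(Y)=P(Y)P(X)^{-1}$ to turn the archetype-to-$Y$ field $P$ into a genuine $X$-to-$Y$ field of material isomorphisms. Your write-up is in fact more complete, since you spell out the forward implication (which the paper leaves to the discussion preceding the proposition) and verify explicitly that $W(X,F\cdot Q(Y))=W(Y,F)$; note only that the paper calls the resulting $Q$ a \emph{left} field, so double-check the left/right convention when you finalize.
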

It is important to note that the smooth uniformity is the starting point of the use of $G-$structures in \cite{MELZA} (see \cite{MELZASEG} or \cite{CCWANSEG}; see also \cite{FBLOO} and \cite{GAMAU2}). In fact, let us consider a smoothly uniform body $\mathcal{B}$. Fix $Z_{0} \in \mathcal{B}$ and $\overline{Z}_{0} = j^{1}_{0,Z_{0}} \phi \in F \mathcal{B}$ a frame at $Z_{0}$. Then, the following set:

$$\omega_{G_{0}} \left( \mathcal{B}\right) := \{ j^{1}_{Z_{0},Y}\psi \cdot \overline{Z}_{0}, \ : \  j^{1}_{Z_{0},Y}\psi \ \text{is a material isomorphism}   \},$$ is a $G_{0}-$structure on $\mathcal{B}$ (which contains $\overline{Z}_{0}$). This $G_{0}-$structure has been used to study simple material. However, it is defined only for smoothly uniform materials and it is not canonically defined.\\

\noindent{The use of groupoids solved these two points as may be found in \cite{VMJIMM,MATGROUPALG} (see also \cite{MGEOEPS,COSVME}). Let $\mathcal{B}$ be a elastic simple body with reference configuration $\phi_{0}$, and mechanical response $W: \mathcal{B} \times Gl \left( 3, \mathbb{R} \right) \rightarrow V$. Eq. (\ref{1.4}) permits us to define $W$ on the space of (local) configurations in such a way that for any configuration $\phi$ we have that}
$$ W \left( j^{1}_{X, x} \phi \right) = W \left( X , F \right),$$
where $F$ is the associated matrix to the $1-$jet at $\phi_{0} \left( X \right)$ of $\phi \circ \phi_{0}^{-1}$. Indeed, composing $\phi_{0}$ by the left, we obtain that $W$ may be described as a differentiable map $W : \Pi^{1}\left( \mathcal{B} , \mathcal{B} \right) \rightarrow V$ from the groupoid of $1-$jets $\Pi^{1}\left( \mathcal{B} , \mathcal{B} \right)$ (see example \ref{8}) to the vector space $ V$ which does not depend on the image point of the $1-$jets of $\Pi^{1}\left( \mathcal{B} , \mathcal{B} \right)$, i.e., for all $X,Y,Z \in \mathcal{B}$
\begin{small}

\begin{equation}\label{4.1}
 W \left( j_{X,Y}^{1} \phi\right) = W \left( j_{X,Z}^{1} \left( \phi_{0}^{-1}\circ \tau_{Z-Y} \circ \phi_{0} \circ \phi\right)\right),
\end{equation}

\end{small}
\noindent{for all $j_{X,Y}^{1}\phi \in \Pi^{1} \left( \mathcal{B}, \mathcal{B}\right)$, where $\tau_{v}$ is the translation map on $\mathbb{R}^{3}$ by the vector $v$.} It is relevant to note here that, in contrast with the definition on the space of local configuration, the definition of the mechanical response on $\Pi^{1}\left( \mathcal{B} , \mathcal{B} \right)$ does depend on the choice of the configuration $\phi_{0}$.\\
Therefore, we can say that, \textit{two material particles $X$ and $Y$ are materially isomorphic if, and only if, there exists a local diffeomorphism $\psi$ from an open subset $\mathcal{U} \subseteq \mathcal{B}$ of $X$ to an open subset $\mathcal{V} \subseteq \mathcal{B}$ of $Y$ such that $\psi \left(X\right) =Y$ and
\begin{equation}\label{4.2}
W \left( j^{1}_{Y, \kappa \left(Y\right)} \kappa \cdot j^{1}_{X,Y} \psi \right) = W \left( j^{1}_{Y, \kappa \left(Y\right)} \kappa\right),
\end{equation}
for all $j^{1}_{Y , \kappa \left(Y\right)} \kappa \in \Pi^{1}\left( \mathcal{B} , \mathcal{B} \right)$.} In these conditions, $j^{1}_{X,Y} \psi$ will be called a material \textit{isomorphism from $X$ to $Y$.}\\
For any two points $X,Y  \in \mathcal{B}$, the collection of all material isomorphisms from $X$ to $Y$ will be denoted by $G \left(X,Y\right)$. Then, the set
\begin{equation}\label{materialgroupoid232}
\Omega \left( \mathcal{B}\right) = \cup_{X,Y \in \mathcal{B}} G\left(X,Y\right). 
\end{equation}
is a subgroupoid of $\Pi^{1}\left( \mathcal{B} , \mathcal{B} \right)$. This groupoid will be called \textit{material groupoid of} $\mathcal{B}$.\\
The material symmetry group $G \left( X\right)$ at a body point $X \in \mathcal{B}$ is simply the isotropy group of $\Omega \left( \mathcal{B}\right)$ at $X$. For any $X\in \mathcal{B}$, the set of material isomorphisms from $X$ to any other point (resp. from any point to $X$) will be denoted by $\Omega\left( \mathcal{B}\right)_{X}$ (resp. $\Omega\left( \mathcal{B}\right)^{X}$). Finally, the structure maps of $\Omega \left( \mathcal{B} \right)$ will be denoted by $\overline{\alpha}$, $\overline{\beta}$, $\overline{\epsilon}$ and $\overline{i}$ which are just the restrictions of the corresponding ones on $\Pi^{1} \left(  \mathcal{B} , \mathcal{B} \right)$.\\
As a consequence of the continuity of $W$ we have that, for all $X \in \mathcal{B}$, $G \left( X \right)$ is a closed subgroup of $\Pi^{1} \left(\mathcal{B} , \mathcal{B}\right)_{X}^{X}$ and, therefore, we have the following result:
\begin{proposition}\label{4.3}
Let $\mathcal{B}$ be a simple body. Then, for all $X \in \mathcal{B}$ the symmetry group $G \left( X \right)$ is a Lie subgroup of $\Pi^{1} \left( \mathcal{B} , \mathcal{B}\right)_{X}^{X}$.
\end{proposition}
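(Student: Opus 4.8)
The plan is to identify $G\left(X\right)$ with a closed subgroup of the Lie group $\Pi^{1}\left(\mathcal{B},\mathcal{B}\right)_{X}^{X}$ and then invoke Cartan's closed-subgroup theorem, which guarantees that every closed subgroup of a Lie group is an embedded Lie subgroup.

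First I would recall that, having fixed the reference configuration $\phi_{0}$, the isotropy group $\Pi^{1}\left(\mathcal{B},\mathcal{B}\right)_{X}^{X}$ is isomorphic to $Gl\left(3,\mathbb{R}\right)$: a $1$-jet $j^{1}_{X,X}\psi$ is sent to the Jacobian matrix of $\phi_{0}\circ\psi\circ\phi_{0}^{-1}$ at $\phi_{0}\left(X\right)$. Under this identification, $G\left(X\right)$ corresponds precisely to the set of regular matrices $P$ satisfying $W\left(X,F\cdot P\right)=W\left(X,F\right)$ for every $F\in Gl\left(3,\mathbb{R}\right)$. That this set is a subgroup is already established in the text, since it is nothing but the isotropy group of the material groupoid $\Omega\left(\mathcal{B}\right)$ at $X$; the only thing that needs checking is that the defining equations transform correctly under the identification, which follows from the change-of-reference-configuration rule in Eq.~(\ref{1.5}).

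The key step is to show that $G\left(X\right)$ is closed in $Gl\left(3,\mathbb{R}\right)$. For each fixed $F\in Gl\left(3,\mathbb{R}\right)$, the map $P\mapsto W\left(X,F\cdot P\right)-W\left(X,F\right)$ is continuous, because $W$ is differentiable (hence continuous) and matrix multiplication is continuous; therefore its zero set $Z_{F}\subseteq Gl\left(3,\mathbb{R}\right)$ is closed. Since
$$ G\left(X\right)=\bigcap_{F\in Gl\left(3,\mathbb{R}\right)}Z_{F}, $$
it is an intersection of closed sets, hence closed. This is exactly the observation, made just before the statement, that $G\left(X\right)$ is a closed subgroup of $\Pi^{1}\left(\mathcal{B},\mathcal{B}\right)_{X}^{X}$.

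Finally I would apply Cartan's closed-subgroup theorem: $G\left(X\right)$, being a closed subgroup of the Lie group $Gl\left(3,\mathbb{R}\right)\cong\Pi^{1}\left(\mathcal{B},\mathcal{B}\right)_{X}^{X}$, is an embedded Lie subgroup. There is essentially no real obstacle in this argument; the only point that deserves a little care is the reduction to $Gl\left(3,\mathbb{R}\right)$ via $\phi_{0}$ and the verification that the constitutive condition $W\left(X,F\cdot P\right)=W\left(X,F\right)$ is preserved under that identification, everything else being a routine appeal to continuity of $W$ and to the closed-subgroup theorem.
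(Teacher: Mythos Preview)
Your proposal is correct and follows essentially the same approach as the paper: the paper observes, in the sentence immediately preceding the proposition, that the continuity of $W$ forces $G\left(X\right)$ to be a closed subgroup of $\Pi^{1}\left(\mathcal{B},\mathcal{B}\right)_{X}^{X}$, and the proposition is then stated as an immediate consequence (i.e., via the closed-subgroup theorem). Your write-up simply makes explicit the intersection-of-closed-sets argument and the appeal to Cartan's theorem that the paper leaves implicit.
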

This result could give us the intuition of that $\Omega \left( \mathcal{B} \right)$ is a Lie subgroupoid of $\Pi^{1} \left( \mathcal{B} , \mathcal{B}\right)$. However, this is not true (see  \cite{CHARDIST,GENHOM} for some counterexamples).
\begin{proposition}
Let $\mathcal{B}$ be a body. $\mathcal{B}$ is uniform if and only if $\Omega \left( \mathcal{B}\right)$ is a transitive subgroupoid of $\Pi^{1} \left( \mathcal{B} , \mathcal{B}\right)$.
\end{proposition}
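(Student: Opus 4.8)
The plan is to unwind the two notions appearing in the statement and observe that they coincide on the nose. Recall from \eqref{materialgroupoid232} that $\Omega\left(\mathcal{B}\right) = \bigcup_{X,Y\in\mathcal{B}} G\left(X,Y\right)$, where $G\left(X,Y\right)$ is the set of material isomorphisms from $X$ to $Y$; in groupoid language this says precisely that $G\left(X,Y\right) = \overline{\alpha}^{-1}\left(X\right)\cap\overline{\beta}^{-1}\left(Y\right)$ inside $\Omega\left(\mathcal{B}\right)$, so that $G\left(X,Y\right)\neq\emptyset$ if and only if $Y$ lies in the orbit $\mathcal{O}\left(X\right)$ of $\Omega\left(\mathcal{B}\right)$ through $X$. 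On the other hand, by \cref{1.33} the particles $X$ and $Y$ are materially isomorphic exactly when $G\left(X,Y\right)\neq\emptyset$. Hence the relation ``materially isomorphic'' is nothing but the orbit equivalence relation of $\Omega\left(\mathcal{B}\right)$, in the sense of \cref{10.41}.

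With this dictionary in place, both implications are immediate. Suppose first that $\mathcal{B}$ is uniform (\cref{1.17}). Then for every ordered pair $\left(X,Y\right)\in\mathcal{B}\times\mathcal{B}$ there is a material isomorphism from $X$ to $Y$, i.e.\ an element $\overline{g}\in\Omega\left(\mathcal{B}\right)$ with $\overline{\alpha}\left(\overline{g}\right)=X$ and $\overline{\beta}\left(\overline{g}\right)=Y$. Thus the anchor map $\left(\overline{\alpha},\overline{\beta}\right)\colon \Omega\left(\mathcal{B}\right)\to\mathcal{B}\times\mathcal{B}$ is surjective, which is precisely the definition of $\Omega\left(\mathcal{B}\right)$ being a transitive subgroupoid of $\Pi^{1}\left(\mathcal{B},\mathcal{B}\right)$. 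Conversely, assume $\Omega\left(\mathcal{B}\right)$ is transitive and fix $X,Y\in\mathcal{B}$; transitivity produces $\overline{g}\in\Omega\left(\mathcal{B}\right)$ with $\overline{\alpha}\left(\overline{g}\right)=X$ and $\overline{\beta}\left(\overline{g}\right)=Y$. By \eqref{materialgroupoid232} this $\overline{g}$ is the $1$-jet $j^{1}_{X,Y}\psi$ of a local diffeomorphism satisfying \eqref{4.2}, i.e.\ a material isomorphism from $X$ to $Y$, so $X$ and $Y$ are materially isomorphic. Since $X$ and $Y$ were arbitrary, $\mathcal{B}$ is uniform.

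I expect no genuine obstacle here: the statement is a definitional translation between the mechanical notion of uniformity and the groupoid-theoretic notion of transitivity, and the only point deserving a line of care is the identification $G\left(X,Y\right)=\overline{\alpha}^{-1}\left(X\right)\cap\overline{\beta}^{-1}\left(Y\right)$ coming from \eqref{materialgroupoid232}. It is worth emphasising --- as the discussion following \cref{4.3} already warns --- that transitivity of $\Omega\left(\mathcal{B}\right)$ is understood here purely algebraically: no differentiability of $\Omega\left(\mathcal{B}\right)$ is asserted or used, which is precisely why the genuinely geometric content is deferred to smooth uniformity and to the characteristic-distribution machinery surrounding \cref{10.20}.
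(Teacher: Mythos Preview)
Your proof is correct: the proposition is a direct translation between the definition of uniformity (\cref{1.17}) and the transitivity of the material groupoid, and your identification $G\left(X,Y\right)=\overline{\alpha}^{-1}\left(X\right)\cap\overline{\beta}^{-1}\left(Y\right)$ via \eqref{materialgroupoid232} is exactly the point. The paper itself offers no proof for this proposition, treating it as immediate from the definitions, so your argument simply makes explicit what the authors leave to the reader.
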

Next, by composing appropriately with the reference configuration, smooth uniformity (Definition \ref{1.7}) may be characterized in the following way.
\begin{proposition}\label{4.40}
A body $\mathcal{B}$ is smoothly uniform if, and only if, for each point $X \in \mathcal{B}$ there is an neighbourhood $\mathcal{U} $ around $X$ such that for all $Y \in \mathcal{U}$ and $j_{Y,X}^{1} \phi \in \Omega \left( \mathcal{B} \right)$ there exists a local section $\mathcal{P}$ of
$$ \overline{\alpha}_{X} :\Omega\left( \mathcal{B}\right)^{X} \rightarrow \mathcal{B},$$
from $\epsilon \left( X \right)$ to $j_{Y,X}^{1} \phi$.
\end{proposition}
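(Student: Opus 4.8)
The plan is to recognise Proposition~\ref{4.40} as the translation of Definition~\ref{1.7} into the language of the material groupoid $\Omega\left(\mathcal{B}\right)$, and to move between the two pictures by a single dictionary: after composing with the reference configuration $\phi_{0}$ and passing to $1$-jets, a right (local) smooth field of material isomorphisms based at $X$ in the sense of Definition~\ref{1.7} is exactly a smooth local section of $\overline{\alpha}_{X}:\Omega\left(\mathcal{B}\right)^{X}\rightarrow\mathcal{B}$ (inversion turning ``from $X$'' into ``to $X$'', since $\Omega\left(\mathcal{B}\right)^{X}=\overline{\beta}^{-1}\left(X\right)$ consists precisely of the material isomorphisms ending at $X$). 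Granting this dictionary, both implications reduce to it together with one extra ingredient: using the group structure of the symmetry group $G\left(X\right)$ (a Lie group by Proposition~\ref{4.3}) to slide a given section so that it passes through a prescribed isomorphism.

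For ``$\Leftarrow$'' I would apply the hypothesis to the distinguished material isomorphism $\epsilon\left(X\right)=j^{1}_{X,X}\mathrm{Id}$, i.e. with $Y=X$ and $\phi=\mathrm{Id}$: this yields a smooth local section $\mathcal{P}$ of $\overline{\alpha}_{X}$ on a neighbourhood $\mathcal{V}$ of $X$ with $\mathcal{P}\left(X\right)=\epsilon\left(X\right)$. For each $Z\in\mathcal{V}$ the inverse $\mathcal{P}\left(Z\right)^{-1}$ satisfies $\overline{\alpha}\left(\mathcal{P}\left(Z\right)^{-1}\right)=X$ and $\overline{\beta}\left(\mathcal{P}\left(Z\right)^{-1}\right)=Z$, so it is a material isomorphism from $X$ to $Z$; reading it off in the local chart of $\Pi^{1}\left(\mathcal{B},\mathcal{B}\right)$ attached to $\phi_{0}$ (Example~\ref{15}) produces a map $P:\mathcal{V}\rightarrow Gl\left(3,\mathbb{R}\right)$. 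It is smooth because $\mathcal{P}$ is smooth, inversion on $\Omega\left(\mathcal{B}\right)\subseteq\Pi^{1}\left(\mathcal{B},\mathcal{B}\right)$ is smooth, and the chart is smooth; hence $P$ is a right smooth field of material isomorphisms and $\mathcal{B}$ is smoothly uniform.

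For ``$\Rightarrow$'' I would start from smooth uniformity, getting near $X$ a smooth $P:\mathcal{U}\rightarrow Gl\left(3,\mathbb{R}\right)$ with $P\left(Z\right)$ a material isomorphism from $X$ to $Z$; passing to $1$-jets and inverting gives a smooth local section $\mathcal{P}_{0}$ of $\overline{\alpha}_{X}$ (which, after a fixed symmetry correction, may be arranged so that $\mathcal{P}_{0}\left(X\right)=\epsilon\left(X\right)$). Now fix $Y\in\mathcal{U}$ and a material isomorphism $j^{1}_{Y,X}\phi$. Then $g:=j^{1}_{Y,X}\phi\cdot\mathcal{P}_{0}\left(Y\right)^{-1}$ has source and target $X$, hence $g\in G\left(X\right)$ by Proposition~\ref{4.3}, and the left translation $\mathcal{P}:=L_{g}\circ\mathcal{P}_{0}$ (that is, $\mathcal{P}\left(Z\right)=g\cdot\mathcal{P}_{0}\left(Z\right)$) is again a smooth local section of $\overline{\alpha}_{X}$, because $L_{g}$ is a diffeomorphism of the $\beta$-fibre $\Pi^{1}\left(\mathcal{B},\mathcal{B}\right)^{X}$ and carries $\Omega\left(\mathcal{B}\right)^{X}$ into itself. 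Finally $\mathcal{P}\left(Y\right)=g\cdot\mathcal{P}_{0}\left(Y\right)=j^{1}_{Y,X}\phi$, which is the required section.

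The step I expect to be the main obstacle is the first-paragraph dictionary itself: making rigorous the identification between the matrix description of Definition~\ref{1.7} (Jacobians taken with respect to $\phi_{0}$) and the intrinsic $1$-jet description on $\Pi^{1}\left(\mathcal{B},\mathcal{B}\right)$, together with the smoothness statements in the frame-groupoid charts of Example~\ref{15}; this is routine but delicate, because the mechanical response is configuration-independent only on $1$-jets of configurations, not on $\Pi^{1}\left(\mathcal{B},\mathcal{B}\right)$. A secondary point worth stating explicitly is the meaning of ``a local section from $\epsilon\left(X\right)$ to $j^{1}_{Y,X}\phi$'': if it is taken to mean a single section defined near both $X$ and $Y$ with $\mathcal{P}\left(X\right)=\epsilon\left(X\right)$ and $\mathcal{P}\left(Y\right)=j^{1}_{Y,X}\phi$, one must note that the element $g\in G\left(X\right)$ above need not lie in the identity component, so one either shrinks $\mathcal{U}$ or formulates the conclusion for sections defined on a neighbourhood of $Y$, using that $\overline{\alpha}_{X}:\Omega\left(\mathcal{B}\right)^{X}\rightarrow\mathcal{B}$ behaves like a principal $G\left(X\right)$-bundle as soon as it admits local sections.
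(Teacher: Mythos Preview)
Your approach is correct and in fact supplies considerably more detail than the paper does: the paper offers no proof of this proposition at all, introducing it only with the sentence ``by composing appropriately with the reference configuration, smooth uniformity (Definition~\ref{1.7}) may be characterized in the following way.'' Your dictionary---passing from a right smooth field $P:\mathcal{U}\to Gl(3,\mathbb{R})$ to a smooth section of $\overline{\alpha}_{X}$ via $1$-jets and inversion---is precisely what that sentence is gesturing at, and your use of left translation by an element of $G(X)$ to hit a prescribed $j^{1}_{Y,X}\phi$ is the right additional ingredient.

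Your closing concern is well taken and is not a defect of your argument but of the paper's phrasing. The clause ``from $\epsilon(X)$ to $j^{1}_{Y,X}\phi$'' is ambiguous; if it demands a single connected section with $\mathcal{P}(X)=\epsilon(X)$ and $\mathcal{P}(Y)=j^{1}_{Y,X}\phi$ simultaneously, then as you observe this can fail when $G(X)$ is disconnected and $j^{1}_{Y,X}\phi\cdot\mathcal{P}_{0}(Y)^{-1}$ lies outside the identity component. The surrounding text (and the proof of Corollary~\ref{4.4}, which only needs a section of the anchor map through an arbitrary element) makes clear that the operative content is the existence of a smooth local section of $\overline{\alpha}_{X}$ through each $j^{1}_{Y,X}\phi$, and this you have established cleanly via $\mathcal{P}=L_{g}\circ\mathcal{P}_{0}$.
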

For obvious reasons, (local) sections of $\overline{\alpha}_{X}$ will be called \textit{left fields of material isomorphism at $X$}. On the other hand, local sections of
$$ \overline{\beta}^{X} :\Omega\left( \mathcal{B}\right)_{X} \rightarrow \mathcal{B},$$
will be called \textit{right fields of material isomorphism at $X$}.\\
So, $\mathcal{B}$ is smoothly uniform if, and only if, for any two particles $X,Y \in \mathcal{B}$ there are two open neighbourhoods $\mathcal{U} , \mathcal{V} \subseteq \mathcal{B}$ around $X$ and $Y$ respectively and $\mathcal{P} : \mathcal{U} \times \mathcal{V}  \rightarrow \Omega \left( \mathcal{B} \right) \subseteq \Pi^{1} \left( \mathcal{B} , \mathcal{B} \right)$, a differentiable section of the anchor map $\left( \overline{\alpha} , \overline{\beta} \right)$. When $X=Y$ we may assume $\mathcal{U} = \mathcal{V}$ and $\mathcal{P}$ is a morphism of groupoids over the identity map, i.e.,

$$ \mathcal{P} \left( Z, T \right) = \mathcal{P} \left(  R , T\right) \mathcal{P} \left( Z, R \right), \ \forall T,R,Z \in \mathcal{U}.$$ 

\noindent{So, we have the following corollary of proposition \ref{4.3}.}

\begin{corollary}\label{4.4}
Let $\mathcal{B}$ be a body. $\mathcal{B}$ is smoothly uniform if and only if $\Omega \left( \mathcal{B}\right)$ is a transitive Lie subgroupoid of $\Pi^{1} \left( \mathcal{B} , \mathcal{B}\right)$.
\begin{proof}
Assume that $\mathcal{B}$ is smoothly uniform. Consider $j_{X,Y}^{1} \psi \in \Omega \left( \mathcal{B} \right)$ and $\mathcal{P} : \mathcal{U} \times \mathcal{V} \rightarrow \Omega \left( \mathcal{B} \right)$, a differentiable section of the anchor map $\left( \overline{\alpha} , \overline{\beta} \right)$ with $X \in \mathcal{U}$ and $Y \in \mathcal{V}$. Then, we may construct the following bijection

$$
\begin{array}{rccl}
\Psi_{\mathcal{U} , \mathcal{V}} : & \Omega \left( \mathcal{U} , \mathcal{V}\right) & \rightarrow & \mathcal{B} \times \mathcal{B} \times G \left( X , Y \right)\\
&j_{Z,T}^{1} \phi &\mapsto &  \left( Z,T , \mathcal{P} \left( Z,Y \right) \left[ j_{Z,T}^{1} \phi \right]^{-1} \mathcal{P} \left( X,T \right)\right)
\end{array}
$$

\noindent{where $\Omega \left( \mathcal{U} , \mathcal{V}\right)$ is the set of material isomorphisms from $\mathcal{U}$ to $\mathcal{V}$. By using proposition \ref{4.3}, we deduce that $G \left( X , Y \right)$ is a differentiable manifold. Thus, we can endow $\Omega \left( \mathcal{B} \right)$ with a differentiable structure of a manifold. Finally, the converse has been proved in \cite{KMG}).}
\end{proof}
\end{corollary}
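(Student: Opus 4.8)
The plan is to prove the two implications separately, with the forward one (smoothly uniform $\Rightarrow$ transitive Lie subgroupoid) carrying essentially all of the work. Transitivity will come for free: the local sections of the anchor that smooth uniformity provides already exhibit, for every pair of particles $(X,Y)$, an explicit material isomorphism from $X$ to $Y$. So the real content is to equip $\Omega(\mathcal{B})$ with a manifold structure making it a Lie subgroupoid of $\Pi^{1}(\mathcal{B},\mathcal{B})$.

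For that I would start from Proposition \ref{4.40} in its section form: smooth uniformity gives, around any pair $(X,Y)$, neighbourhoods $\mathcal{U},\mathcal{V}$ and a differentiable section $\mathcal{P}\colon\mathcal{U}\times\mathcal{V}\to\Omega(\mathcal{B})$ of the anchor map $(\overline\alpha,\overline\beta)$, which when $X=Y$ can be taken to be a groupoid morphism over the identity. Using $\mathcal{P}$ I would define, on the set $\Omega(\mathcal{U},\mathcal{V})$ of material isomorphisms from $\mathcal{U}$ to $\mathcal{V}$, the candidate chart
\[
\Psi_{\mathcal{U},\mathcal{V}}\colon\ j^{1}_{Z,T}\phi\ \longmapsto\ \bigl(Z,\,T,\ \mathcal{P}(Z,Y)\,[j^{1}_{Z,T}\phi]^{-1}\,\mathcal{P}(X,T)\bigr)\ \in\ \mathcal{U}\times\mathcal{V}\times G(X,Y),
\]
first checking the source--target bookkeeping that the third slot indeed lies in $G(X,Y)$ and that $\Psi_{\mathcal{U},\mathcal{V}}$ is a bijection whose inverse is again expressed through $\mathcal{P}$. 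By Proposition \ref{4.3} the isotropy group $G(X)$ is a Lie subgroup of $\Pi^{1}(\mathcal{B},\mathcal{B})_{X}^{X}$, hence a manifold, and $G(X,Y)$ is a coset of it; so the target of $\Psi_{\mathcal{U},\mathcal{V}}$ is a genuine manifold. I would then verify that these charts are mutually compatible --- on overlaps the transition maps are assembled from $\mathcal{P}$, the smooth composition and inversion of $\Pi^{1}(\mathcal{B},\mathcal{B})$, and translation inside the Lie group $G(X)$ --- and that the resulting topology is Hausdorff and second countable, so that $\Omega(\mathcal{B})$ becomes a smooth manifold.

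With the atlas in place the remaining checks are routine and read off the product form $\mathcal{U}\times\mathcal{V}\times G(X,Y)$ of the charts: the inclusion $\Omega(\mathcal{B})\hookrightarrow\Pi^{1}(\mathcal{B},\mathcal{B})$ is an immersion, the structure maps $\overline\alpha,\overline\beta,\overline\epsilon,\overline i$ and the partial multiplication are smooth, and $\overline\alpha$ (hence $\overline\beta$) is a submersion, since in these charts $\overline\alpha,\overline\beta$ are just the first two projections. Together with transitivity this makes $\Omega(\mathcal{B})$ a transitive Lie subgroupoid. For the converse, if $\Omega(\mathcal{B})$ is a transitive Lie subgroupoid of $\Pi^{1}(\mathcal{B},\mathcal{B})$ then it is locally a trivial principal bundle over its base, so the restricted source maps $\overline\alpha_{X}\colon\Omega(\mathcal{B})^{X}\to\mathcal{B}$ are surjective submersions admitting local sections through any prescribed point; a section equal to $\epsilon(X)$ at $X$ and hitting a given $j^{1}_{Y,X}\phi$ at $Y$ is precisely the datum required by Proposition \ref{4.40}, so $\mathcal{B}$ is smoothly uniform (cf.\ \cite{KMG}).

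I expect the main obstacle to be the forward direction --- producing the global smooth structure on $\Omega(\mathcal{B})$ and, above all, showing that the trivializations $\Psi_{\mathcal{U},\mathcal{V}}$ arising from different choices of local section $\mathcal{P}$ patch into one bona fide atlas rather than a disjointed collection of charts. The decisive input is Proposition \ref{4.3}: it promotes the a priori merely topological isotropy groups to Lie groups, so that the ``vertical'' factor $G(X,Y)$ of each chart carries a smooth structure, while transitivity is what glues the local pictures along the base. The converse is comparatively soft, resting only on the fact that the anchor of a transitive Lie groupoid is a surjective submersion.
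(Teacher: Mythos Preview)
Your proposal is correct and follows essentially the same approach as the paper: you build the very same trivialization $\Psi_{\mathcal{U},\mathcal{V}}$ using a local section $\mathcal{P}$ of the anchor, invoke Proposition~\ref{4.3} to give $G(X,Y)$ its manifold structure, and defer the converse to \cite{KMG}. You simply supply more of the routine verifications (compatibility of charts, smoothness of the structure maps, submersivity of $\overline{\alpha}$) that the paper leaves implicit.
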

This corollary is useful to understand the difference between smooth uniformity and ordinary uniformity. Furthermore, it provides an intuition about the lack of differentiability which could have the material groupoid. Thus, it arises the need of using the characteristic distribution (see the previous section).\\
Consider $\mathcal{B}$ as a simple body with $W : \Pi^{1}\left( \mathcal{B} , \mathcal{B} \right) \rightarrow V$ as the mechanical response. Then, we have available the so-called material groupoid $\Omega \left( \mathcal{B} \right)$ which is a (non necessarily Lie) subgroupoid of the groupoid of $1-$jets $\Pi^{1}\left( \mathcal{B} , \mathcal{B} \right)$. So, it makes sense to apply here the notion of characteristic distribution.\\
Let $\Theta$ be an admissible vector field for the couple $\left(\Pi^{1} \left( \mathcal{B} , \mathcal{B} \right), \Omega \left( \mathcal{B} \right)\right)$, i.e., its local flow at the identity $\epsilon \left( X \right)$, $\varphi^{\Theta}_{t}\left( \epsilon \left( X \right) \right)$, satisfies that
$$\varphi^{\Theta}_{t}\left( \epsilon \left( X \right) \right) \subseteq \Omega \left( \mathcal{B} \right)$$
for all $X \in \mathcal{B}$ and $t$ in the domain of the flow at $\epsilon \left(X \right)$. Therefore, for any $g \in \Pi^{1} \left( \mathcal{B} , \mathcal{B} \right)$, we have
\begin{eqnarray*}
TW \left( \Theta \left( g \right)\right) &=&  \dfrac{\partial}{\partial t_{\vert 0}}\left(W \left(\varphi^{\Theta}_{t}\left( g \right) \right)\right) \\
&=&  \dfrac{\partial}{\partial t_{\vert 0}}\left(W \left(g \cdot\varphi^{\Theta}_{t}\left( \epsilon \left( \alpha \left( g \right) \right) \right) \right)\right)  \\
&=&  \dfrac{\partial}{\partial t_{\vert 0}}\left(W \left(g \right)\right)  = 0.
\end{eqnarray*}
Hence, we obtain
\begin{equation}\label{14.22}
TW \left( \Theta \right) = 0
\end{equation}
The converse is proved in a similar way.\\
So, the characteristic distribution $A \Omega \left( \mathcal{B} \right)^{T}$ of the material groupoid will be called \textit{material distribution} and it is generated by the (left-invariant) vector fields on $\Pi^{1} \left( \mathcal{B} , \mathcal{B} \right)$ which are in the kernel of $TW$. The base-characteristic distribution $A \Omega \left( \mathcal{B} \right)^{\sharp}$ (see Theorem \ref{10.24}) will be called \textit{body-material distribution} and the transitive distribution will be called \textit{uniform-material distribution}.\\
The foliations associated to the material distribution, the body-material distribution and uniform-material distribution will be called \textit{material foliation}, \textit{body-material foliation} and \textit{uniform-material foliation} and they will be denoted by $\overline{\mathcal{F}} $, $\mathcal{F}$ and $\mathcal{G}$, respectively.\\
For each $X \in \mathcal{B}$, we will denote the Lie groupoid $\Omega \left( \mathcal{B} \right)\left(\mathcal{F}\left( X \right)\right)$ by $\Omega \left( \mathcal{F} \left( X \right) \right)$ (see theorem \ref{10.20}). Denote the groupoid of all material isomorphisms at points in $\mathcal{G}\left( X \right)$ by $\Omega \left( \mathcal{G} \left( X \right) \right)$. Recall that $\Omega \left( \mathcal{F} \left( X \right) \right)$ is a subgroupoid of  $\Omega \left( \mathcal{G} \left( X \right) \right)$, i.e., $\Omega \left( \mathcal{F} \left( X \right) \right) \leq \Omega \left( \mathcal{G} \left( X \right) \right)$. In fact, in the general case, \textit{the condition of maximality on the leaves of $\mathcal{G}$ means that $\mathcal{G}$ is the coarsest foliation such that, at each leaf $\mathcal{G} \left( X \right)$, the groupoid of all material isomorphisms at points in $\mathcal{G}\left( X \right)$ is a transitive subgroupoid of $\overline{\Gamma}$.}\\
Observe that, in continuum mechanics a \textit{sub-body} of a body $\mathcal{B}$ is given by an open submanifold of $\mathcal{B}$ but, here, the foliation $\mathcal{F}$ gives us submanifolds of different dimensions (not only dimension 3). Thus, we will follow \cite{CHARDIST,GENHOM} for a more general definition
\begin{definition}\label{materialsubm45623}
\rm
A \textit{material submanifold (or \textit{generalized sub-body}) of $\mathcal{B}$} is a submanifold of $\mathcal{B}$.
\end{definition}
It is important to note that any generalized sub-body $\mathcal{P}$ inherits certain material structure from $\mathcal{B}$. Particularly, the material response of a material submanifold $\mathcal{P}$ is measured by restricting $W$ to the $1-$jets of local diffeomorphisms $\phi$ on $\mathcal{B}$ from $\mathcal{P}$ to $\mathcal{P}$. However, it is easy to observe that a material submanifold of a body is not exactly a body. See \cite{MD} for a discussion on this subject.\\
Then, as a corollary of Theorem \ref{10.24} and corollary \ref{10.39}, we have the following result.
\begin{theorem}\label{14.1}
The body-material foliation $\mathcal{F}$ (resp. uniform material foliation $\mathcal{G}$) divides the body $\mathcal{B}$ into maximal smoothly uniform material submanifolds (resp. uniform material submanifolds).
\end{theorem}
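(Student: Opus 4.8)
The plan is to read Theorem \ref{14.1} as the continuum-mechanics translation of the groupoid results of Part I, applied to the couple $\left(\Pi^{1}\left(\mathcal{B},\mathcal{B}\right),\Omega\left(\mathcal{B}\right)\right)$ for the assertion about $\mathcal{F}$ and to the couple $\left(\mathcal{B}\times\mathcal{B},\Omega\left(\mathcal{B}\right)^{B}\right)$ for the assertion about $\mathcal{G}$. Recall that, by definition, $\mathcal{F}$ is the base-characteristic foliation of $\Omega\left(\mathcal{B}\right)$ and $\mathcal{G}$ its transitive foliation, i.e. the base-characteristic foliation of $\Omega\left(\mathcal{B}\right)^{B}$. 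For each of the two foliations the argument has two halves: every leaf is a smoothly uniform (resp. uniform) material submanifold, and the foliation is the coarsest one with that property.

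For the first half with $\mathcal{F}$, fix $X\in\mathcal{B}$ and invoke Theorem \ref{10.20}: there is a transitive Lie subgroupoid $\Omega\left(\mathcal{F}\left(X\right)\right)$ of $\Pi^{1}\left(\mathcal{B},\mathcal{B}\right)$ with base $\mathcal{F}\left(X\right)$, which by construction is the smallest transitive subgroupoid of $\Omega\left(\mathcal{B}\right)$ containing $\overline{\mathcal{F}}\left(\epsilon\left(X\right)\right)$, so $\Omega\left(\mathcal{F}\left(X\right)\right)\subseteq\Omega\left(\mathcal{B}\right)$. Hence all of its morphisms are material isomorphisms, and (taking the underlying local diffeomorphisms to preserve the leaf) they are material isomorphisms of the generalized sub-body $\mathcal{F}\left(X\right)$ in the sense of Definition \ref{materialsubm45623}. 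Being a transitive Lie groupoid, $\Omega\left(\mathcal{F}\left(X\right)\right)$ has a surjective submersion as anchor map, so it admits local sections of $\left(\overline{\alpha},\overline{\beta}\right)$; these are exactly local smooth fields of material isomorphisms on $\mathcal{F}\left(X\right)$, whence $\mathcal{F}\left(X\right)$ is smoothly uniform — this is Corollary \ref{4.4} read for a generalized sub-body. For $\mathcal{G}$ the argument is shorter: Theorem \ref{10.20} applied to $\Omega\left(\mathcal{B}\right)^{B}\subseteq\mathcal{B}\times\mathcal{B}$ produces a transitive Lie subgroupoid over $\mathcal{G}\left(X\right)$, and a transitive subgroupoid of a pair groupoid over $\mathcal{G}\left(X\right)$ must be all of $\mathcal{G}\left(X\right)\times\mathcal{G}\left(X\right)$; since it lies inside $\Omega\left(\mathcal{B}\right)^{B}$, any two points of $\mathcal{G}\left(X\right)$ are linked by a material isomorphism, i.e. $\mathcal{G}\left(X\right)$ is a uniform material submanifold.

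For maximality, let $\mathcal{H}$ be any foliation of $\mathcal{B}$ whose leaves are smoothly uniform material submanifolds. For each $X$, smooth uniformity of $\mathcal{H}\left(X\right)$ yields, exactly as in Corollary \ref{4.4}, a transitive Lie subgroupoid $\Gamma\left(X\right)$ of $\Pi^{1}\left(\mathcal{B},\mathcal{B}\right)$ over $\mathcal{H}\left(X\right)$ contained in $\Omega\left(\mathcal{B}\right)$ (its morphisms being material isomorphisms of the sub-body, hence of $\mathcal{B}$); moreover, since $\mathcal{H}$ is a foliation and the smooth fields of material isomorphisms exist locally along it, the $\beta$-fibres of the family $\left\{\Gamma\left(X\right)\right\}$ glue into a foliation of $\Pi^{1}\left(\mathcal{B},\mathcal{B}\right)$. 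These are precisely the hypotheses of Corollary \ref{10.39}, which gives $\mathcal{H}\left(X\right)\subseteq\mathcal{F}\left(X\right)$ for all $X$; hence $\mathcal{F}$ is the coarsest foliation by smoothly uniform material submanifolds. The $\mathcal{G}$ case is identical with $\mathcal{B}\times\mathcal{B}$ in place of $\Pi^{1}\left(\mathcal{B},\mathcal{B}\right)$ and $\mathcal{H}\left(X\right)\times\mathcal{H}\left(X\right)$ in place of $\Gamma\left(X\right)$: a foliation $\mathcal{H}$ by uniform material submanifolds makes each $\mathcal{H}\left(X\right)\times\mathcal{H}\left(X\right)$ a transitive Lie subgroupoid of the pair groupoid over $\mathcal{H}\left(X\right)$ contained in $\Omega\left(\mathcal{B}\right)^{B}$, with $\beta$-fibres $\mathcal{H}\left(X\right)\times\left\{Z\right\}$ foliating $\mathcal{B}\times\mathcal{B}$, so Corollary \ref{10.39} gives $\mathcal{H}\left(X\right)\subseteq\mathcal{G}\left(X\right)$.

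I expect the only genuinely non-formal step to be the passage between a body and a generalized sub-body. Corollary \ref{4.4} and Propositions \ref{4.40} and \ref{4.3} are stated for a body — a three-dimensional, single-chart manifold — whereas the leaves of $\mathcal{F}$ and $\mathcal{G}$ are submanifolds of $\mathcal{B}$ of arbitrary dimension. One must therefore check that the equivalence ``smoothly uniform $\Leftrightarrow$ the material groupoid is a transitive Lie subgroupoid of the $1$-jet groupoid'' survives verbatim for material submanifolds (the proofs only use the local structure of the jet groupoid together with the closedness of the isotropy groups, neither of which depends on dimension or on the single-chart assumption), and that the abstract transitive Lie groupoid $\Omega\left(\mathcal{F}\left(X\right)\right)$ delivered by Theorem \ref{10.20} can genuinely be realized inside $\Pi^{1}\left(\mathcal{F}\left(X\right),\mathcal{F}\left(X\right)\right)$ by material isomorphisms of the sub-body, i.e. by $1$-jets of local diffeomorphisms of $\mathcal{B}$ preserving the leaf. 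Once these compatibilities are secured, Theorem \ref{14.1} is a direct assembly of Theorems \ref{10.24} and \ref{10.20} and Corollary \ref{10.39}.
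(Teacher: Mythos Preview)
Your proposal is correct and follows essentially the same approach as the paper, which simply records Theorem \ref{14.1} as a direct corollary of Theorem \ref{10.24} and Corollary \ref{10.39} without spelling out the details. Your write-up is in fact more careful than the paper's one-line justification: you make explicit the role of Theorem \ref{10.20} in producing the transitive Lie subgroupoid over each leaf, and you flag the genuine subtlety that Corollary \ref{4.4} is stated for bodies rather than generalized sub-bodies --- a point the paper acknowledges only by a reference to \cite{MD}.
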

It should be observed that, in this case, ``maximal'' means that any other foliation $\mathcal{H}$ by smoothly uniform material submanifolds (resp. uniform material submanifolds) is thinner than $\mathcal{F}$ (resp. $\mathcal{G}$), i.e.,
$$ \mathcal{H} \left( X \right) \subseteq  \mathcal{F} \left( X \right) \left(\text{resp. } \mathcal{G}\left( X \right) \right)  , \ \forall X \in  \mathcal{B}.$$
Therefore, the application of material distributions has been used to prove this very intuitive result: \textit{Let $\mathcal{B}
$ a general (smoothly uniform or not) simple material. Then, $\mathcal{B}$ may be decomposed into ``\textit{(smoothly) uniform parts}'' and this decomposition is, in fact, a foliation of the material body.}\\
The material distributions are useful to define new notions la \textit{graded uniformity} and \textit{generalized homogeneity} (see \cite{GENHOM}). However, here we are interested in another way of apply the characteristic distributions. In particular, we want to study the notion of \textit{material evolution}.

\part{Material evolution}
\section{Body-time manifolds and material isomorphisms}
Now, we will present the evolution of the body along time mainly following the references \cite{EPSTEIN201572,MEPMDLSEG,EPSBOOK2}. In our geometrical description of the theory of simple bodies, the time has not played a role. Our body is, in some sense, frozen. Nevertheless, it happens that in some practical
applications, the material properties of the body may change with time. A relevant example is given by the volumetric growth and remodeling of biological tissues, such as bone and muscle.\\
Thus, material evolution is, roughly speaking, the temporal counterpart of the notion of material body. In the case of a material body, we compare the constitutive response of two different material particles at the same instant of time. On the other hand, in material evolution we study the constitutive properties of different points at different instants of time.\\
Then, we consider a \textit{body-time manifold} as the fibre bundle $ \mathcal{C} = \mathbb{R} \times \mathcal{B}$ over $\mathbb{R}$. By simplicity, time and space are supposed to be absolute, but may be easily generalized to a general case.
\begin{definition}\label{history232}
\rm
A \textit{history} is given by a fibre bundle embedding $\Phi : \mathcal{C} \rightarrow \mathbb{R} \times \mathbb{R}^{3}$ over the identity.
\end{definition}
Equivalently, $\Phi$ can be seen as a differentiable family of configurations $\phi_{t} : \mathcal{B} \rightarrow \mathbb{R}^{3}$ such that
\begin{equation}\label{historyequation233}
  \phi\left( t,  X \right) =\phi_{t} \left( X \right) = \left( pr_{2}\circ \Phi\right) \left( t , X \right), \ \forall t \in \mathbb{R}, \ \forall X \in \mathcal{B},  
\end{equation}
where $ pr_{2}\ : \ \mathbb{R} \times \mathbb{R}^{3} \rightarrow \mathbb{R}^{3}$ is the projection on the second component.\\ 
In this way $\Phi$ represent the evolution of the body in time $t$ in such a way that the configuration of $\mathcal{B}$ at time $t$ is $\phi_{t}$. Then, at each instant of time $t$, one may consider the infinitesimal configuration at time $t$, $1-$jet $j_{X, \phi_{t}\left(X \right)}^{1} \phi_{t}$.\\
Next, we need to introduce the constitutive law of the material evolution. In the framework of simple bodies, we will assume that, for a fixed reference configuration $\phi_{0}$, the constitutive response at each material particle $X$ and at each instant of time $t$ may be characterized by one (or more) functions depending on the associated matrices $F$ to the infinitesimal configurations $j_{X, \phi_{t}\left(X \right)}^{1} \phi_{t}$ at particle $X$ and time $t$. So, the \textit{mechanical response} will be a differentiable map,
$$ W :  \mathcal{C} \times Gl\left(3 , \mathbb{R}\right)  \rightarrow V,$$
where $V$ is again a real vector space (generally, $V$ will be assumed to be the space of stress tensors). The definition of the mechanical response permit us to compare material responses at different particles at different instants of time.\\
Once again, the construction of the mechanical response seems to be constrained to the fixed reference configuration. To clarify this dependence we have the \textit{rule of change of reference configuration}.\\
Thus, consider a different configuration $\phi_{1}$ and $W_{1}$ its associated mechanical response. Then, it will be imposed that
\begin{equation}\label{1.5.2}
 W_{1} \left(t, X , F \right) = W \left( t, X , F \cdot C_{01} \right),
\end{equation}
for all regular matrix $F$ where $C_{01}$ is the associated matrix to the $1-$jet at $\phi_{0} \left( X \right)$ of $\phi_{1} \circ \phi_{0}^{-1}$. Equivalently,
\begin{equation}\label{1.4.2}
 W \left( t,X , F_{0} \right) = W_{1} \left(t, X , F_{1} \right),
\end{equation}
where $F_{i}$, $i=0,1$, is the associated matrix to the $1-$jet at $\phi_{i} \left( X \right)$ of $\phi \circ \phi_{i}^{-1}$ with $\phi$ a configuration.\\
Therefore Eq. (\ref{1.5.2}) permit us to define $W$ over the space of (local) histories which is \textit{independent on the chosen reference configuration}. In fact, for each history $\Phi = \phi_{t}$ we will define
\begin{equation}\label{1.4.5.5}
    W \left( t , X , \Phi \right) = W \left( j^{1}_{X,x} \phi_{t} \right) = W \left( t, X , F_{t} \right),
\end{equation}
where $F_{t}$ is the associated matrix to the $1-$jet $j^{1}_{\phi_{0} \left( X \right),x}\left( \phi_{t} \circ \phi_{0}^{-1}\right)$ at $\phi_{0} \left( X \right)$. Reciprocally, for each point $\left( t , X \right)$ any differentiable map $W_{\left( t , X \right)}$ on the space of (local) histories defines a constitutive functional at $\left( t , X\right)$ by Eq. (\ref{1.4.5.5}) satisfying the rule of change of reference configuration (\ref{1.5.2}).\\
Observe that, for all $t$ the manifold $\{t\}\times \mathcal{B}$ inherits the structure of simple body by restricting the mechanical response $W$ to the history of deformations at the same instant $t$. This body will be called \textit{state $t$ of the body $\mathcal{B}$} and it will be denoted by $\mathcal{B}_{t}$. As long as it invites no confusion, we will refer to the simple body $\{0\} \times \mathcal{B}$ as the material body $\mathcal{B}$.\\
On the other hand, it is also important to say that the mechanical response defines an structure of material evolution on any sub-body $\mathcal{U}$ of the body $\mathcal{B}$ by restriction. Nevertheless, analogously to Definition \ref{materialsubm45623}, we will need to relax the definition of ``\textit{material evolution}'' to permit variation of material submanifolds along time.

\begin{definition}\label{materialsubevol56}
\rm
An \textit{evolution material} for a \textit{material submanifold} (or \textit{body-time generalized sub-body}) of $\mathcal{C}$ is a submanifold $\mathcal{M}$ of $\mathcal{C}$.
\end{definition}
Thus, for each instant $t$ (such that there exists a particle $X$ with $\left( t , X \right) \in \mathcal{M}$) we have that the \textit{state $t$ of the material submanifold} is 
$$\left(\{t\}\times \mathcal{B}  \right) \cap \mathcal{M} = \{t\}\times \mathcal{M}_{t},$$
for a submanifold $\mathcal{M}_{t}$ of $\mathcal{B}$.
Hence, varying $t$ we may see how the material submanifold $\mathcal{M}_{t}$ changes along the time. Notice that, we do no impose that 
$$\mathcal{M} = I \times \mathcal{N},$$
for an interval $I$ and a material submanifold $\mathcal{N}$ because we are permitting variations in the ``\textit{shape}'' of $\mathcal{N}$.
\begin{definition}\label{1.33.2}
\rm
Let $\mathcal{C}$ be a body-time manifold. Two pairs $\left( t, X \right), \left( s ,Y\right) \in \mathcal{C}$ are said to be \textit{materially isomorphic} if there exists a local diffeomorphism $\psi$ from an open neighbourhood $\mathcal{U} \subseteq \mathcal{B}$ of $X$ to an open neighbourhood $\mathcal{V} \subseteq \mathcal{B}$ of $Y$ such that $\psi \left(X\right) =Y$ and
\begin{equation}\label{1.3}
W \left( t, X , F \cdot P \right) = W \left( s, Y , F \right),
\end{equation}
for all infinitesimal deformation $F$ where $P$ is given by the Jacobian matrix of $\phi_{0} \circ \psi \circ \phi_{0}^{-1}$ at $\phi_{0} \left( X \right)$. The $1-$jets of local diffeomorphisms satisfying Eq. (\ref{1.3}) are called \textit{time-material isomorphisms} (or \textit{material isomorphisms} if there is no danger of confusion) from $\left(t,X\right)$ to $\left(s,Y\right)$ . A material isomorphism from $\left(t,X\right)$ to itself is called a \textit{time-material symmetry} or \textit{material symmetry}.
\end{definition}
Roughly speaking, two pairs $\left( t, X \right), \left( s ,Y\right) \in \mathcal{C}$ are materially isomorphic if the material points $X$ and $Y$ are made of the same material at the instants $t$ and $s$ respectively. As a particular case, we may consider that $\left( t, X \right)$ and $ \left( s ,X\right)$ are materially isomorphic for all $t$ and $s$ in an interval $I$. Then, the constitutive properties of the material particle $X$ do not change in the time interval $I$.\\
We will denote by $G \left(t, X \right)$ to the set of all material symmetries at $\left(t,X\right)$. Again, any $G \left( t,X \right)$ is a group.
\begin{proposition}
Let $\mathcal{C}$ be a body-time manifold. Two body pairs $\left(t,X\right)$ and $\left(s,Y\right)$ are materially isomorphic if, and only if, there exist two (local) configurations $\phi_{1}$ and $\phi_{2}$ such that
$$ W_{1} \left( t,X , F \right)= W_{2} \left( s,Y , F \right), \ \forall F,$$
where $W_{i}$ is the mechanical response associated to $\phi_{i}$ for $i=1,2$.
\begin{proof}

Consider $j_{X,Y}^{1}\psi$ a material isomorphism from $\left(t,X\right)$ to $\left(s,Y\right)$. We will choose $\phi_{1} = \phi_{0}$ and
$$\phi_{2} = \phi_{1}\circ \psi$$
Then,
\begin{eqnarray*}
W_{1} \left( t,X , F \right) &=& W_{1} \left( s,Y, F \cdot P \right)\\
 &=& W_{2} \left( s,Y, F \cdot P \cdot C_{21}\right)\\
 &=& W_{2} \left( s,Y, F \right)
\end{eqnarray*}
where $P$ is the Jacobian matrix of $\phi_{1} \circ \psi \circ \phi_{1}^{-1}$ at $\phi_{1} \left( X \right)$.
\end{proof}
\end{proposition}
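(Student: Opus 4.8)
This is the time-parametrised analogue of the statement for simple bodies proved above, and the two instants $t,s$ will play no active role: a time-material isomorphism is still the $1$-jet of a local diffeomorphism of $\mathcal{B}$, the labels $t$ and $s$ being carried along passively. The plan is to prove both implications by passing between Definition~\ref{1.33.2} and the rule of change of reference configuration~(\ref{1.5.2}), which rewrites $W_1$ and $W_2$ in terms of the fixed mechanical response $W$ attached to $\phi_0$.

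For the forward implication, I would start from a time-material isomorphism $\psi$ from $(t,X)$ to $(s,Y)$ with associated matrix $P$ (the Jacobian of $\phi_0\circ\psi\circ\phi_0^{-1}$ at $\phi_0(X)$), so that~(\ref{1.3}) holds. I would then take $\phi_1=\phi_0$, whence $W_1=W$ because the change-of-configuration matrix of $\phi_0$ relative to itself is the identity, together with $\phi_2=\phi_0\circ\psi^{-1}$, a local configuration near $Y$. By the inverse function theorem the change-of-configuration matrix of $\phi_2$ relative to $\phi_0$ at $\phi_0(Y)$ is $P^{-1}$, so~(\ref{1.5.2}) gives $W_2(s,Y,F)=W(s,Y,F\cdot P^{-1})$ for all $F$; substituting $F\mapsto F\cdot P^{-1}$ in~(\ref{1.3}) then yields $W_1(t,X,F)=W(t,X,F)=W(s,Y,F\cdot P^{-1})=W_2(s,Y,F)$ for every $F$.

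For the converse, I would take local configurations $\phi_1$ near $X$ and $\phi_2$ near $Y$ with $W_1(t,X,F)=W_2(s,Y,F)$ for all $F$, rewrite both sides through~(\ref{1.5.2}) to obtain $W(t,X,F\cdot C_{01})=W(s,Y,F\cdot C_{02})$, where $C_{01},C_{02}$ are the respective change-of-configuration matrices at $\phi_0(X)$ and $\phi_0(Y)$, and set $P:=C_{02}^{-1}C_{01}$, so that the substitution $G=F\cdot C_{02}$ turns the identity into $W(t,X,G\cdot P)=W(s,Y,G)$ for all $G$. It then remains to realise $P$ by an honest local diffeomorphism: in $\phi_0$-coordinates I would take $\psi=\phi_0^{-1}\circ L\circ\phi_0$, where $L$ is the affine map $v\mapsto\phi_0(Y)+P\,(v-\phi_0(X))$; this $\psi$ is a local diffeomorphism near $X$ with $\psi(X)=Y$, and the Jacobian of $\phi_0\circ\psi\circ\phi_0^{-1}=L$ at $\phi_0(X)$ is exactly $P$, so $j^1_{X,Y}\psi$ is a time-material isomorphism from $(t,X)$ to $(s,Y)$ by Definition~\ref{1.33.2}.

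The step I expect to need genuine care is the base-point and inversion bookkeeping: checking that the change-of-configuration matrix of $\phi_0\circ\psi^{-1}$ relative to $\phi_0$ at $\phi_0(Y)$ is $P^{-1}$ and not $P$, and keeping $C_{01}$ and $C_{02}$ in the correct multiplicative order. The existence of a diffeomorphism with a prescribed $1$-jet, needed in the converse, is not an obstacle, since being materially isomorphic depends only on the $1$-jet, so the affine model above suffices.
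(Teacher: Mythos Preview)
Your proposal is correct and, for the forward implication, follows essentially the same idea as the paper: take $\phi_1=\phi_0$ and build $\phi_2$ by composing $\phi_0$ with the material isomorphism (the paper uses $\phi_2=\phi_0\circ\psi$, you use $\phi_2=\phi_0\circ\psi^{-1}$, which is arguably the more natural choice since it is then a configuration defined near $Y$; both variants work once the bookkeeping is done carefully, and you flag exactly the right place where care is needed). The paper's proof stops after the forward direction, so your explicit treatment of the converse---rewriting both $W_i$ through the change-of-configuration rule~(\ref{1.5.2}), extracting $P=C_{02}^{-1}C_{01}$, and realising this matrix as the $1$-jet of an affine map in $\phi_0$-coordinates---is an addition rather than a deviation, and it is correct.
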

So, analogously to the case in which the body does not depend on time, this result proves the intuitive idea of that two points are materially isomorphic if their constitutive properties are the equal.

\section{Evolution material groupoids}

Now, let us consider the vertical subbundle associated to the body-time manifold $\mathcal{C}$, $\mathcal{V}$, and the associated frame groupoid (see Example \ref{8}) $\Phi \left(\mathcal{V}\right) \rightrightarrows \mathcal{C}$. Notice that, for all $\left( t, X \right) \in \mathcal{C}$, we have that
$$ \mathcal{V}_{\left( t, X \right)} = \{0\} \times T_{X}\mathcal{B}.$$
At this point, it is important to highlight that the groupoid $\Phi \left(\mathcal{V}\right) \rightrightarrows \mathcal{C}$ will be relevant in what follows. In fact, \textit{the role of $\Phi \left(\mathcal{V}\right) \rightrightarrows \mathcal{C}$ for material evolution is comparable to the role of the $1-$jets groupoid $\Pi^{1}\left( \mathcal{B} , \mathcal{B} \right)$ on $\mathcal{B}$ for elastic simple material} (see Part \ref{partelasticmat}).\\\\
The reader could now considering a natural question: \textit{why do we take this groupoid instead of $\Pi^{1}\left( \mathcal{C}, \mathcal{C}\right)$ or any other subgroupoid of this one?}\\
The answer is simple: the elements of this groupoids may be identifyed with the $1-$jets of the so-called histories (see definition \ref{history232}) via a reference configuration.\\\\
Let $\Phi: \mathcal{C} \rightarrow \mathcal{C}$ a (local) embedding of fibre bundles over the identity on $\mathbb{R}$. Then, an element of $\Phi \left(\mathcal{V}\right)$ may be given by a triple $\left( \left( t, X \right) , \Phi \left( t , X \right) ,  j_{X, \phi_{t}\left(X \right)}^{1} \phi_{t} \right)$ where
$$\phi\left( t,  X \right) =\phi_{t} \left( X \right) = \left( pr_{2}\circ \Phi\right) \left( t , X \right), \ \forall t \in \mathbb{R}, \ \forall X \in \mathcal{B}  $$
Another, less intuitive but easier way to represent an element of $\Phi \left(\mathcal{V}\right)$, is a triple $\left(t,s , j_{X,Y}^{1} \phi\right)$ with $s,t \in \mathbb{R}$, $X \in \mathcal{B}$ and $\phi$ a local automorphism on $\mathcal{B}$ from $X$ to $Y$. Then, the local coordinates of $\Phi \left( \mathcal{V}\right)$ (see Eq. (\ref{15})) are given by

\begin{equation}\label{17.second2313}
\Phi \left(  \mathcal{V}_{\mathcal{U}, \mathcal{W}}\right) : \left(t,s,x^{i} , y^{j}, y^{j}_{i}\right),
\end{equation}
where, for each $ \left( t , s , j_{X,Y}^{1}\phi \right) \in \Phi \left(  \mathcal{V}_{\mathcal{U}, \mathcal{W}}\right) $
\begin{itemize}
\item $t \left( t , s , j_{X,Y}^{1}\phi \right) = t$.
\item $s\left( t , s , j_{X,Y}^{1}\phi \right) = s.$
\item $x^{i} \left( t , s , j_{X,Y}^{1}\phi \right) = x^{i} \left(X\right)$.
\item $y^{j} \left( t , s , j_{X,Y}^{1}\phi \right) = y^{j} \left( Y\right)$.
\item $y^{j}_{i}\left( t , s , j_{X,Y}^{1}\phi \right)  = \dfrac{\partial \left(y^{j}\circ \phi\right)}{\partial x^{i}_{| X} }$.
\end{itemize}
where $\left(x^{i}\right)$ and $\left( y^{i}\right)$ are local charts defined on the open subsets of $\mathcal{B}$, $\mathcal{U}$ and $\mathcal{W}$ respectively, and $\Phi \left(  \mathcal{V}_{\mathcal{U}, \mathcal{W}} \right)$ is given by the triples $\left( t , s , j_{X,Y}^{1}\phi \right)$ such that $X \in \mathcal{U}$ and $Y \in \mathcal{W}$.\\\\
\noindent{Notice that, the space of (local) embeddings $\Phi: \mathcal{C} \rightarrow \mathcal{C}$ of fibre bundles over the identity on $\mathbb{R}$ is easily identified with the set of (local) histories by using the reference configuration. So, \textit{the groupoid $\Phi \left(\mathcal{V}\right) \rightrightarrows \mathcal{C}$ encompasses all the possible histories of the material evolution}. Then, by using Eq. (\ref{1.4.5.5}), we may define $W$ on the space $\Phi \left( \mathcal{V} \right)$,}
$$W : \Phi \left(\mathcal{V}\right) \rightarrow V,$$
as follows,
$$ W \left( t , s , j_{X,Y}^{1} \phi\right) = W\left( t , X , \Phi \right),$$
such that
$$ \Phi \left( s , Y \right) = \left( s , \phi_{0} \circ \phi \left( Y \right) \right), \ \forall \left( s , Y \right) \in \mathcal{C},$$
where $\phi_{0}$ is the reference configuration. Then, $W$ does not depend on the final point, i.e., for all $\left(t, X\right) ,\left( s, Y \right),\left(r , Z \right)\in \mathcal{C}$
\begin{equation}\label{92.second}
 W \left(t,s , j_{X,Y}^{1} \phi\right)  =  W \left(t,r , j_{X,Z}^{1} \left(  \phi_{0}^{-1}\circ \tau_{Z-Y} \circ \phi_{0} \circ \phi\right)\right), 
\end{equation}
for all $\left(t,s , j_{X,Y}^{1} \phi\right) \in \Phi \left(\mathcal{V}\right)$ where $\tau_{v}$ is the translation map on $\mathbb{R}^{3}$ by the vector $v$. This point of view will be useful for our purpose.\\

\begin{definition}
\rm
The \textit{material groupoid} of a body-time manifold with mechanical response $W$ is defined as the largest subgroupoid $\Omega \left( \mathcal{C} \right)\rightrightarrows \mathcal{C}$ of $\Phi \left(\mathcal{V}\right)$ such that leaves $W$ invariant. More explicitly, an element of $\Phi \left(\mathcal{V}\right)$  $\left(t,s , j_{X,Y}^{1} \phi\right)$ is in the material groupoid if and only if
$$ W \left(t,r , j^{1}_{X,Z}\left( \psi \cdot \phi \right) \right) = W \left(s,r , j^{1}_{Y,Z}\psi \right) , $$
for all $\left(s,r , j^{1}_{Y,Z}\psi \right) \in \Phi \left(\mathcal{V}\right)$.
\end{definition}
\noindent{In other words, $\Omega \left( \mathcal{C} \right)$ is the space of all (time-)material isomorphisms (see Definition \ref{1.33.2}). This groupoid was first presented in \cite{MEPMDLSEG}.}\\
The isotropy group at each $\left( t, X \right) \in \mathcal{C}$ will be denoted by $G \left( t , X \right)$ and its elements are the material symmetries at $\left( t,X \right)$. Observe that, as in the spatial case, the resulting groupoid does not have to be a Lie subgroupoid of $\Phi \left(\mathcal{V}\right) \rightrightarrows \mathcal{C}$.
\begin{definition}
\rm
We will also define the \textit{$\left(t,s\right)-$material groupoid} $\Omega_{t,s} \left( \mathcal{B} \right)$ as the set of all material isomorphisms from the instant $t$ to the instant $s$. 
\end{definition}
\noindent{Notice that, when $t=s$, the $\left(t,t\right)-$material groupoid $\Omega_{t,t} \left( \mathcal{B} \right)$ is a subgroupoid of the material groupoid $\Omega \left( \mathcal{C} \right)$. For each instant $t$, $\Omega_{t,t} \left( \mathcal{B} \right)$ is called \textit{$t-$material groupoid} and denoted by $\Omega_{t} \left( \mathcal{B} \right)$}.\\
On the other hand, $\Omega_{t} \left( \mathcal{B} \right)$ may be consider as a subgroupoid of $\Pi^{1} \left( \mathcal{B}, \mathcal{B} \right)$, where we are identifying $\mathcal{B}$ with $\{t\} \times \mathcal{B}$. Notice that, indeed, \textit{$\Omega_{t} \left( \mathcal{B} \right)$ is the material groupoid associated to simple body structure of the state $t$ of the body $\mathcal{B}$, i.e., with this identification,
$$\Omega_{t} \left( \mathcal{B} \right)  = \Omega \left( \mathcal{B}_{t} \right).$$}
\noindent{We will use both interpretations of $\Omega_{t} \left( \mathcal{B} \right)$ indistinctly along the paper.}\\\\
\noindent{As a transversal construction, we will define the \textit{$\left(X,Y\right)-$material groupoid} $\Omega_{X,Y} \left( \mathbb{R} \right)$.
\begin{definition}\label{definition17}
\rm
The \textit{$\left(X,Y\right)-$material groupoid} $\Omega_{X,Y} \left( \mathbb{R} \right)$ is defined as the set of all material isomorphisms from the particle $X$ to the particle $Y$ varying the time variable. 
\end{definition}
\noindent{Again, we may notice that, when $X=Y$, the $\left(X,X\right)-$material groupoid $\Omega_{X,X} \left( \mathbb{R} \right)$ is a subgroupoid of the material groupoid $\Omega \left( \mathcal{C} \right)$. For each material point $X$, $\Omega_{X,X} \left( \mathbb{R} \right)$ is called \textit{$X-$material groupoid} and denoted by $\Omega_{X} \left( \mathbb{R} \right)$.}\\
On the other hand, $\Omega_{X} \left( \mathbb{R} \right)$ may be consider as a subgroupoid of $\left(\mathbb{R}  \times \mathbb{R}\right) \times \Pi^{1} \left( \mathcal{B} , \mathcal{B}\right)_{X}^{X} \rightrightarrows \mathbb{R}$, where we are identifying $\mathbb{R}$ with $\mathbb{R} \times \{X\}$. Furthermore, the structure of Lie groupoid of $\left(\mathbb{R}  \times \mathbb{R}\right) \times \Pi^{1} \left( \mathcal{B} , \mathcal{B}\right)_{X}^{X}$ is given by,
$$\left( s,t,j_{X,X}^{1}\phi\right) \cdot \left( r,s,j_{X,X}^{1}\psi\right) = \left( r,t,j_{X,X}^{1}\left(\phi \circ \psi \right)\right), $$
for all $\left( s,t,j_{X,X}^{1}\phi\right) , \left( r,s,j_{X,X}^{1}\psi\right) \in  \mathbb{R}\times \mathbb{R} \times\Pi^{1} \left( \mathcal{B} , \mathcal{B}\right)_{X}^{X}$. Again, we will use both interpretations of $\Omega_{X} \left( \mathbb{R} \right)$ along the paper.}\\
Roughly speaking, the material groupoid $\Omega \left( \mathcal{C} \right)\rightrightarrows \mathcal{C}$ encompasses the global evolution of the body, the $t-$material groupoid $\Omega_{t} \left( \mathcal{B} \right)$ encodes all the material properties of the body at the instant $t$ and the $X-$material groupoid $\Omega_{X} \left( \mathbb{R} \right)$ embraces all the evolution of the particle $X$.\\\\

\begin{proposition}\label{auxiliarprop34re}
Let $\Omega \left( \mathcal{C} \right)$ be the material groupoid. If $\Omega \left( \mathcal{C} \right)$ is a Lie subgroupoid of $\Phi \left( \mathcal{V} \right)$, then for all instant $t$ and all material point $X$ we have that $\Omega_{t} \left( \mathcal{C} \right)$ and $\Omega_{X} \left( \mathbb{R} \right)$ are Lie subgroupoids of $\Phi \left( \mathcal{V} \right)$.
\begin{proof}
Assume that $\Omega \left( \mathcal{C} \right)$ is a Lie subgroupoid of $\Phi \left( \mathcal{V} \right)$. Let us consider the following submersions,
$$\pi_{1}: \Omega \left( \mathcal{C} \right) \rightarrow \mathcal{B} \times \mathcal{B}, \ \ \ \pi_{2}: \Omega \left( \mathcal{C} \right) \rightarrow \mathbb{R} \times \mathbb{R},$$
given by
$$ \pi_{1}\left( t, s , j_{X,Y}^{1}\phi \right) = \left( X , Y \right), \ \ \ \pi_{2}\left( t, s , j_{X,Y}^{1}\phi \right) = \left( t,s \right),$$
for all $\left(t, s , j_{X,Y}^{1}\phi \right) \in \Omega \left( \mathcal{C} \right)$. Then
$$ \Omega_{X} \left( \mathbb{R} \right) = \pi_{1}^{-1}\left( X, X \right),  \ \ \ \Omega_{t} \left( \mathcal{C} \right) = \pi_{2}^{-1}\left(t,t\right).$$

\end{proof}
\end{proposition}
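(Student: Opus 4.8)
The plan is to exhibit $\Omega_t(\mathcal{C})$ and $\Omega_X(\mathbb{R})$ as preimages of single points under smooth maps defined on $\Omega(\mathcal{C})$, and then to apply the preimage theorem, just as the structure maps suggest. First I would use that $\mathcal{V}_{(t,X)}=\{0\}\times T_X\mathcal{B}$, so that every element of $\Phi(\mathcal{V})$ is a triple $(t,s,j^1_{X,Y}\phi)$ and, by the composition law, $\Phi(\mathcal{V})\rightrightarrows\mathcal{C}$ is canonically the product of the pair groupoid $\mathbb{R}\times\mathbb{R}\rightrightarrows\mathbb{R}$ with $\Pi^1(\mathcal{B},\mathcal{B})\rightrightarrows\mathcal{B}$. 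In the local coordinates of Eq. (\ref{17.second2313}) the maps $\pi_1(t,s,j^1_{X,Y}\phi)=(X,Y)$ and $\pi_2(t,s,j^1_{X,Y}\phi)=(t,s)$ are smooth; indeed they are the compositions of the anchor $(\overline{\alpha},\overline{\beta}):\Omega(\mathcal{C})\to\mathcal{C}\times\mathcal{C}$ with the two projections $\mathcal{C}\times\mathcal{C}\to\mathcal{B}\times\mathcal{B}$ and $\mathcal{C}\times\mathcal{C}\to\mathbb{R}\times\mathbb{R}$, and tautologically $\Omega_X(\mathbb{R})=\pi_1^{-1}(X,X)$ and $\Omega_t(\mathcal{C})=\pi_2^{-1}(t,t)$.

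The heart of the proof is to show that, when $\Omega(\mathcal{C})$ is a Lie subgroupoid, the points $(X,X)$ and $(t,t)$ are regular values of $\pi_1|_{\Omega(\mathcal{C})}$ and $\pi_2|_{\Omega(\mathcal{C})}$. Since $\Omega(\mathcal{C})$ is a Lie groupoid, $\overline{\alpha}$ and $\overline{\beta}$ are submersions, so each of $\mathrm{pr}_{\mathbb{R}}\circ\overline{\alpha}$, $\mathrm{pr}_{\mathbb{R}}\circ\overline{\beta}$, $\mathrm{pr}_{\mathcal{B}}\circ\overline{\alpha}$, $\mathrm{pr}_{\mathcal{B}}\circ\overline{\beta}$ is a submersion; the task is to upgrade this componentwise surjectivity to joint surjectivity of $d\pi_i$ along the relevant fibre. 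I would work at a fixed $g\in\pi_i^{-1}(\mathrm{point})$ and exploit the left and right translations of $\Omega(\mathcal{C})$, which on a Lie groupoid are diffeomorphisms of the $\alpha$- and $\beta$-fibres: translating $g$ by elements of $\Omega(\mathcal{C})$ lying over the instant $t$ (resp.\ over the particle $X$) moves the complementary source/target data while fixing $\pi_i(g)$, and combined with the product splitting of $\mathcal{C}$ this yields a complement in $d\pi_i(T_g\Omega(\mathcal{C}))$ to the directions already captured. I expect this to be the main obstacle: a Lie subgroupoid of $\Phi(\mathcal{V})$ need not be transitive, so $(\overline{\alpha},\overline{\beta})$ is in general not a submersion and $\pi_1,\pi_2$ need not be submersions globally; one genuinely has to check that transversality to the diagonal-type fibres survives, and the honest way to do this is the pointwise translation argument, reading off tangent vectors from the left- and right-invariant vector fields tangent to $\Omega(\mathcal{C})$.

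Once the preimages are known to be embedded submanifolds of $\Omega(\mathcal{C})$ (and, being preimages of closed sets, closed), hence embedded submanifolds of $\Phi(\mathcal{V})$, the remaining verifications are routine. The set $\Omega_X(\mathbb{R})$ is closed under the composition and inversion of $\Phi(\mathcal{V})$ — a composite of time-material isomorphisms from $X$ to $X$ is again one, and likewise inverses — and contains $\overline{\epsilon}(\mathbb{R}\times\{X\})$, so it is an algebraic subgroupoid over $\mathbb{R}\times\{X\}\cong\mathbb{R}$ with the structure maps inherited from $\Phi(\mathcal{V})$; being an embedded submanifold, those structure maps are smooth, and $\overline{\alpha},\overline{\beta}$ restrict to submersions onto $\mathbb{R}\times\{X\}$ since they are already submersions on $\Omega(\mathcal{C})$ and the base is cut out compatibly. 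Hence $\Omega_X(\mathbb{R})\rightrightarrows\mathbb{R}\times\{X\}$ is a Lie subgroupoid of $\Phi(\mathcal{V})$; applying the identical reasoning to $\pi_2$ and the base $\{t\}\times\mathcal{B}\cong\mathcal{B}$ gives that $\Omega_t(\mathcal{C})\rightrightarrows\{t\}\times\mathcal{B}$ is a Lie subgroupoid of $\Phi(\mathcal{V})$.
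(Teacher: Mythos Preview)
Your setup mirrors the paper's exactly: exhibit $\Omega_X(\mathbb{R})=\pi_1^{-1}(X,X)$ and $\Omega_t(\mathcal{C})=\pi_2^{-1}(t,t)$ and invoke the preimage theorem. The paper simply \emph{asserts} that $\pi_1,\pi_2$ are submersions; you correctly recognise that this is not automatic, since the anchor of a non-transitive Lie groupoid need not be a submersion, and you propose to repair it via left/right translations.

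Unfortunately the translation argument cannot succeed, because the statement itself fails without an extra hypothesis. Take $\mathcal{B}=\mathbb{R}^3$, $V=\mathbb{R}$ and $W(t,X,F)=t+X_1^{2}$. Then
\[
\Omega(\mathcal{C})=\{(t,s,X,Y,P)\in\Phi(\mathcal{V}):t+X_1^{2}=s+Y_1^{2}\}
\]
is a smooth hypersurface (the defining function has differential $(1,-1,2X_1,0,0,-2Y_1,0,\dots)\neq 0$), and one checks directly that $\overline{\alpha},\overline{\beta}$ restrict to submersions, so $\Omega(\mathcal{C})$ is a bona fide Lie subgroupoid of $\Phi(\mathcal{V})$. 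Yet $\Omega_t(\mathcal{C})=\{(t,t,X,Y,P):X_1^{2}=Y_1^{2}\}$ is, near $X=Y=0$, the union of the two transverse hyperplanes $X_1=\pm Y_1$ and hence \emph{not} a submanifold. Concretely, at $g=(t,t,0,0,P)$ the tangent space to $\Omega(\mathcal{C})$ is cut out by $\dot t=\dot s$, so $d_g\pi_2$ has image only the diagonal of $\mathbb{R}^2$; your translation idea fails here because the orbit through $(t,0)$ is $\{(t-Y_1^{2},Y):Y\in\mathcal{B}\}$, whose time-projection $(-\infty,t]$ is one-sided, so no translation inside $\Omega(\mathcal{C})$ can manufacture the missing $\partial_t-\partial_s$ direction. (Swapping the roles, $W(t,X,F)=X_1+t^{2}$ gives an analogous counterexample for $\Omega_X(\mathbb{R})$.) Both the paper's one-line proof and your proposal therefore share a genuine gap; the proposition needs an additional transversality assumption --- for instance that the orbit foliation of $\Omega(\mathcal{C})$ is transverse to the slices $\{t\}\times\mathcal{B}$ and $\mathbb{R}\times\{X\}$ --- which both arguments implicitly use but neither states.
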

So, the condition of \textit{``being a Lie groupoid''} is stronger over the material groupoid than over the $t-$material groupoids and $X-$material groupoids.\\\\

\noindent{Notice that, all the defined canonical groupoids satisfy the following short sequences of contents,} 
$$ \Omega_{t} \left( \mathcal{B} \right) \leq \Omega \left( \mathcal{C} \right)  \leq \Phi \left( \mathcal{V} \right), \ \forall t.$$
$$\  \Omega_{X} \left( \mathbb{R} \right) \leq \Omega \left( \mathcal{C} \right)  \leq \Phi \left( \mathcal{V} \right), \ \forall X.$$

\vspace{0.7cm}

Then, we may construct the correspondent characteristic distributions. We will start with the associated characteristic distribution $A \Omega\left( \mathcal{C} \right)^{T}$ to the material groupoid, which will be called \textit{material distribution} of the body-time manifold $\mathcal{C}$. So, in a similar way to the material distribution associated to a \textit{spatial} body $\mathcal{B}$ (see Eq. (\ref{14.22})), $A \Omega\left( \mathcal{C} \right)^{T}$ is generated by the (left-invariant) vector fields on $\Phi \left( \mathcal{V} \right)$ which are in the kernel of $TW$. Equivalently, the material distribution of $\mathcal{C}$ is generated by the left-invariant vector fields $\Theta$ on $\Phi \left( \mathcal{V} \right)$ such that
\begin{equation}\label{matgrou2}
    T W \left( \Theta \right) = 0
\end{equation}
So, let $\Theta$ be a left-invariant vector field on $\Phi \left( \mathcal{V} \right)$. Then,
\begin{equation}
    \Theta  \left(t,s,x^{i} , y^{j}, y^{i}_{j}\right)   = \lambda \dfrac{\partial}{\partial  t}   +   \Theta^{i}\dfrac{\partial}{\partial  x^{i} } + y^{i}_{l}\Theta^{l}_{j}\dfrac{\partial}{\partial y^{i}_{j} }
\end{equation}
respect to a local system of coordinates $\left(t,s,x^{i} , y^{j}, y^{i}_{j}\right)$ on $ \Phi \left(  \mathcal{V}_{\mathcal{U}, \mathcal{V}}\right)$ with $\mathcal{U}$ and $\mathcal{V}$ two open subsets of $\mathcal{B}$ and $ \Phi \left(\mathcal{V}_{\mathcal{U}, \mathcal{V}} \right)$ is given by the triples $\left( t, s , j_{X,Y}^{1}\phi    \right)$ in $\Phi \left( \mathcal{V} \right)$ such that $X \in \mathcal{U}$ and $Y \in \mathcal{V}$. Then, $\Theta$ is an admissible vector field for the couple $\left( \Phi \left( \mathcal{V} \right) ,\Omega \left( \mathcal{C} \right)\right)$ if, and only if,the following equations holds,
\begin{equation}\label{Eqmaterialgroupoid12}
    \lambda \dfrac{\partial W}{\partial  t}  +  \Theta^{i}\dfrac{\partial W}{\partial  x^{i} } + y^{i}_{l}\Theta^{l}_{j}\dfrac{\partial W}{\partial  y^{i}_{j} } = 0.
\end{equation}
Notice that, here $\lambda$, $\Theta^{i}$ and $\Theta^{i}_{j}$ are functions depending on $t$ and $X$. Thus, construct the material distribution is reduced to solve Eq. (\ref{Eqmaterialgroupoid12}). The base-characteristic distribution $A \Omega \left( \mathcal{C} \right)^{\sharp}$ will be called \textit{body-material distribution} and the transitive distribution $A \Omega \left( \mathcal{B} \right)^{B}$ will be called \textit{uniform-material distribution}.\\
The foliations associated to the material distribution, the body-material distribution and uniform-material distribution will be called \textit{material foliation}, \textit{body-material foliation} and \textit{uniform-material foliation} and they will be denoted by $\overline{\mathcal{F}}$, $\mathcal{F}$ and $\mathcal{G}$, respectively.\\\\

\noindent{On the other hand, analogously, for an instant $t$, consider the $t-$material groupoids $\Omega_{t} \left( \mathcal{B} \right)$ as the material groupoid of the state $t$ of the material body $\mathcal{B}$. Therefore, (see Eq. (\ref{14.22})) the associated characteristic distribution $A \Omega_{t} \left( \mathcal{B} \right)^{T}$ to the $t-$material groupoid, which will be called \textit{$t-$material distribution}, is generated by the (left-invariant) vector fields on $\Pi^{1} \left( \mathcal{B} , \mathcal{B}\right)$ which are in the kernel of $TW_{t}$, where $W_{t}$ is given by
$$ W_{t}: \Pi^{1} \left( \mathcal{B} , \mathcal{B}\right) \rightarrow V,$$
such that $W_{t}\left( j_{X,Y}^{1}\phi \right) = W \left( t , t, j_{X,Y}^{1}\phi \right)$ for all $ j_{X,Y}^{1}\phi  \in \Pi^{1} \left( \mathcal{B} , \mathcal{B}\right)$. In other words, the $t-$material distribution of $\mathcal{C}$ is generated by the left-invariant vector fields $\Theta$ on $\Pi^{1} \left( \mathcal{B} , \mathcal{B}\right)$ such that
\begin{equation}\label{tmatgrou1}
    T W_{t} \left( \Theta \right) = 0
\end{equation}
Indeed, it satisfies that $A \Omega_{t} \left( \mathcal{B} \right)^{T}$ is the material distribution of the state $t$ of the material body.\\
So, let $\Theta$ be a left-invariant vector field on $\Pi^{1} \left( \mathcal{B}, \mathcal{B} \right)$. Then,

\begin{equation}
    \Theta  \left(x^{i} , y^{j}, y^{i}_{j}\right)   =   \Theta^{i}\dfrac{\partial}{\partial  x^{i} } + y^{i}_{l} \Theta^{l}_{j}\dfrac{\partial}{\partial y^{i}_{j} },
\end{equation}
respect to a local system of coordinates $\left(x^{i} , y^{j}, y^{i}_{j}\right)$ on $ \Pi^{1} \left(  \mathcal{U} , \mathcal{U}\right)$ (see Eq. (\ref{17})) with $\mathcal{U}$ an open subset of $\mathcal{B}$. Then, $\Theta$ is an admissible vector field for the couple $\left( \Pi^{1}\left( \mathcal{B} , \mathcal{B} \right) , \Omega_{t} \left( \mathcal{B} \right)\right)$ if, and only if, the following equations hold,
\begin{equation}\label{tmatgroup3234}
    \Theta^{i}\dfrac{\partial W_{t}}{\partial  x^{i} } + y^{i}_{l}\Theta^{l}_{j}\dfrac{\partial   W_{t}}{\partial  y^{i}_{j} } = 0
\end{equation}
Note that, here $\Theta^{i}$ and $\Theta^{i}_{j}$ are functions depending on $X$.\\
On the other hand, let us observe that, taking into account the \textit{consistency proposition} \ref{consistencyproperty2904}, as a subgroupoid of $\Phi \left( \mathcal{V} \right)$, the groupoid $\Omega_{t}\left( \mathcal{B} \right)$ is generated by the left-invariant vector fields $\Theta$ on $\Phi \left( \mathcal{V}  \right)$,
\begin{equation}
    \Theta  \left(t,s,x^{i} , y^{j}, y^{i}_{j}\right)   = \lambda \dfrac{\partial}{\partial  t}   +   \Theta^{i}\dfrac{\partial}{\partial  x^{i} } + y^{i}_{l} \Theta^{l}_{j}\dfrac{\partial}{\partial y^{i}_{j} }
\end{equation}
such that $\lambda_{\vert \{t\}\times \{t\} \times \Pi^{1} \left( \mathcal{B} , \mathcal{B}\right)} \equiv 0$ and
\begin{equation}\label{tmatgroup3234.2}
    \Theta^{i}\dfrac{\partial W_{t}}{\partial  x^{i} } + y^{i}_{l}\Theta^{l}_{j}\dfrac{\partial   W_{t}}{\partial  y^{i}_{j} } = 0
\end{equation}
on any material point at the instant $t$. The base-characteristic distribution $A \Omega_{t} \left( \mathcal{B} \right)^{\sharp}$ (see Theorem \ref{10.24}) will be called \textit{$t-$body-material distribution} and the transitive distribution $A \Omega_{t} \left( \mathcal{B} \right)^{B}$ (see Corollary \ref{10.42}) will be called \textit{$t-$uniform-material distribution}.\\
The foliations associated to the $t-$material distribution, the $t-$body-material distribution and $t-$uniform-material distribution will be called \textit{$t-$material foliation}, \textit{$t-$body-material foliation} and \textit{$t-$uniform-material foliation} and they will be denoted by $\overline{\mathcal{F}}_{t}$, $\mathcal{F}_{t}$ and $\mathcal{G}_{t}$, respectively.}\\\\

\noindent{The characteristic distribution associated to the $X-$material groupoid $A \Omega_{X} \left(\mathbb{R} \right)^{T}$ is called \textit{$X-$material distribution}. Analogously, $A \Omega_{X} \left(\mathbb{R} \right)^{T}$ is generated by the (left-invariant) vector fields on $\left(\mathbb{R}  \times \mathbb{R}\right) \times \Pi^{1} \left( \mathcal{B} , \mathcal{B}\right)_{X}^{X}$ which are in the kernel of $TW_{X}$, where $W_{X}$ is given by the restriction of $W$ to $\mathbb{R}\times \mathbb{R} \times\Pi^{1} \left( \mathcal{B} , \mathcal{B}\right)_{X}^{X}$,
$$ W_{X}: \mathbb{R}\times \mathbb{R} \times\Pi^{1} \left( \mathcal{B} , \mathcal{B}\right)_{X}^{X} \rightarrow V$$
In other words, the $X-$material distribution of $\mathcal{C}$ is generated by the left-invariant vector fields $\Theta$ on $\mathbb{R}\times \mathbb{R} \times\Pi^{1} \left( \mathcal{B} , \mathcal{B}\right)_{X}^{X}$ such that
\begin{equation}\label{tmatgrou1.second4}
    T W_{X} \left( \Theta \right) = 0
\end{equation}
Notice that, the groupoid structure of $\left(\mathbb{R}  \times \mathbb{R}\right) \times \Pi^{1} \left( \mathcal{B} , \mathcal{B}\right)_{X}^{X}$ is the unique groupoid structure such that it is a subgroupoid of $\Phi \left( \mathcal{V}\right)$, i.e.,
$$\left( s,t,j_{X,X}^{1}\phi\right) \cdot \left( r,s,j_{X,X}^{1}\psi\right) = \left( r,t,j_{X,X}^{1}\left(\phi \circ \psi \right)\right), $$
for all $\left( s,t,j_{X,X}^{1}\phi\right) , \left( r,s,j_{X,X}^{1}\psi\right) \in  \mathbb{R}\times \mathbb{R} \times\Pi^{1} \left( \mathcal{B} , \mathcal{B}\right)_{X}^{X}$.\\
So, let $\Theta$ be a left-invariant vector field on $\mathbb{R}\times \mathbb{R} \times \Pi^{1} \left( \mathcal{B} , \mathcal{B}\right)_{X}^{X}$. Then,

\begin{equation}
    \Theta  \left(t,s, y^{i}_{j}\right)   =   \lambda\dfrac{\partial}{\partial  t } +  y^{i}_{l} \Theta^{l}_{j}\dfrac{\partial}{\partial y^{i}_{j} },
\end{equation}
respect to a local system of coordinates $\left(t,s, y^{i}_{j}\right)$ on $ \mathbb{R}\times \mathbb{R} \times \Pi^{1} \left( \mathcal{U} , \mathcal{U}\right)_{X}^{X}$ with $\mathcal{U}$ an open subset of $\mathcal{B}$ with $X \in \mathcal{U}$. Then, $\Theta$ is an admissible vector field for the couple $\left( \Phi \left( \mathcal{V} \right) , \Omega_{X} \left( \mathbb{R} \right)\right)$ if, and only if, the following equations hold,
\begin{equation}\label{Xmatgroup323456}
    \lambda\dfrac{\partial W_{X}}{\partial  t } + y^{i}_{l} \Theta^{l}_{j}\dfrac{\partial   W_{X}}{\partial  y^{i}_{j} } = 0
\end{equation}
Observe that, here $\lambda$ and $\Theta^{i}_{j}$ are functions depending on $t$.\\
On the other hand, taking into account the \textit{consistency proposition} \ref{consistencyproperty2904}, as a subgroupoid of $\Phi \left( \mathcal{V} \right)$, the groupoid $\Omega_{X}\left( \mathbb{R} \right)$ is generated by the left-invariant vector fields $\Theta$ on $\Phi \left( \mathcal{V}  \right)$,
\begin{equation}
    \Theta  \left(t,s,x^{i} , y^{j}, y^{i}_{j}\right)   = \lambda \dfrac{\partial}{\partial  t}   +   \Theta^{i}\dfrac{\partial}{\partial  x^{i} } + y^{i}_{l} \Theta^{l}_{j}\dfrac{\partial}{\partial y^{i}_{j} }
\end{equation}
such that $\Theta^{i}_{\vert \mathbb{R}\times \mathbb{R} \times \Pi^{1} \left( \mathcal{U} , \mathcal{U}\right)_{X}^{X}  } \equiv 0$ and
\begin{equation}\label{tmatgroup3234.3}
    \Theta^{i}\dfrac{\partial W_{X}}{\partial  x^{i} } + y^{i}_{l} \Theta^{l}_{j}\dfrac{\partial   W_{X}}{\partial  y^{i}_{j} } = 0,
\end{equation}
at any instant for a fixed material point $X$. The base-characteristic distribution $A \Omega_{X} \left( \mathbb{R}\right)^{\sharp}$ (see Theorem \ref{10.24}) will be called \textit{$X-$body-material distribution} and the transitive distribution $A \Omega_{X} \left( \mathbb{R}  \right)^{B}$ (see Corollary \ref{10.42}) will be called \textit{$X-$uniform-material distribution}.\\
The foliations associated to the $X-$material distribution, the $X-$body-material distribution and $X-$uniform-material distribution will be called \textit{$X-$material foliation}, \textit{$X-$body-material foliation} and \textit{$X-$uniform-material foliation} and they will be denoted by $\overline{\mathcal{F}}_{X}$, $\mathcal{F}_{X}$ and $\mathcal{G}_{X}$, respectively. It is important do not confuse $\overline{\mathcal{F}}_{X}$ (resp. $\mathcal{F}_{X}$ and $\mathcal{G}_{X}$), the $X-$material foliation (resp. $X-$body-material foliation and $X-$uniform-material foliation), with $\overline{\mathcal{F}}\left(\epsilon\left( X\right)\right)$ (resp. $\mathcal{F}\left( X\right)$ and $\mathcal{G}\left( X \right)$), the leaf at $\epsilon\left( X\right)$ (resp. the leaf at $X$) of the foliation $\overline{\mathcal{F}}$ (res. $\mathcal{F}$ and $\mathcal{G}$).}\\

To summarize, around an evolution material $\mathcal{C}$, we have constructed the following canonical short sequences of groupoids

$$ \Omega_{t} \left( \mathcal{B} \right) \leq \Omega \left( \mathcal{C} \right) \leq  \Phi \left( \mathcal{V} \right), \ \forall t.$$
$$ \Omega_{X} \left( \mathbb{R} \right) \leq \Omega \left( \mathcal{C} \right) \leq  \Phi \left( \mathcal{V} \right), \ \forall t.$$
and the following canonical short sequences of distributions
\begin{itemize}
    \item[] \hspace{3cm} $A\Omega_{t} \left( \mathcal{B} \right)^{T}  \leq  A\Omega \left( \mathcal{C} \right)^{T}\leq   T\Phi \left( \mathcal{V} \right),  \forall t.$
     \item[] \hspace{3cm}$A\Omega_{X} \left( \mathbb{R} \right)^{T}  \leq  A\Omega \left( \mathcal{C} \right)^{T}\leq   T\Phi \left( \mathcal{V} \right),  \forall t.$
    \item[] \hspace{3cm}$A\Omega_{t} \left( \mathcal{B} \right)^{B} \leq  A\Omega \left( \mathcal{C} \right)^{B}  \leq   T\mathcal{C},  \forall t.$
    \item[] \hspace{3cm}$A\Omega_{X} \left( \mathbb{R} \right)^{B} \leq  A\Omega \left( \mathcal{C} \right)^{B}  \leq    T\mathcal{C},  \forall t.$
    \item[] \hspace{3cm}$A\Omega_{t} \left( \mathcal{B} \right)^{\sharp} \  \leq  A\Omega \left( \mathcal{C} \right)^{\sharp} \ \leq    T \mathcal{C},  \forall t.$
    \item[] \hspace{3cm}$A\Omega_{X} \left( \mathbb{R} \right)^{\sharp} \  \leq  A\Omega \left( \mathcal{C} \right)^{\sharp} \ \leq    T \mathcal{C},  \forall t.$
\end{itemize}

\part{Remodeling}

As opposed to the uniformity in the spatial case, arise new material properties associated to the evolution of the body. In particular, the temporal counterpart of uniformity is a specific case of evolution of the material called \textit{remodeling}. This part will be focused on the study of \textit{global remodeling}, as one of the main contributions of this paper.

\begin{definition}\label{1.17.2.se}
\rm
Let $\mathcal{C}$ be a body-time manifold:
\begin{itemize}
    \item A material particle $X \in \mathcal{B}$ is presenting a \textit{remodeling} when it is connected with all the instants by a material isomorphism, i.e., all the points at $\mathbb{R} \times \{X\}$ are connected by material isomorphisms.
    \item  $\mathcal{C}$ is presenting a \textit{global remodeling} or simply a \textit{remodeling} when all the material points are presenting a remodeling.
    \item  We will say that $\mathcal{C}$ is presenting a \textit{uniform remodeling} when it is presenting a remodeling and some (and hence all) state is uniform.
    \item \textit{Growth} and \textit{resorption} are given by a remodeling with volume increase or volume decrease of the material body $\mathcal{B}$. 
\end{itemize}

\end{definition}
Intuitively, a material evolution presents a remodeling when the constitutives properties of the material does not change with the time. This kind evolution may be found in biological tissues \cite{RODRIGUEZ1994455}. Wolff's law of trabecular architecture of bones (see for instance \cite{TURNER19921}) is a relevant example. Here, trabeculae are assumed to change their orientation following the principal direction of stress. It is important to note that the fact of that the material body remains materially isomorphic with the time does not preclude the possibility of adding (growth) or removing (resorption) material, as long as the material added is \textit{of the same type}.
\begin{proposition}
Let $\mathcal{C}$ be a body-time manifold. A material particle $X \in \mathcal{B}$ is presenting a remodeling if, and only if, the $X-$material groupoid $\Omega_{X}\left(  \mathbb{R}\right)$ is transitive. $\mathcal{C}$ is presenting a remodeling if, and only if, for all material point $X$, the $X-$material groupoid $\Omega_{X}\left(  \mathbb{R}\right)$ is transitive.
\end{proposition}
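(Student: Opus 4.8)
The plan is to unwind both equivalences directly from the definitions; the only genuine content is the translation between the two languages. First I would recall the realization, discussed right after Definition~\ref{definition17}, of the $X$-material groupoid $\Omega_{X}\left( \mathbb{R} \right)$ as a subgroupoid of $\left( \mathbb{R} \times \mathbb{R} \right) \times \Pi^{1}\left( \mathcal{B}, \mathcal{B} \right)_{X}^{X} \rightrightarrows \mathbb{R}$, where $\mathbb{R}$ is identified with $\mathbb{R} \times \{ X \}$; an element has the form $\left( t, s, j_{X,X}^{1}\phi \right)$ with $j_{X,X}^{1}\phi$ a time-material isomorphism from $\left( t, X \right)$ to $\left( s, X \right)$, and under this identification its source is $t$ and its target is $s$.

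For the first equivalence, recall that a groupoid is transitive precisely when its anchor map is surjective. Hence $\Omega_{X}\left( \mathbb{R} \right)$ is transitive if and only if for every pair of instants $\left( t, s \right) \in \mathbb{R} \times \mathbb{R}$ there is an element $\left( t, s, j_{X,X}^{1}\phi \right) \in \Omega_{X}\left( \mathbb{R} \right)$, i.e. if and only if for every $t$ and $s$ there exists a material isomorphism from $\left( t, X \right)$ to $\left( s, X \right)$. Since being materially isomorphic is an equivalence relation, this is in turn equivalent to the statement that all the points of $\mathbb{R} \times \{ X \}$ are connected by material isomorphisms, which is exactly the definition (Definition~\ref{1.17.2.se}) of $X$ presenting a remodeling.

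The second equivalence is then immediate: by Definition~\ref{1.17.2.se}, $\mathcal{C}$ presents a (global) remodeling if and only if every material particle $X \in \mathcal{B}$ presents a remodeling, and by the first part this holds if and only if $\Omega_{X}\left( \mathbb{R} \right)$ is transitive for every $X$. I do not expect any real obstacle here; the one point needing a little care is the identification of $\Omega_{X}\left( \mathbb{R} \right)$ with its realization over the base $\mathbb{R}$, together with the remark --- used implicitly above --- that ``pairwise connected by material isomorphisms'' coincides with ``transitive as a groupoid'' exactly because material isomorphy is an equivalence relation.
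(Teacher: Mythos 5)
Your proposal is correct: the paper states this proposition without proof, treating it as an immediate unwinding of Definition \ref{1.17.2.se} and the realization of $\Omega_{X}\left( \mathbb{R} \right)$ over the base $\mathbb{R}$, which is precisely the argument you give. The translation of transitivity into surjectivity of the anchor, together with the remark that material isomorphy is an equivalence relation (so pairwise connectedness of $\mathbb{R}\times\{X\}$ is the same as connectedness to any fixed instant), is exactly the intended content.
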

\begin{corollary}\label{corollaryunifremod2344}
Let $\mathcal{C}$ be a body-time manifold. the material groupoid $\Omega \left( \mathcal{C}\right)$ is transitive if, and only if, $\mathcal{C}$ is presenting a uniform remodeling.
\end{corollary}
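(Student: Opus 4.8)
The plan is to turn the statement into one about orbits. Since $\Omega(\mathcal{C})$ is a subgroupoid of $\Phi(\mathcal{V})$, it is transitive exactly when every two points $(t,X),(s,Y)\in\mathcal{C}$ are connected by an arrow of $\Omega(\mathcal{C})$, i.e. by a (time-)material isomorphism from $(t,X)$ to $(s,Y)$; equivalently, when the equivalence relation ``materially isomorphic'' on $\mathcal{C}$ has a single class. I would assemble the proof from three ingredients already available: the Proposition just above, stating that a particle $X$ presents a remodeling iff $\Omega_{X}(\mathbb{R})$ is transitive, i.e. iff $(t,X)$ and $(s,X)$ are materially isomorphic for all $t,s$; the characterization from Part~\ref{partelasticmat} that a body is uniform iff its material groupoid is transitive; and the identification $\Omega_{t}(\mathcal{B})=\Omega(\mathcal{B}_{t})$ together with the fact that $\Omega_{t,t}(\mathcal{B})$ is a subgroupoid of $\Omega(\mathcal{C})$.

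For the forward implication I would assume $\Omega(\mathcal{C})$ transitive and read off the two components of Definition~\ref{1.17.2.se} separately. Fixing a particle $X$ and letting $t,s$ vary, transitivity yields a material isomorphism between $(t,X)$ and $(s,X)$ for all $t,s$, which by the preceding Proposition says $X$ presents a remodeling; since $X$ is arbitrary, $\mathcal{C}$ presents a global remodeling. Fixing instead an instant $t$ and letting $X,Y$ vary, transitivity yields a material isomorphism between $(t,X)$ and $(t,Y)$ for all $X,Y$, so $\Omega_{t}(\mathcal{B})=\Omega(\mathcal{B}_{t})$ is transitive, i.e. the state $t$ is uniform. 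Thus $\mathcal{C}$ presents a remodeling and has a uniform state, which is precisely a uniform remodeling.

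For the converse I would assume $\mathcal{C}$ presents a uniform remodeling, fix a uniform state $t_{0}$, and produce an arbitrary connecting isomorphism by concatenation. Given $(t,X)$ and $(s,Y)$: the remodeling of $X$ gives a material isomorphism $g_{1}$ from $(t,X)$ to $(t_{0},X)$; uniformity of $\mathcal{B}_{t_{0}}$ gives a material isomorphism $g_{2}$ from $(t_{0},X)$ to $(t_{0},Y)$ (viewed inside $\Omega_{t_{0},t_{0}}(\mathcal{B})\leq\Omega(\mathcal{C})$); the remodeling of $Y$ gives a material isomorphism $g_{3}$ from $(t_{0},Y)$ to $(s,Y)$. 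These three arrows are composable, so $g_{3}\cdot g_{2}\cdot g_{1}\in\Omega(\mathcal{C})$ is a material isomorphism from $(t,X)$ to $(s,Y)$; since the endpoints were arbitrary, $\Omega(\mathcal{C})$ is transitive. Equivalently, one may simply chain the equivalence relation: $(t,X)\sim(t_{0},X)\sim(t_{0},Y)\sim(s,Y)$.

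I do not expect a genuine obstacle here: the only point needing a moment's care is that, in the converse direction, the three pieces must be matched as groupoid elements (the target of $g_{1}$ being the source of $g_{2}$, and similarly at $(t_{0},Y)$), which is immediate from the definitions; and one should note that only \emph{one} uniform state is used, so the ``some $\Leftrightarrow$ all states uniform'' equivalence of Definition~\ref{1.17.2.se} is a byproduct of the same concatenation argument rather than a prerequisite.
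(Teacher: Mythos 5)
Your argument is correct and is essentially the intended one: the paper states this corollary without proof, treating it as immediate from the preceding Proposition and Definition \ref{1.17.2.se}, and your two directions (reading off remodeling of each particle and uniformity of each state from transitivity, and conversely concatenating a remodeling arrow, a uniformity arrow at a fixed uniform state, and another remodeling arrow inside the subgroupoid $\Omega\left(\mathcal{C}\right)$) are exactly that omitted argument. Your closing remark that only one uniform state is needed, the ``hence all'' being recovered by the same concatenation, is also consistent with the paper's phrasing.
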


Observe that, analogously to uniformity, the definition of remodeling is pointwise. Consider a material particle $X_{0}$ which presents a remodeling or, equivalently, there exists a map 

\begin{equation}\label{1.564.34.5.second}
    P: \mathbb{R} \rightarrow Gl \left( 3, \mathbb{R} \right)
\end{equation}
such that, for all $t \in \mathbb{R}$, $P \left(t\right)$ is a material isomorphism from $\left( t_{0} , X_{0}\right)$ to $\left( t , X_{0}\right)$ for a fixed time $t_{0}$. Nevertheless, the differentiability condition of $P$ is not guaranteed.
\begin{definition}\label{1.7.2.second}
\rm
Let be a body-time manifold $\mathcal{C}$. A material point $X_{0}$ is said to be presenting a \textit{smooth remodeling} if for each point $t \in \mathbb{R}$ there is an interval $I $ around $t$ and a smooth map $P : I \rightarrow Gl \left( 3, \mathbb{R} \right)$ such that for all $s \in I$ it satisfies that $P \left(s\right)$ is a material isomorphism from $\left(t,X_{0} \right)$ to $\left(s,X_{0} \right)$. The map $P$ is called a \textit{right (local) smooth remodeling process at} $X_{0}$. A \textit{left (local) smooth remodeling process at} $X_{0}$ is defined in a similar way.
\end{definition}

\subsubsection*{Mass consistency condition}
Notice that the definition of material isomorphism does not include any relation to the mass density of the body. However, it is desirable to impose some kind of condition to be consistent with the mass density.\\
Thus, for each instant of time, a volume form is specified, i.e., we have $\omega \left( t \right)$ a time dependent volume form on $\mathcal{B}$. Let $X_{0}$ be a material particle presenting a remodeling. Without loss of generality, we assume that the remodelling process $P$ satisfies the initial condition, 
$$P \left( 0 \right)  = I.$$
Then, \textit{mass consistency condition} (\cite{EPSBOOK2,EPSTEIN201572}) consists of the imposition on the remodeling process at $X_{0}$ of that it preserves the volume form. In other words, a (local) right smooth remodeling process $P$ at $X_{0}$ satisfies the mass consistency condition, if and only if,
$$P\left( t\right)^{*} \omega \left( t \right) = \omega \left( 0 \right), \ \forall t \in I.$$
Then, equivalently, associated mass density, $\rho \left( t \right) = \vert \omega \left( t \right) \vert$, should satisfy that
\begin{equation}\label{232.second}
    \rho \left( t \right) = \vert J_{P\left(t\right)} \vert^{-1} \rho \left(0\right)
\end{equation}
where $J_{P\left( t \right)}$ is the determinant of $P\left( t \right)$.
We will also assume that $P$ is orientation-preserving, i.e., $J_{P\left( t \right)}>0$.\\
Calculating the time derivatives of Eq. (\ref{232.second}),

\begin{eqnarray*}
\dot{\rho}\left( t \right)  & = &  \dot{\left( J_{P\left(t\right)}^{-1} \rho \left(0\right)\right)}\\
  & = &- \rho \left(0\right) J^{-2}_{P\left(t\right)}\left[ \dot{J}_{P\left( t \right)} \right]\\
  & = & -\rho \left(0\right) J^{-2}_{P\left(t\right)}\left[  J_{P\left( t \right)} Tr \left(P^{-1}\left(t\right) \cdot \dot{P}\left(t\right)\right) \right]\\
  & = & -\rho \left(0\right) J^{-1}_{P\left(t\right)}  Tr \left(P^{-1}\left(t\right) \cdot \dot{P}\left(t\right)\right) \\
 & = & -\rho \left(t\right) Tr \left(P^{-1}\left(t\right) \cdot \dot{P}\left(t\right)\right)
\end{eqnarray*}
The term $L_{P\left(t\right)} = P^{-1}\left(t\right) \cdot \dot{P}\left(t\right)$ is called \textit{remodeling velocity gradient}.
\begin{proposition}
Let $\mathcal{C}$ be a body-time manifold and $X_{0}$ be a material particle. A remodeling process $P$ is producing growth if, and only if, the trace of the remodeling velocity gradient is negative. Conversely, resorption is equivalent to a positive trace of the remodeling velocity gradient.
\begin{proof}
The trace of the remodeling velocity gradient $L_{P\left(t\right)} = P^{-1}\left(t\right) \cdot \dot{P}\left(t\right)$ is negative (resp. positive) if, and only if, $\rho $ is an increasing (resp. decreasing) function or, in other words, the volume of $\mathcal{B}$ respect to $\omega \left(t \right)$ is increasing (resp. decreasing).
\end{proof}
\end{proposition}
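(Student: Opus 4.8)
The plan is to read the sign relation off the mass-consistency computation carried out immediately before the statement, so that almost nothing new needs to be proved. Recall that, after the normalization $P(0)=I$ with $J_{P(t)}>0$, the mass consistency condition (\ref{232.second}) reads $\rho(t)=J_{P(t)}^{-1}\rho(0)$, and differentiating it in $t$ (using Jacobi's formula $\dot J_{P(t)}=J_{P(t)}\,\mathrm{Tr}(P^{-1}(t)\dot P(t))$) gives the identity $\dot\rho(t)=-\rho(t)\,\mathrm{Tr}(L_{P(t)})$, where $L_{P(t)}=P^{-1}(t)\cdot\dot P(t)$ is the remodeling velocity gradient. This identity is the only analytic ingredient.

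First I would note that $\rho(t)=\vert\omega(t)\vert>0$ for every $t$, since $\omega(t)$ is a volume form; hence $\rho$ never vanishes, and the identity above forces the sign of $\dot\rho(t)$ to be opposite to the sign of $\mathrm{Tr}(L_{P(t)})$ at each instant. Consequently $\mathrm{Tr}(L_{P(t)})<0$ on the interval where $P$ is defined is equivalent to $\rho$ being strictly increasing there, and $\mathrm{Tr}(L_{P(t)})>0$ to $\rho$ being strictly decreasing.

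It then remains to match monotonicity of $\rho$ with growth and resorption in the sense of Definition \ref{1.17.2.se}. Since remodeling is a pointwise notion at $X_{0}$, ``growth at $X_{0}$'' is a volume increase of the body measured through its time-dependent natural volume form $\omega(t)$; because $\rho(t)=J_{P(t)}^{-1}\rho(0)$, the natural volume (equivalently, mass) element at $X_{0}$ is $J_{P(t)}^{-1}$ times the initial one, hence increases in $t$ exactly when $\rho(t)$ does (equivalently, when $J_{P(t)}$ decreases, or when the volume of $\mathcal{B}$ with respect to $\omega(t)$ increases). Combining this with the previous paragraph yields growth $\Leftrightarrow\mathrm{Tr}(L_{P(t)})<0$, and running the same chain of equivalences with all the inequalities reversed yields resorption $\Leftrightarrow\mathrm{Tr}(L_{P(t)})>0$.

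The computational steps --- positivity of $\rho$, the chain rule, Jacobi's formula --- are routine; the one point that needs care is the interpretation step, namely making precise that ``growth'', i.e.\ the volume increase of $\mathcal{B}$ appearing in Definition \ref{1.17.2.se}, coincides with monotonicity of $\rho$ through the mass consistency condition. This just amounts to fixing once and for all which volume form ``volume of $\mathcal{B}$'' refers to (the time-dependent $\omega(t)$); once that convention is pinned down, the proposition reduces entirely to the sign bookkeeping above.
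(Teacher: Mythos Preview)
Your proposal is correct and follows essentially the same route as the paper: both arguments read off the sign relation directly from the identity $\dot\rho(t)=-\rho(t)\,\mathrm{Tr}(L_{P(t)})$ derived just before the statement, using positivity of $\rho$ to conclude that $\mathrm{Tr}(L_{P(t)})<0$ (resp.\ $>0$) is equivalent to $\rho$ increasing (resp.\ decreasing), i.e.\ growth (resp.\ resorption). Your write-up simply makes explicit the steps the paper compresses into a single sentence.
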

Several interesting examples of remodeling processes may be found in the literature. In particular, in \cite{EPSBOOK2} it is used a model for orthotropic solids in which the tensor $P$ is proper orthogonal at all times. This model simulates an evolution law in trabeculae bones.\\

\noindent{Let us assume that $\mathcal{B}$ is uniform. Then, $\mathcal{B}$ is uniform in all its states. So, for a fixed point $\left(t_{0},X_{0} \right) \in \mathcal{C}$ we may find a map} 

\begin{equation}\label{1.564.34.5}
    P: \mathcal{C} \rightarrow Gl \left( 3, \mathbb{R} \right)
\end{equation}
such that, for all $\left(t,Y\right) \in \mathcal{C}$, $P \left(t , Y \right)$ is a material isomorphism from $\left( t_{0} , X_{0}\right)$ to $\left( t , Y\right)$. \textit{However, even when all the particles present smooth remodeling, $P$ does not have to be differentiable}. In other words, roughly speaking, the evolution of all the particles along the time could be ``\textit{smooth}'', but the change from the time-evolution of one particle to another could still be ``\textit{abrupt}'' (not differentiable).\\
Thus, \textit{we cannot define \textit{smooth remodeling} over the whole material evolution as the smooth remodeling at all points, we still need a more restrictive definition of smoothness on the evolution of the material body.}
\begin{definition}\label{1.7.2}
\rm
A body-time manifold $\mathcal{C}$ with some (and hence all of them) state uniform is said to be presenting a \textit{smooth uniform remodeling} if for each point $\left(t,X \right) \in \mathcal{C}$ there is a neighbourhood $\mathcal{U} $ around $\left(t,X \right)$ and a smooth map $P : \mathcal{U} \rightarrow Gl \left( 3, \mathbb{R} \right)$ such that for all $\left(s,Y \right) \in \mathcal{U}$ it satisfies that $P \left(s,Y\right)$ is a material isomorphism from $\left(t,X \right)$ to $\left(s,Y \right)$. The map $P$ is called a \textit{right (local) smooth field of material isomorphisms}. A \textit{left (local) smooth field of material isomorphisms} is defined analogously.
\end{definition}
One could think that it is reasonable that a non-uniform body present a smooth remodeling. However, the definition of this kind of smooth remodeling (more general) is not clear. One of the contributions of this paper is the use of material distributions to define and characterize this kind of smooth remodeling for non-uniform bodies (Definition \ref{smoothremd2345}).
\begin{proposition}\label{1.18.2.again}
Let $\mathcal{C}$ be a body-time manifold such that some (and hence all of them) state is uniform. Then, $\mathcal{C}$ is presenting a (smooth) uniform remodeling if, and only if, there exist (differentiable) maps $\overline{W}: Gl \left( 3, \mathbb{R} \right) \rightarrow V$ and $P : \mathcal{U} \rightarrow Gl \left( 3, \mathbb{R} \right)$ covering $\mathcal{C}$ satisfying,
\begin{equation}\label{1.8.2}
W \left(s, Y , F \right) = \overline{W}\left( F \cdot P \left(s, Y \right)\right).
\end{equation}

\begin{proof}
The proof of this proposition is analogous to Proposition \ref{1.18}.
\end{proof}
\end{proposition}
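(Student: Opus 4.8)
The plan is to imitate, essentially verbatim, the proof of Proposition~\ref{1.18}: I would prove the two implications separately, running the ``smooth'' and the ``non-smooth'' versions in parallel (every occurrence of ``differentiable'' is kept throughout, or deleted throughout). The only device needed, exactly as there, is the matrix renormalization that converts the constitutive decomposition~\eqref{1.8.2} into honest fields of material isomorphisms.

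For the implication ``\eqref{1.8.2} $\Rightarrow$ (smooth) uniform remodeling'' I would pick one of the covering sets $\mathcal{U}$, a point $\left(t_{0},X_{0}\right)\in\mathcal{U}$, and set $Q\left(s,Y\right):=P\left(s,Y\right)\cdot P\left(t_{0},X_{0}\right)^{-1}$. Evaluating~\eqref{1.8.2} at $\left(t_{0},X_{0}\right)$ gives $\overline{W}\left(G\right)=W\left(t_{0},X_{0},G\cdot P\left(t_{0},X_{0}\right)^{-1}\right)$, and substituting this back into~\eqref{1.8.2} yields $W\left(s,Y,F\right)=W\left(t_{0},X_{0},F\cdot Q\left(s,Y\right)\right)$ for all $F$; by Definition~\ref{1.33.2} and~\eqref{1.3} this says precisely that $Q\left(s,Y\right)$ (as the associated matrix of a $1$-jet) is a time-material isomorphism from $\left(t_{0},X_{0}\right)$ to $\left(s,Y\right)$, and $Q$ inherits the differentiability of $P$. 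Doing this for an arbitrary base point of each covering set checks the requirements of Definition~\ref{1.7.2}, so $\mathcal{C}$ presents a smooth uniform remodeling; in the non-smooth variant $\mathcal{U}=\mathcal{C}$ and $P$, hence $Q$, is global, so for each particle $X$ the composite $Q\left(t,X\right)\cdot Q\left(t_{0},X\right)^{-1}$ is a material isomorphism from $\left(t_{0},X\right)$ to $\left(t,X\right)$, every particle presents a remodeling, and since some state is uniform by hypothesis $\mathcal{C}$ presents a uniform remodeling in the sense of Definition~\ref{1.17.2.se}.

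For the converse I would argue as follows. In the smooth case, Definition~\ref{1.7.2} supplies, near each point, a smooth local field of material isomorphisms based at that point; fixing once and for all a base point $\left(t_{0},X_{0}\right)$ and composing these local fields along a chain of overlapping neighbourhoods joining $\left(t_{0},X_{0}\right)$ to the given point (possible since $\mathcal{C}=\mathbb{R}\times\mathcal{B}$ is connected and all states are uniform) produces smooth local fields $\widetilde{P}$ all based at $\left(t_{0},X_{0}\right)$ whose domains cover $\mathcal{C}$; then $\overline{W}\left(F\right):=W\left(t_{0},X_{0},F\right)$ together with~\eqref{1.3} gives $W\left(s,Y,F\right)=\overline{W}\left(F\cdot\widetilde{P}\left(s,Y\right)\right)$, with $\overline{W}$ differentiable because $W$ is. In the non-smooth case I would invoke Corollary~\ref{corollaryunifremod2344}: a uniform remodeling means $\Omega\left(\mathcal{C}\right)$ is transitive, so for a fixed $\left(t_{0},X_{0}\right)$ one may choose, for every $\left(s,Y\right)$, a material isomorphism from $\left(t_{0},X_{0}\right)$ to $\left(s,Y\right)$; these matrices assemble into $P:\mathcal{C}\rightarrow Gl\left(3,\mathbb{R}\right)$, and $\overline{W}\left(F\right):=W\left(t_{0},X_{0},F\right)$ then satisfies~\eqref{1.8.2}. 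I do not expect a genuine obstacle here: the proposition is an unwinding of the definitions, and the only points requiring a little care (the matrix bookkeeping in the renormalization $Q=P\cdot P\left(t_{0},X_{0}\right)^{-1}$, and, in the smooth converse, the chaining of local fields needed to single out one archetypal response $\overline{W}$) are precisely the steps already carried out, in their purely local and time-independent form, in the proof of Proposition~\ref{1.18}.
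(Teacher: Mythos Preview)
Your proposal is correct and follows essentially the same route as the paper: the paper's proof of Proposition~\ref{1.18} only spells out the implication ``\eqref{1.8}$\Rightarrow$(smooth) uniformity'' via the renormalization $Q\left(Y\right)=P\left(Y\right)P\left(X\right)^{-1}$, and your $Q\left(s,Y\right)=P\left(s,Y\right)\cdot P\left(t_{0},X_{0}\right)^{-1}$ is exactly the time-dependent analogue. You supply more detail on the converse (in particular the chaining of local fields to a common base point), but this only makes explicit what the paper leaves implicit in the paragraph preceding Proposition~\ref{1.18}.
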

\noindent{Let us consider now $W$ as a map on $\Phi \left( \mathcal{V} \right)$.}
\begin{proposition}\label{4.40.second234.Uniform345}
Let be a body-time manifold $\mathcal{C}$ with some (and hence all of them) state uniform. $\mathcal{C}$ is presenting a smooth uniform remodeling if, and only if, for each instant $t$ and each material point $X$ there is an open neighbourhood $\mathcal{D}\subset \mathcal{C}$ around $\left( t , X \right)$ such that for all $\left(s,Y \right) \in \mathcal{D}$ and $\left( s,t , j_{Y,X}^{1} \phi \right) \in \Omega \left( \mathcal{C} \right)$ there exists a local section $\mathcal{P}$ of the source map $\overline{\alpha}$ of $\Omega \left( \mathcal{C}\right)$ to the $\overline{\beta}-$fibre $\Omega\left( \mathcal{C}\right)^{\left(t,X\right)}$,
$$ \overline{\alpha}_{\left(t,X\right)} :\Omega \left( \mathcal{C}\right)^{\left(t,X\right)} \rightarrow \mathcal{C},$$
from $\epsilon \left( t ,X\right)$ to $\left( s,t , j_{Y,X}^{1} \phi \right)$.
\end{proposition}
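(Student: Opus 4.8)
The plan is to reduce the statement to Definition~\ref{1.7.2}, exactly as Proposition~\ref{4.40} reduces Definition~\ref{1.7} in the purely spatial case; the extra ingredient needed is the dictionary, furnished by the reference configuration $\phi_{0}$, between matrix-valued fields of material isomorphisms and smooth maps into $\Omega\left(\mathcal{C}\right)\subseteq\Phi\left(\mathcal{V}\right)$. Concretely, in the coordinates \eqref{17.second2313} on $\Phi\left(\mathcal{V}\right)$, a smooth map $P:\mathcal{U}\rightarrow Gl\left(3,\mathbb{R}\right)$ based at a point $\left(s,Y\right)$, with $P\left(s',Y'\right)$ read as the Jacobian at $\phi_{0}\left(Y\right)$ of $\phi_{0}\circ\psi_{s',Y'}\circ\phi_{0}^{-1}$ for a local automorphism $\psi_{s',Y'}$ of $\mathcal{B}$ carrying $Y$ to $Y'$, corresponds to the smooth map $\left(s',Y'\right)\mapsto\left(s,s',j^{1}_{Y,Y'}\psi_{s',Y'}\right)$ from $\mathcal{U}$ into $\Phi\left(\mathcal{V}\right)$. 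Since $W$ is independent of the final point (see \eqref{92.second}) the $1$-jet may be chosen freely once its source, target and linear part are fixed, so smoothness of $P$ and of this lift are equivalent, and the lift lands in $\Omega\left(\mathcal{C}\right)$ precisely when every $P\left(s',Y'\right)$ is a material isomorphism. Under this dictionary a right smooth field of material isomorphisms based at $\left(s,Y\right)$ is a smooth section of $\overline{\beta}$ valued in the $\overline{\alpha}$-fibre $\Omega\left(\mathcal{C}\right)_{\left(s,Y\right)}$, and composing it with the (smooth) inversion $i$ of $\Phi\left(\mathcal{V}\right)$ turns it into a smooth section of $\overline{\alpha}_{\left(s,Y\right)}:\Omega\left(\mathcal{C}\right)^{\left(s,Y\right)}\rightarrow\mathcal{C}$ whose value at $\left(s,Y\right)$ is $\epsilon\left(s,Y\right)$.

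For the direct implication I would fix $\left(t,X\right)$, a nearby $\left(s,Y\right)$ and an element $g=\left(s,t,j^{1}_{Y,X}\phi\right)\in\Omega\left(\mathcal{C}\right)$. Applying Definition~\ref{1.7.2} at the point $\left(s,Y\right)$ and normalizing (replace $P$ by $P\left(\,\cdot\,\right)P\left(s,Y\right)^{-1}$, which is still a field of material isomorphisms since $P\left(s,Y\right)$ is a material symmetry) gives a right smooth field based at $\left(s,Y\right)$ with identity value there; by the dictionary and the inversion this yields a smooth local section $\mathcal{P}_{1}$ of $\overline{\alpha}_{\left(s,Y\right)}$, defined near $\left(s,Y\right)$, with $\mathcal{P}_{1}\left(s,Y\right)=\epsilon\left(s,Y\right)$. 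Since $\overline{\alpha}\left(g\right)=\left(s,Y\right)$ and $\overline{\beta}\left(g\right)=\left(t,X\right)$, the left translation $L_{g}$ maps the $\overline{\beta}$-fibre over $\left(s,Y\right)$ diffeomorphically onto that over $\left(t,X\right)$ and carries $\Omega\left(\mathcal{C}\right)$ into itself, so $\mathcal{P}:=L_{g}\circ\mathcal{P}_{1}$ is a smooth local section of $\overline{\alpha}_{\left(t,X\right)}:\Omega\left(\mathcal{C}\right)^{\left(t,X\right)}\rightarrow\mathcal{C}$ with $\mathcal{P}\left(s,Y\right)=g\cdot\epsilon\left(s,Y\right)=g$; this is the section required, the distinguished element of the sectioned fibre $\Omega\left(\mathcal{C}\right)^{\left(t,X\right)}$ being $\epsilon\left(t,X\right)$.

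For the converse, I would specialise the hypothesis to $\left(s,Y\right)=\left(t,X\right)$ and $g=\epsilon\left(t,X\right)$, obtaining a smooth local section $\mathcal{P}$ of $\overline{\alpha}_{\left(t,X\right)}$ near $\left(t,X\right)$ with $\mathcal{P}\left(t,X\right)=\epsilon\left(t,X\right)$; then $i\circ\mathcal{P}$ sends each $\left(s',Y'\right)$ near $\left(t,X\right)$ to a material isomorphism from $\left(t,X\right)$ to $\left(s',Y'\right)$, and reading off its linear part through $\phi_{0}$ produces the smooth $P:\mathcal{U}\rightarrow Gl\left(3,\mathbb{R}\right)$ demanded by Definition~\ref{1.7.2}. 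Here the assumption that some, hence every, state is uniform enters only via Corollary~\ref{corollaryunifremod2344}, guaranteeing that $\Omega\left(\mathcal{C}\right)$ is transitive so that the relevant $\overline{\alpha}$- and $\overline{\beta}$-fibres are nonempty over the points in play (without which the "for all $\left(s,Y\right)$ and all $g$" in the statement would be vacuous).

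I expect the genuinely routine parts to be the coordinate bookkeeping of the $\phi_{0}$-transport and the smoothness of the translations and of the inversion (immediate, as these are restrictions of structure maps of the Lie groupoid $\Phi\left(\mathcal{V}\right)$). The one place that needs care — and the analogue of the corresponding step in the proof of Proposition~\ref{4.40} — is tracking at each stage which source/target fibre each object inhabits and composing with the correct translation, so that the output is a bona fide section of $\overline{\alpha}_{\left(t,X\right)}$ passing through the prescribed $g$ rather than merely a section of some nearby fibre.
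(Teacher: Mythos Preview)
The paper gives no proof of this proposition (nor of the analogous Proposition~\ref{4.40}); the only indication is the sentence preceding Proposition~\ref{4.40} that one proceeds ``by composing appropriately with the reference configuration''. Your argument is exactly that route --- the $\phi_{0}$-dictionary between $Gl(3,\mathbb{R})$-valued fields and maps into $\Phi(\mathcal{V})$, followed by inversion and a left translation $L_{g}$ to land in the prescribed $\overline{\beta}$-fibre through the prescribed element --- and it is correct.

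One small correction to your closing remark: Corollary~\ref{corollaryunifremod2344} characterizes uniform \emph{remodeling} (transitivity of $\Omega(\mathcal{C})$), which is strictly stronger than ``some state is uniform'', so you cannot invoke it from that hypothesis alone. In fact your argument never uses transitivity: in the forward direction Definition~\ref{1.7.2} supplies the field directly, and in the converse your specialization $(s,Y)=(t,X)$, $g=\epsilon(t,X)$ is always available whether or not other elements $g$ exist. The state-uniformity hypothesis is present only so that the phrase ``smooth uniform remodeling'' is meaningful in the conclusion; it does no work in the equivalence itself.
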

For these reasons, (local) sections of $\overline{\alpha}_{\left( t , X \right)}$ will be called \textit{left local (smooth) field of material isomorphisms at $\left( t,X \right)$}. On the other hand, local sections of
$$ \overline{\beta}^{\left( t , X \right)} :\Omega \left( \mathcal{C}\right)_{\left( t , X \right)} \rightarrow \mathcal{C},$$
will be called \textit{right local (smooth) fields of material isomorphisms at $\left( t , X \right)$}.\\
Hence, $\mathcal{C}$ is presenting a smooth uniform remodeling if, and only if, for any points $\left(t,X\right)$ and $\left( s, Y \right)$, there are two open neighbourhoods $\mathcal{D}$ and $\mathcal{E}$ respectively and a differentiable map
$$\mathcal{P} : \mathcal{D} \times \mathcal{E}  \rightarrow \Omega \left( \mathcal{C} \right) \subseteq \Phi \left( \mathcal{V}  \right),$$
which is a section of the anchor map $\left( \alpha , \beta \right)$ of $\Phi \left( \mathcal{V}  \right)$. When $t=s$ we may assume $\mathcal{D} = \mathcal{E}$ and $\mathcal{P}$ is a morphism of groupoids over the identity map, i.e.,

$$ \mathcal{P} \left( \left( r, Z\right), \left(l , T \right)\right) = \mathcal{P} \left(   \left( m, S\right), \left(l , T \right)\right) \mathcal{P} \left(  \left( r, Z\right), \left(m , S \right) \right),$$
for all $\left( r, Z\right), \left(l , T \right), \left(m,S\right) \in \mathcal{D}$. These kind of maps are called \textit{local (smooth) field of material isomorphisms.}

\begin{corollary}\label{4.4.second2324.uniform2445}
Let be a body-time manifold $\mathcal{C}$ with some (and hence all of them) state uniform. $\mathcal{C}$ is presenting a smooth uniform remodeling if, and only if, $\Omega \left( \mathcal{C}\right)$ is a transitive Lie subgroupoid of $\Phi \left( \mathcal{V} \right)$.
\begin{proof}
Suppose that $\mathcal{C}$ is presenting a smooth uniform remodeling. Let be a triple $\left( s,t , j_{Y,X}^{1} \phi \right) \in \Omega \left( \mathcal{C} \right)$ and a local (smooth) field of material isomorphism through $\left( s,t , j_{Y,X}^{1} \phi \right)$,
$$\mathcal{P} : \mathcal{D} \times \mathcal{E}  \rightarrow \Omega \left( \mathcal{C} \right) \subseteq \Phi \left( \mathcal{V}  \right) \subseteq \Phi \left( \mathcal{V}  \right).$$
Then, the local structure of manifold is given by the charts $\Psi_{\mathcal{D},\mathcal{E}} : \Omega \left( \mathcal{D},\mathcal{E}\right) \  \rightarrow \  \mathbb{R} \times \mathbb{R} \times \Omega \left( \mathcal{C} \right)_{\left(s,Y\right)}^{\left( t,X\right)}$ such that,
$$\Psi_{\mathcal{D},\mathcal{E}}\left( k,l , j_{Z,T}^{1} \psi \right)=  \left( k, l  , \mathcal{P} \left( \left( l, T \right), \left(t,X \right)\right) \left[ \left( k,l , j_{Z,T}^{1} \psi \right) \right] \mathcal{P} \left( \left(s,Y \right), \left(k,Z \right) \right)\right),$$

\noindent{for all $\left( k,l , j_{Z,T}^{1} \psi \right) \in \Omega \left( \mathcal{D},\mathcal{E}\right)$. Here, $\Omega \left( \mathcal{D},\mathcal{E}\right)$ is the set of material isomorphisms from $ \mathcal{D}$ to instants at $\mathcal{E}$.}
\end{proof}
\end{corollary}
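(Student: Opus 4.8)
The plan is to follow, \emph{mutatis mutandis}, the proof of Corollary~\ref{4.4} for elastic simple materials, replacing the $1$-jets groupoid $\Pi^{1}(\mathcal{B},\mathcal{B})$ by the frame groupoid $\Phi(\mathcal{V})$ of the vertical bundle, and the body $\mathcal{B}$ by the body-time manifold $\mathcal{C}$. So the proof splits into a constructive forward implication and a converse that is essentially a citation of the structure theory of transitive Lie groupoids.

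For the forward implication, I would assume that $\mathcal{C}$ presents a smooth uniform remodeling and fix an arbitrary triple $(s,t,j^{1}_{Y,X}\phi)\in\Omega(\mathcal{C})$. By Proposition~\ref{4.40.second234.Uniform345} and the discussion following it, there is a local smooth field of material isomorphisms $\mathcal{P}:\mathcal{D}\times\mathcal{E}\to\Omega(\mathcal{C})\subseteq\Phi(\mathcal{V})$ through $(s,t,j^{1}_{Y,X}\phi)$, i.e.\ a differentiable section of the anchor of $\Phi(\mathcal{V})$ taking values in $\Omega(\mathcal{C})$. First I would observe, by the same continuity argument as in Proposition~\ref{4.3} now applied to $W$ on $\Phi(\mathcal{V})$, that the material symmetry group $G(t,X)$ is a closed, hence Lie, subgroup of the isotropy group of $\Phi(\mathcal{V})$ at $(t,X)$, so that $\Omega(\mathcal{C})_{(s,Y)}^{(t,X)}$ is a submanifold of $\Phi(\mathcal{V})$. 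Then I would use $\mathcal{P}$ to build the candidate chart
\[
\Psi_{\mathcal{D},\mathcal{E}}:\Omega(\mathcal{D},\mathcal{E})\longrightarrow \mathbb{R}\times\mathbb{R}\times\Omega(\mathcal{C})_{(s,Y)}^{(t,X)},
\]
\[
\Psi_{\mathcal{D},\mathcal{E}}(k,l,j^{1}_{Z,T}\psi)=\bigl(k,l,\ \mathcal{P}((l,T),(t,X))\cdot\bigl[(k,l,j^{1}_{Z,T}\psi)\bigr]\cdot\mathcal{P}((s,Y),(k,Z))\bigr),
\]
exactly as in the proof sketch; its inverse is manifestly smooth, so $\Omega(\mathcal{D},\mathcal{E})$ inherits a manifold structure. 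Finally I would check the routine compatibility of these charts for varying triples (left and right translations in $\Phi(\mathcal{V})$ by smoothly varying elements are diffeomorphisms), conclude that $\Omega(\mathcal{C})\hookrightarrow\Phi(\mathcal{V})$ becomes an embedding of Lie groupoids with $\overline{\alpha},\overline{\beta}$ submersions, and note that transitivity is immediate since a uniform remodeling means the anchor $(\overline{\alpha},\overline{\beta})$ is surjective (equivalently Corollary~\ref{corollaryunifremod2344}).

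For the converse I would assume $\Omega(\mathcal{C})$ is a transitive Lie subgroupoid of $\Phi(\mathcal{V})$. By the structure theory of transitive Lie groupoids (\cite{KMG}), $\Omega(\mathcal{C})$ is locally trivial, so for every $(t,X)\in\mathcal{C}$ and every arrow $(s,t,j^{1}_{Y,X}\phi)\in\Omega(\mathcal{C})$ there is a local smooth section of $\overline{\alpha}_{(t,X)}:\Omega(\mathcal{C})^{(t,X)}\to\mathcal{C}$ through $\epsilon(t,X)$ and through that arrow; these are precisely the left local smooth fields of material isomorphisms of Proposition~\ref{4.40.second234.Uniform345}, so $\mathcal{C}$ presents a smooth uniform remodeling, and transitivity also forces each state to be uniform (again Corollary~\ref{corollaryunifremod2344}). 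Since $\Omega(\mathcal{C})$ is handed to us as a subgroupoid of the fixed Lie groupoid $\Phi(\mathcal{V})$, the consistency Proposition~\ref{consistencyproperty2904} ensures nothing here depends on the ambient choices.

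The main obstacle I expect is the bookkeeping in the forward direction: verifying that the locally defined charts $\Psi_{\mathcal{D},\mathcal{E}}$ glue into a genuine (Hausdorff, second countable) smooth structure on $\Omega(\mathcal{C})$ compatible with that of $\Phi(\mathcal{V})$, and that source and target come out as submersions. This rests on the Lie-group property of $G(t,X)$ (the analogue of Proposition~\ref{4.3}) together with careful use of the smoothness of $\mathcal{P}$ and of the translations in $\Phi(\mathcal{V})$; by contrast the converse is little more than an invocation of the local triviality of transitive Lie groupoids.
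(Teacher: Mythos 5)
Your proposal is correct and follows essentially the same route as the paper: the forward direction builds the same charts $\Psi_{\mathcal{D},\mathcal{E}}$ from a local smooth field of material isomorphisms (mirroring Corollary \ref{4.4}, with the analogue of Proposition \ref{4.3} guaranteeing that the isotropy/arrow sets are manifolds), and the converse is the standard local triviality of transitive Lie groupoids cited from \cite{KMG}. You merely spell out the chart-compatibility and submersion checks that the paper leaves implicit, which is a faithful filling-in rather than a different argument.
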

Again, we have here a clear difference between a process of remodeling of a uniform body and a process of \textit{smooth} remodeling of a uniform body (see Corollary \ref{1.17.2.se}).\\
Of course, the existence of fields of material isomorphisms is not canonical. Indeed, for a (local) smooth field of material isomorphisms
$$\mathcal{P} : \mathcal{D} \times \mathcal{D}  \rightarrow \Omega \left( \mathcal{C} \right) \subseteq \Phi \left( \mathcal{V}  \right),$$
any other remodeling process $\mathcal{Q}$ satisfies that
$$\mathcal{Q} \left(\left(s,Y \right), \left(k,Z \right) \right) \in   \mathcal{P} \left( \left(t_{0},X_{0}\right), \left(k,Z \right) \right) \cdot  \Omega \left( \mathcal{C} \right)_{\left(t_{0},X_{0}\right)}^{\left( t_{0},X_{0}\right)} \cdot \mathcal{P} \left( \left(s,Y \right), \left(t_{0},X_{0}\right) \right),$$
for a fixed point $\left( t_{0} , X_{0} \right)$ at $\mathcal{D}$. Thus, the symmetry groups work of $\Omega \left( \mathcal{C} \right) $ as a measure of the degree of freedom available in the choice of the fields of material isomorphisms.\\\\

\noindent{Let $\Omega \left( \mathcal{C} \right)$ be the material groupoid associated to the body-time manifold $\mathcal{C}$. Then, we may consider the material distribution $A \Omega \left( \mathcal{C} \right)$, body-material distribution $A \Omega \left( \mathcal{C} \right)^{\sharp}$ and the uniform-material distribution $A \Omega \left( \mathcal{C} \right)^{B}$ and their associated foliations, the material foliation $\overline{\mathcal{F}}$, body-material foliation $\mathcal{F}$ and uniform-material foliation $\mathcal{G}$, respectively.}\\
\begin{theorem}\label{14.1.second323.globalevolutionsd}
Let be a body-time manifold $\mathcal{C}$. The body-material foliation $\mathcal{F}$ (resp. uniform material foliation $\mathcal{G}$) divides $\mathcal{C}$ into maximal smooth uniform remodeling processes (resp. uniform remodeling processes).
\end{theorem}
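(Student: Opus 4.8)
The plan is to transcribe the proof of Theorem~\ref{14.1} into the body-time setting, replacing the $1$-jets groupoid $\Pi^{1}(\mathcal{B},\mathcal{B})$ by the frame groupoid $\Phi(\mathcal{V})$, the body $\mathcal{B}$ by $\mathcal{C}$, the material groupoid $\Omega(\mathcal{B})$ by $\Omega(\mathcal{C})$, and the characterizations of uniformity and smooth uniformity by their remodeling counterparts, Corollary~\ref{corollaryunifremod2344} and Corollary~\ref{4.4.second2324.uniform2445}. The two halves of the statement are then proved in strict parallel: for $\mathcal{F}$ one works with the couple $(\Phi(\mathcal{V}),\Omega(\mathcal{C}))$ and its body-material distribution $A\Omega(\mathcal{C})^{\sharp}$, while for $\mathcal{G}$ one works with the couple $(\mathcal{C}\times\mathcal{C},\Omega(\mathcal{C})^{B})$, where $\Omega(\mathcal{C})^{B}$ is the subgroupoid of the pair groupoid induced, as in Example~\ref{10.41}, by the equivalence relation ``being connected by a time-material isomorphism''.

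First I would show that each leaf is a process of the right kind. Fix $(t,X)\in\mathcal{C}$. By Theorem~\ref{10.24} the material distribution $A\Omega(\mathcal{C})^{T}$ is integrable and $\Omega(\mathcal{C})$ is a union of leaves of $\overline{\mathcal{F}}$; by Theorem~\ref{10.20} applied to $(\Phi(\mathcal{V}),\Omega(\mathcal{C}))$ there is a transitive Lie subgroupoid $\Omega(\mathcal{C})(\mathcal{F}(t,X))$ of $\Phi(\mathcal{V})$ with base the leaf $\mathcal{F}(t,X)$, all of whose elements lie in $\Omega(\mathcal{C})$ and are therefore time-material isomorphisms. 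Transitivity over the whole leaf forces, in particular, each state of the leaf to be a uniform simple body, and local sections of the anchor map of $\Omega(\mathcal{C})(\mathcal{F}(t,X))$ (which exist because a transitive Lie groupoid admits them) project onto local smooth fields of material isomorphisms on $\mathcal{F}(t,X)$; hence $\mathcal{F}(t,X)$ presents a smooth uniform remodeling, which is precisely Corollary~\ref{4.4.second2324.uniform2445} applied to the leaf. The $\mathcal{G}$-case is the same argument run through $\mathcal{C}\times\mathcal{C}$: the only transitive Lie subgroupoid of the pair groupoid over a leaf $\mathcal{G}(t,X)$ is $\mathcal{G}(t,X)\times\mathcal{G}(t,X)$, so Theorem~\ref{10.20} gives $\mathcal{G}(t,X)\times\mathcal{G}(t,X)\subseteq\Omega(\mathcal{C})^{B}$, i.e. all points of $\mathcal{G}(t,X)$ are pairwise materially isomorphic, and Corollary~\ref{corollaryunifremod2344} applied to the leaf says $\mathcal{G}(t,X)$ presents a uniform remodeling.

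Next I would prove maximality. Let $\mathcal{H}$ be any foliation of $\mathcal{C}$ each of whose leaves presents a smooth uniform remodeling. Applying Corollary~\ref{4.4.second2324.uniform2445} to the leaf $\mathcal{H}(t,X)$ yields a transitive Lie subgroupoid $\Gamma(t,X)$ of $\Phi(\mathcal{V})$ over $\mathcal{H}(t,X)$ consisting of time-material isomorphisms, hence contained in $\Omega(\mathcal{C})$, and whose family of $\beta$-fibres foliates $\Phi(\mathcal{V})$ over the foliated base; Corollary~\ref{10.39} then gives $\mathcal{H}(t,X)\subseteq\mathcal{F}(t,X)$ for all $(t,X)$, which is the asserted maximality of $\mathcal{F}$. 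For $\mathcal{G}$ one repeats this with $\Phi(\mathcal{V})$ replaced by $\mathcal{C}\times\mathcal{C}$, with $\Gamma(t,X)=\mathcal{H}(t,X)\times\mathcal{H}(t,X)$, and with Corollary~\ref{corollaryunifremod2344} in place of Corollary~\ref{4.4.second2324.uniform2445}: if every leaf has all its points pairwise materially isomorphic, then each leaf lies inside a single orbit of the equivalence relation, and Corollary~\ref{10.39} (equivalently, the maximality built into Example~\ref{10.41}) gives $\mathcal{H}(t,X)\subseteq\mathcal{G}(t,X)$.

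The main obstacle --- present, though left tacit, already in the spatial Theorem~\ref{14.1} --- is the bookkeeping needed to identify ``a submanifold $\mathcal{M}\subseteq\mathcal{C}$, which need not be of the product form $I\times\mathcal{N}$, presents a (smooth) uniform remodeling'' with ``the restriction of $\Omega(\mathcal{C})$ to $\mathcal{M}$ is a transitive (Lie) subgroupoid of $\Phi(\mathcal{V})$''. Pinning this down requires writing out the induced evolution-material structure on $\mathcal{M}$ (restricting $W$ to $1$-jets of local automorphisms of $\mathcal{B}$ that carry states of $\mathcal{M}$ into states of $\mathcal{M}$) and checking that the time-material isomorphisms of $\mathcal{M}$ are exactly the elements of $\Omega(\mathcal{C})$ joining points of $\mathcal{M}$; once this compatibility is recorded, the rest is a verbatim transcription of the characteristic-distribution machinery of Theorems~\ref{10.24} and \ref{10.20} and Corollary~\ref{10.39}, exactly as in Theorem~\ref{14.1}, with the consistency Proposition~\ref{consistencyproperty2904} ensuring that the construction does not depend on whether one views $\Omega(\mathcal{C})$ inside $\Phi(\mathcal{V})$ or inside its restriction to $\mathcal{M}$.
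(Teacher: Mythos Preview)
Your proposal is correct and follows exactly the approach the paper takes. The paper in fact gives no explicit proof of this theorem: it is stated as the body-time analogue of Theorem~\ref{14.1}, which was itself recorded ``as a corollary of Theorem~\ref{10.24} and corollary~\ref{10.39}'', and the surrounding text makes clear that the argument is the straightforward specialization of Theorems~\ref{10.24}, \ref{10.20} and Corollary~\ref{10.39} to the couple $(\Phi(\mathcal{V}),\Omega(\mathcal{C}))$ (for $\mathcal{F}$) and to the pair-groupoid couple (for $\mathcal{G}$), with Corollaries~\ref{4.4.second2324.uniform2445} and \ref{corollaryunifremod2344} providing the dictionary between ``transitive (Lie) subgroupoid'' and ``(smooth) uniform remodeling''. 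Your write-up is more explicit than the paper about the bookkeeping for leaves that are not product submanifolds, but this is consonant with the paper's own remarks immediately after the theorem and in Definition~\ref{materialsubevol56}.
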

Notice that, the foliations $\mathcal{F}$ and $\mathcal{G}$ are foliations of the evolution material $\mathcal{C}$. Hence, each leaf is a submanifold of $\mathcal{C}$, i.e., it defines a material evolution of a material submanifold of $\mathcal{B}$ (see Definition \ref{materialsubevol56}). So, in general, it cannot be properly written as a product space 
\begin{equation}\label{productspace}
    \mathbb{R}\times \mathcal{N},
\end{equation}
with $\mathcal{N}$ a submanifold of $\mathcal{B}$.
Nevertheless, this impossibility turns out to be the most natural (see below of Definition \ref{materialsubevol56}).\\
Notice that, the dimensions of the leaves of the body-material foliation $\mathcal{F}$ (resp. uniform material foliation $\mathcal{G}$) are the dimensions of the fibres of $A\Omega \left( \mathcal{C} \right)^{\sharp}_{\left(t,X\right)}$ (resp. $ A \Omega \left( \mathcal{C} \right)^{B}_{\left(t,X\right)} $). So, may prove the following result:
\begin{theorem}\label{14.1.second323.globalevolutionsd.dimesions}
Let be a body-time manifold $\mathcal{C}$. $\mathcal{C}$ presents a smooth uniform remodeling process (resp. uniform remodeling) if, and only if, $dim  \left( A \Omega \left( \mathcal{C} \right)^{\sharp}_{\left(t,X\right)}\right) = 4 $ (resp. $dim  \left( A \Omega \left( \mathcal{C} \right)^{B}_{\left(t,X\right)}\right) = 4 $) for all instant $t$ and particle $X$, with $A \Omega \left( \mathcal{C}\right)^{\sharp}_{\left(t,X\right)}$ (resp. $A \Omega \left( \mathcal{C}\right)^{B}_{\left(t,X\right)}$) the fibre of $A \Omega \left( \mathcal{C}\right)^{\sharp}$ (resp. $A \Omega \left( \mathcal{C}\right)^{B}$) at $\left(t,X\right)$.
\end{theorem}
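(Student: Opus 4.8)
The plan is to reduce the statement to the foliation description of remodeling already obtained, together with the single arithmetic fact that $\dim\mathcal{C}=1+\dim\mathcal{B}=4$. First I would recall, as noted just before the statement, that the body-material foliation $\mathcal{F}$ integrates the distribution $A\Omega(\mathcal{C})^{\sharp}$ and the uniform-material foliation $\mathcal{G}$ integrates $A\Omega(\mathcal{C})^{B}$, so that for every $(t,X)\in\mathcal{C}$ one has $\dim(A\Omega(\mathcal{C})^{\sharp}_{(t,X)})=\dim\mathcal{F}((t,X))$ and $\dim(A\Omega(\mathcal{C})^{B}_{(t,X)})=\dim\mathcal{G}((t,X))$. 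Since $\dim\mathcal{C}=4$, a fibre of $A\Omega(\mathcal{C})^{\sharp}$ has dimension $4$ exactly when the leaf of $\mathcal{F}$ through that point is top-dimensional, hence an open submanifold of $\mathcal{C}$. I would carry out the argument for $A\Omega(\mathcal{C})^{\sharp}$ and smooth uniform remodeling; the case of $A\Omega(\mathcal{C})^{B}$ and ordinary uniform remodeling is essentially identical, with $\mathcal{G}$, Corollary~\ref{corollaryunifremod2344}, and the transitive distribution playing the roles of their $\sharp$-counterparts.

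For the ``only if'' direction I would argue as follows. Assume $\mathcal{C}$ presents a smooth uniform remodeling. Then the trivial one-leaf foliation $\{\mathcal{C}\}$ of the connected manifold $\mathcal{C}$ is a foliation whose single leaf is a smooth uniform remodeling process, so the maximality asserted in Theorem~\ref{14.1.second323.globalevolutionsd} — in the precise sense explained after Theorem~\ref{14.1}, namely that any foliation by smooth uniform remodeling processes is thinner than $\mathcal{F}$ — forces $\mathcal{C}=\{\mathcal{C}\}((t,X))\subseteq\mathcal{F}((t,X))$, hence $\mathcal{F}((t,X))=\mathcal{C}$, for every $t$ and $X$. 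Therefore $\dim(A\Omega(\mathcal{C})^{\sharp}_{(t,X)})=\dim\mathcal{C}=4$ everywhere. Alternatively, one may invoke Corollary~\ref{4.4.second2324.uniform2445} to see that $\Omega(\mathcal{C})$ is a transitive Lie subgroupoid of $\Phi(\mathcal{V})$ and then the remark following Corollary~\ref{10.39} to get $\mathcal{F}((t_{0},X_{0}))=\mathcal{C}$ for one point, hence — the leaves of $\mathcal{F}$ being disjoint and covering $\mathcal{C}$ — for every point.

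For the ``if'' direction, assume $\dim(A\Omega(\mathcal{C})^{\sharp}_{(t,X)})=4$ for all $(t,X)$. Then every leaf of $\mathcal{F}$ is a $4$-dimensional immersed submanifold of the $4$-manifold $\mathcal{C}$, hence open. The leaves are pairwise disjoint and cover $\mathcal{C}$, and $\mathcal{C}=\mathbb{R}\times\mathcal{B}$ is connected (if $\mathcal{B}$ fails to be connected one runs the argument on each connected component, the hypothesis being pointwise), so a partition of $\mathcal{C}$ into nonempty open sets must be trivial; thus $\mathcal{F}$ has a single leaf equal to $\mathcal{C}$. By Theorem~\ref{14.1.second323.globalevolutionsd} this leaf is a maximal smooth uniform remodeling process, that is, $\mathcal{C}$ presents a smooth uniform remodeling process. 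The $B$-version follows the same way, replacing ``smooth uniform remodeling'' by ``uniform remodeling'', $\mathcal{F}$ by $\mathcal{G}$, and Corollary~\ref{4.4.second2324.uniform2445} by Corollary~\ref{corollaryunifremod2344}, using Corollaries~\ref{10.39} and \ref{10.42} for the maximality of $\mathcal{G}$.

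The only step that is not purely formal is the passage between the pointwise notion ``$\mathcal{C}$ presents a (smooth) uniform remodeling'' and triviality of the relevant foliation, and this I would not reprove: it is exactly what Theorem~\ref{14.1.second323.globalevolutionsd} (with Corollary~\ref{4.4.second2324.uniform2445} and the maximality results for the characteristic foliations) already provides. Granting that, the present theorem is just the translation of ``the foliation has a single $4$-dimensional leaf'' into ``the integrating distribution has constant rank $4$'', and the rank is forced to be $4$ only because $4=\dim\mathcal{C}$. I expect the main point requiring care to be precisely this last observation: one must use the equality $4=\dim\mathcal{C}$ — not merely that $4$ is the number of coordinates $t,x^{1},x^{2},x^{3}$ — so that full rank of the distribution is equivalent to openness of the leaves, which together with connectedness collapses the foliation to a single leaf.
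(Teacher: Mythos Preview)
Your proposal is correct and follows the same approach as the paper: the paper does not give a separate proof environment for this theorem, but simply records, immediately before the statement, that the leaf dimensions of $\mathcal{F}$ (resp.\ $\mathcal{G}$) coincide with the fibre dimensions of $A\Omega(\mathcal{C})^{\sharp}$ (resp.\ $A\Omega(\mathcal{C})^{B}$), so that the result follows from Theorem~\ref{14.1.second323.globalevolutionsd} together with $\dim\mathcal{C}=4$. Your write-up is a more careful and explicit version of exactly this argument, including the connectedness step that collapses a foliation by open leaves to a single leaf.
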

Therefore, this theorem bring us a computational condition of testing the property of being a ``\textit{process of remodeling}''. In particular, we will have to study Eq. (\ref{Eqmaterialgroupoid12}),
\begin{equation}\label{Eqmaterialgroupoid12timedependent}
    \lambda \dfrac{\partial W}{\partial  t}  +  \Theta^{i}\dfrac{\partial W}{\partial  x^{i} } + y^{i}_{l}\Theta^{l}_{j}\dfrac{\partial W}{\partial  y^{i}_{j} } = 0.
\end{equation}
where $\lambda$, $\Theta^{i}$ and $\Theta^{i}_{j}$ are functions depending on $t$ and $X$. So, the material evolution is presenting a process of remodeling if we may find $4$ linearly independent solutions to this equation.\\\\

\noindent{For each instant $t$, let us recall the $t-$material distribution $A \Omega_{t} \left( \mathcal{B} \right)^{T}$, its associated $t-$body-material distribution $A \Omega_{t} \left( \mathcal{B} \right)^{\sharp}$ and $t-$uniform-material $A \Omega_{t} \left( \mathcal{B} \right)^{B}$ and associated foliations $t-$material foliation $\overline{\mathcal{F}}_{t}$, $t-$body-material foliation $\mathcal{F}_{t}$ and $t-$uniform-material foliation $\mathcal{G}_{t}$.}\\
We have proved that the $t-$material groupoid $\Omega_{t} \left( \mathcal{B} \right)$ is just the material groupoid of the state $t$ of the body $\mathcal{B}$. Therefore, by using Theorem \ref{14.1} we have that,
\begin{theorem}\label{14.1.second}
The $t-$body-material foliation $\mathcal{F}_{t}$ (resp. $t-$uniform material foliation $\mathcal{G}_{t}$) divides the state $t$ of the body $\mathcal{B}$ into maximal smoothly uniform material submanifolds (resp. uniform material submanifolds).
\end{theorem}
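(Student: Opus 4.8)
The plan is to reduce the statement to \ref{14.1} by identifying the $t$-material groupoid with the material groupoid of a genuine elastic simple body. Recall from the construction preceding this theorem that, under the identification of $\mathcal{B}$ with $\{t\}\times\mathcal{B}$, one has $\Omega_{t}\left(\mathcal{B}\right)=\Omega\left(\mathcal{B}_{t}\right)$, where $\mathcal{B}_{t}$ denotes the state $t$ of $\mathcal{B}$ equipped with the mechanical response $W_{t}\colon\Pi^{1}\left(\mathcal{B},\mathcal{B}\right)\rightarrow V$, $W_{t}\left(j^{1}_{X,Y}\phi\right)=W\left(t,t,j^{1}_{X,Y}\phi\right)$; as noted when the states were introduced, $\mathcal{B}_{t}$ is a bona fide elastic simple body in the sense of Part \ref{partelasticmat}. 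So the first step is to record this identification carefully, checking that the structure maps of $\Omega_{t}\left(\mathcal{B}\right)$ (the restrictions of those of $\Phi\left(\mathcal{V}\right)$) agree, under the identification, with the structure maps of $\Omega\left(\mathcal{B}_{t}\right)$ as a subgroupoid of $\Pi^{1}\left(\mathcal{B},\mathcal{B}\right)$.

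Next I would invoke the consistency property. The $t$-material distribution $A\Omega_{t}\left(\mathcal{B}\right)^{T}$ was described in two ways: as the characteristic distribution of $\Omega_{t}\left(\mathcal{B}\right)$ viewed inside $\Phi\left(\mathcal{V}\right)$ (generated by left-invariant fields with $\lambda_{\vert\{t\}\times\{t\}\times\Pi^{1}\left(\mathcal{B},\mathcal{B}\right)}\equiv 0$ satisfying Eq. (\ref{tmatgroup3234.2})) and, via $W_{t}$, as the characteristic distribution of $\Omega\left(\mathcal{B}_{t}\right)$ inside $\Pi^{1}\left(\mathcal{B},\mathcal{B}\right)$ (Eq. (\ref{tmatgroup3234})). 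By the consistency Proposition \ref{consistencyproperty2904}, these coincide, since the characteristic distribution and foliation do not depend on the ambient Lie groupoid once one restricts to the relevant embedded Lie subgroupoid. Consequently the $t$-body-material foliation $\mathcal{F}_{t}$ and the $t$-uniform-material foliation $\mathcal{G}_{t}$ equal, respectively, the body-material foliation and the uniform-material foliation of the simple body $\mathcal{B}_{t}$ produced by the construction of Part \ref{partelasticmat}.

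Finally, I would apply \ref{14.1} verbatim to $\mathcal{B}_{t}$: it asserts precisely that the body-material foliation of a simple body divides it into maximal smoothly uniform material submanifolds, and the uniform-material foliation divides it into maximal uniform material submanifolds, maximality meaning that any other foliation by smoothly uniform (resp.\ uniform) material submanifolds refines it. Transporting this conclusion along the identification $\mathcal{B}_{t}\cong\{t\}\times\mathcal{B}$ yields the statement for $\mathcal{F}_{t}$ and $\mathcal{G}_{t}$, where ``state $t$ of the body $\mathcal{B}$'' is exactly $\mathcal{B}_{t}$ and ``material submanifold'' is understood as in Definition \ref{materialsubm45623}.

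The only point that genuinely requires care — and hence the main obstacle — is the consistency step: one must be certain that enlarging the ambient groupoid from $\Pi^{1}\left(\mathcal{B},\mathcal{B}\right)$ to $\Phi\left(\mathcal{V}\right)$ does not introduce extra admissible vector fields (in particular ones with a nonzero $\partial/\partial t$ component along $\{t\}\times\{t\}\times\Pi^{1}\left(\mathcal{B},\mathcal{B}\right)$) that would coarsen $\mathcal{F}_{t}$ or $\mathcal{G}_{t}$ beyond the foliations of $\mathcal{B}_{t}$. This is handled by combining the explicit description of $\Omega_{t}\left(\mathcal{B}\right)$ inside $\Phi\left(\mathcal{V}\right)$ recalled above with Proposition \ref{consistencyproperty2904}; once the two descriptions of $\mathcal{F}_{t}$ and $\mathcal{G}_{t}$ are shown to agree, \ref{14.1} closes the argument.
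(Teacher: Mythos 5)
Your proposal is correct and follows essentially the same route as the paper: the paper obtains Theorem \ref{14.1.second} directly from the identification $\Omega_{t}\left(\mathcal{B}\right)=\Omega\left(\mathcal{B}_{t}\right)$ together with Theorem \ref{14.1}, the independence of the characteristic foliations from the ambient groupoid ($\Pi^{1}\left(\mathcal{B},\mathcal{B}\right)$ versus $\Phi\left(\mathcal{V}\right)$) having already been settled via Proposition \ref{consistencyproperty2904} when the $t$-material distribution was introduced. You merely spell out that consistency step more explicitly than the paper does, which is a faithful elaboration rather than a different argument.
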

So, at any instant of time $t$, we have the body divided into ``\textit{smoothly uniform parts}'' and we can see how these parts change along time varying $t$.

\begin{proposition}\label{proprelacionimport234}
Let $\mathcal{C}$ be a material evolution and $\left( t , X \right)$ be a point in $\mathcal{C}$. Then, it satisfies that,
\begin{equation}
    \left( \{t\}\times \mathcal{B} \right)\cap  \mathcal{F} \left( t , X \right) = \{t\} \times \mathcal{F}_{t} \left(  X \right)
\end{equation}

\begin{proof}
Notice that, by construction we have that,
$$\{t\} \times \mathcal{F}_{t} \left(  X \right) \subseteq \left( \{t\}\times \mathcal{B} \right)\cap  \mathcal{F} \left( t , X \right).$$
On the other hand, let $\Theta$ be an admissible vector field for the couple $\left(  \Phi \left( \mathcal{V} \right) , \Omega \left( \mathcal{C} \right) \right)$. Then, $\Theta$ should satisfy Eq. (\ref{Eqmaterialgroupoid12}), i.e., 
    \begin{equation}
    \lambda \dfrac{\partial W}{\partial  t}  +  \Theta^{i}\dfrac{\partial W}{\partial  x^{i} } + y^{i}_{l}\Theta^{l}_{j}\dfrac{\partial W}{\partial  y^{i}_{j} } = 0.
\end{equation}
    where,
    \begin{equation}\label{coordinates453}
    \Theta  \left(t,s,x^{i} , y^{j}, y^{i}_{j}\right)   = \lambda \dfrac{\partial}{\partial  t}   +   \Theta^{i}\dfrac{\partial}{\partial  x^{i} } + y^{i}_{l}\Theta^{l}_{j}\dfrac{\partial}{\partial y^{i}_{j} }
\end{equation}
respect to a local system of coordinates $\left(t,s,x^{i} , y^{j}, y^{i}_{j}\right)$ on $ \Phi \left(  \mathcal{V}_{\mathcal{U}}\right)$ with $\mathcal{U}$ an open subset of $\mathcal{B}$ and $\mathcal{V}_{\mathcal{U}}$ given by the triples $\left( t, s , j_{X,Y}^{1}\phi    \right)$ in $\Phi \left( \mathcal{V} \right)$ such that $X,Y \in \mathcal{U}$. Let us consider two cases,
\begin{itemize}
    \item $T_{\left( t , X\right)} \rho \left( \Theta^{\sharp}\left( t , X\right)\right)= 0$, for all projection $\Theta^{\sharp}$ of an admissible vector field $\Theta$ for the couple $\left(  \Phi \left( \mathcal{V} \right) , \Omega \left( \mathcal{C} \right) \right)$.\\\\
    
    So, any admissible vector field $\Theta$ for the couple $\left(  \Phi \left( \mathcal{V} \right) , \Omega \left( \mathcal{C} \right) \right)$ satisfies that $\lambda\left( t,X \right)=0$ is the local expression (\ref{coordinates453}). Hence, 
    it satisfies the equation
    \begin{equation}
    \Theta^{i}\dfrac{\partial W}{\partial  x^{i} } + y^{i}_{l}\Theta^{l}_{j}\dfrac{\partial W}{\partial  y^{i}_{j} } = 0.
    \end{equation}
 
    Therefore, by Eq. (\ref{tmatgroup3234}), $\Theta$ is an admissible vector fields $\Theta$ for the couple $\left(  \Phi \left( \mathcal{V} \right) , \Omega_{t} \left( \mathcal{B} \right) \right)$, i.e.,
    $$ \left( \{t\}\times \mathcal{B} \right)\cap  \mathcal{F} \left( t , X \right) =   \mathcal{F} \left( t , X \right) \subseteq \{t\} \times \mathcal{F}_{t} \left(  X \right). $$

    \item $T_{\left( t , X\right)} \rho \left( \Theta^{\sharp}\left( t , X\right)\right) \neq 0$, for some projection $\Theta^{\sharp}$ of an admissible vector field $\Theta$ for the couple $\left(  \Phi \left( \mathcal{V} \right) , \Omega \left( \mathcal{C} \right) \right)$.\\\\
    
    Then, $T_{\left( t , X\right)} \rho \left( A \Omega \left( \mathcal{C} \right)^{\sharp}\right) = \mathbb{R}$. Thus, we have that
    $$ T_{\left( t , X \right)} \left( \{t\}\times \mathcal{B} \right)  +     T_{\left(t,X\right)} \mathcal{F} \left( t , X \right) =T_{\left( t , X \right)}\mathcal{C},$$
    i.e., $ \left( \{t\}\times \mathcal{B} \right)$ and $  \mathcal{F} \left( t , X \right) $ are transversal submanifolds of $\mathcal{C}$. Therefore, $ \left( \{t\}\times \mathcal{B} \right)\cap  \mathcal{F} \left( t , X \right) $ is a submanifold of $\mathcal{C}$ and
    
    $$ T_{\left( t,X\right)} \left[\left( \{t\}\times \mathcal{B} \right)\cap  \mathcal{F} \left( t , X \right)\right] = T_{\left( t , X\right)} \left( \{t\}\times \mathcal{B} \right) \cap  T_{\left(t,X\right)}\mathcal{F} \left( t , X \right) .$$
    
    Thus, the tangent vector fields to $\left(\{t \}\times \mathcal{B}  \right)\cap  \mathcal{F} \left( t , X \right)$ are the projections $\Theta^{\sharp}$ of admissible vector fields $\Theta$ for the couple $\left(  \Phi \left( \mathcal{V} \right) , \Omega \left( \mathcal{C} \right) \right)$ such that $\Theta^{\sharp}$ projected on $\mathbb{R}$ is zero, i.e.,

    \begin{equation}
    \Theta^{i}\dfrac{\partial W}{\partial  x^{i} } + y^{i}_{l}\Theta^{l}_{j}\dfrac{\partial W}{\partial  y^{i}_{j} } = 0.
    \end{equation}
    where,
    \begin{equation}
    \Theta  \left(t,s,x^{i} , y^{j}, y^{i}_{j}\right)   =    \Theta^{i}\dfrac{\partial}{\partial  x^{i} } + y^{i}_{l}\Theta^{l}_{j}\dfrac{\partial}{\partial y^{i}_{j} }
    \end{equation}
    Then, by Eq. (\ref{tmatgroup3234}), $\Theta$ is an admissible vector fields $\Theta$ for the couple $\left(  \Phi \left( \mathcal{V} \right) , \Omega_{t} \left( \mathcal{B} \right) \right)$.

    \end{itemize}

\end{proof}
\end{proposition}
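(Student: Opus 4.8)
The plan is to establish the two inclusions separately; the containment $\{t\}\times\mathcal{F}_{t}(X)\subseteq(\{t\}\times\mathcal{B})\cap\mathcal{F}(t,X)$ is the routine one and the reverse is the crux. For the easy direction I would use that, under the identification $\mathcal{B}\cong\{t\}\times\mathcal{B}$, the $t$-material groupoid $\Omega_{t}(\mathcal{B})$ is a subgroupoid of $\Omega(\mathcal{C})$ inside $\Phi(\mathcal{V})$. Since condition (iii) in the definition of an admissible vector field only requires that the flow at the identities land in the subgroupoid, every admissible vector field for the couple $(\Phi(\mathcal{V}),\Omega_{t}(\mathcal{B}))$ is automatically admissible for $(\Phi(\mathcal{V}),\Omega(\mathcal{C}))$; by the consistency Proposition~\ref{consistencyproperty2904} these are exactly the admissible vector fields computed in $\Pi^{1}(\mathcal{B},\mathcal{B})$, pushed up by left translations. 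Hence $A\Omega_{t}(\mathcal{B})^{\sharp}$ (viewed as a sub-distribution of $T\mathcal{C}$ along the slice, with vanishing $\partial/\partial t$-component) is contained in $A\Omega(\mathcal{C})^{\sharp}$, and since both are integrable, the leaf of $\mathcal{F}_{t}$ through $X$ sits inside the leaf of $\mathcal{F}$ through $(t,X)$; this gives the easy inclusion.

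For the reverse inclusion I would argue in a local chart $(t,s,x^{i},y^{j},y^{i}_{j})$ on $\Phi(\mathcal{V})$, in which a left-invariant vector field has the form $\Theta=\lambda\,\partial/\partial t+\Theta^{i}\,\partial/\partial x^{i}+y^{i}_{l}\Theta^{l}_{j}\,\partial/\partial y^{i}_{j}$. Such a $\Theta$ is admissible for $(\Phi(\mathcal{V}),\Omega(\mathcal{C}))$ precisely when it satisfies Eq.~(\ref{Eqmaterialgroupoid12}), and its base projection is $\Theta^{\sharp}=\lambda\,\partial/\partial t+\Theta^{i}\,\partial/\partial x^{i}$, with $\lambda$ and $\Theta^{i}$ functions of $(t,X)$. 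Comparing Eq.~(\ref{Eqmaterialgroupoid12}) with the defining equation Eq.~(\ref{tmatgroup3234}) for $\Omega_{t}(\mathcal{B})$, the only difference is the extra term $\lambda\,\partial W/\partial t$; hence at any point where $\lambda$ vanishes, $\Theta$ is automatically admissible for $(\Phi(\mathcal{V}),\Omega_{t}(\mathcal{B}))$. The remaining task is therefore to control the $\partial/\partial t$-component of $A\Omega(\mathcal{C})^{\sharp}$, which I would handle by a dichotomy on the image of $A\Omega(\mathcal{C})^{\sharp}_{(t,X)}$ under $T\rho$, where $\rho:\mathcal{C}\to\mathbb{R}$ is the time projection.

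In the first case this image is $\{0\}$: then every admissible $\Theta$ has $\lambda(t,X)=0$, so by the comparison above $A\Omega(\mathcal{C})^{\sharp}\subseteq A\Omega_{t}(\mathcal{B})^{\sharp}$ near $(t,X)$, whence $\mathcal{F}(t,X)=(\{t\}\times\mathcal{B})\cap\mathcal{F}(t,X)$ is contained in $\{t\}\times\mathcal{F}_{t}(X)$. In the second case the image is all of $\mathbb{R}$, so $\mathcal{F}(t,X)$ and $\{t\}\times\mathcal{B}$ meet transversally at $(t,X)$; their intersection is then a submanifold of $\mathcal{C}$ with $T_{(t,X)}\big[(\{t\}\times\mathcal{B})\cap\mathcal{F}(t,X)\big]=T_{(t,X)}(\{t\}\times\mathcal{B})\cap T_{(t,X)}\mathcal{F}(t,X)$, and the vector fields tangent to this intersection are precisely the base projections $\Theta^{\sharp}$ whose $\partial/\partial t$-component vanishes, i.e. those with $\lambda=0$; by the comparison these are tangent to $A\Omega_{t}(\mathcal{B})^{\sharp}$, so the intersection is contained in $\{t\}\times\mathcal{F}_{t}(X)$. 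Combining with the easy inclusion yields the asserted equality.

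The step I expect to be the main obstacle is the transversal case, where one must upgrade the pointwise identification of tangent spaces at $(t,X)$ to a statement about the whole intersection — i.e. check that $\lambda$ vanishes everywhere along $(\{t\}\times\mathcal{B})\cap\mathcal{F}(t,X)$, not merely at $(t,X)$ — for which I would lean on the left-invariance of the characteristic distribution $A\Omega(\mathcal{C})^{T}$ and the integrability of the characteristic foliation from Theorem~\ref{10.24}. A secondary technical point is justifying the comparison of Eqs.~(\ref{Eqmaterialgroupoid12}) and (\ref{tmatgroup3234}) itself: this requires the consistency Proposition~\ref{consistencyproperty2904} to guarantee that $\Omega_{t}(\mathcal{B})$, regarded as a subgroupoid of $\Phi(\mathcal{V})$, is generated exactly by the left-invariant vector fields whose $\partial/\partial t$-coefficient vanishes on the slice and which annihilate $W_{t}$.
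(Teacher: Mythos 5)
Your proposal follows essentially the same route as the paper's own proof: the easy inclusion from the containment $\Omega_{t}\left(\mathcal{B}\right)\leq\Omega\left(\mathcal{C}\right)$ together with the consistency Proposition \ref{consistencyproperty2904}, and the reverse inclusion via the same dichotomy on the image of $A\Omega\left(\mathcal{C}\right)^{\sharp}_{\left(t,X\right)}$ under the time projection, using transversality of $\{t\}\times\mathcal{B}$ and $\mathcal{F}\left(t,X\right)$ in the nonzero case and the comparison of Eq. (\ref{Eqmaterialgroupoid12}) with Eq. (\ref{tmatgroup3234}) to conclude admissibility for the couple $\left(\Phi\left(\mathcal{V}\right),\Omega_{t}\left(\mathcal{B}\right)\right)$. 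The obstacles you flag (extending the pointwise tangent identification along the whole intersection, and justifying the slice comparison via consistency) are precisely the points the paper itself treats only implicitly, so your argument matches it in both structure and level of detail.
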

In other words, in case we freeze an instant of time $s$ in $\mathcal{F} \left( t , X \right)$, we recover the leaf $\{s\} \times \mathcal{F}_{s} \left(  X \right)$. So, if could write $\mathcal{F} \left( t , X \right)$ as in Eq. (\ref{productspace}), we would be precluding the case in which the shape of leaves ${F}_{s} \left(  X \right)$ change with the time, i.e., \textit{each of the leaves $\mathcal{F} \left( t , X \right)$ present a remodeling in which the uniform leaves can change}.\\\\ \noindent{Therefore, if the foliations $\mathcal{F}_{t}$ (resp. $\mathcal{G}_{t}$) permits us to watch how change the smoothly uniform leaves (resp. uniform leaves) of the body with the time, the foliation $\mathcal{F}$ (resp. $\mathcal{G}$) also show us how time is divided optimally in such a way that at each interval the material evolution presents a smooth remodeling (resp. remodeling) process of all the leaves at the same time}.\\\\

\noindent{Finally, we will present a definition of (non-uniform) smooth remodeling is inspired in Corollary \ref{4.4.second2324.uniform2445}.}
\begin{definition}\label{smoothremd2345}
\rm
Let be a body-time manifold $\mathcal{C}$. $\mathcal{C}$ is presenting a \textit{smooth remodeling} if $\Omega \left( \mathcal{C}\right)$ is a Lie subgroupoid of $\Phi \left( \mathcal{V} \right)$ and, for all particle $X$, the $X-$material groupoid $\Omega_{X} \left( \mathbb{R} \right)$ is a transitive Lie subgroupoid of $\Phi \left( \mathcal{V} \right)$

\end{definition}
Notice that, taking into account Proposition \ref{auxiliarprop34re}, if $\Omega \left( \mathcal{C}\right)$ is a Lie subgroupoid of $\Phi \left( \mathcal{V} \right)$, then for all particle $X$, the $X-$material groupoid $\Omega_{X} \left( \mathbb{R} \right)$ is a Lie subgroupoid of $\Phi \left( \mathcal{V} \right)$. So, the unique requirement on the $X-$material groupoids is transitivity.\\
Definition \ref{smoothremd2345} express mathematically the idea of that the material body varies smoothly through the time and the intrinsecal properties does not change. It is easy to check that the material points present a smooth remodeling. In fact, $\Omega_{X} \left( \mathbb{R} \right)$ is a Lie subgroupoid of $\Phi \left( \mathcal{V} \right)$ if, and only if, $\mathbb{R}$ can be covered by local sections of the anchor of $\Omega_{X} \left( \mathbb{R} \right)$, these sections induce the smooth remodeling process at $X$ (see definition \ref{1.7.2.second}).\\
Roughly speaking, all the particles present a smooth remodeling ($\Omega_{X} \left( \mathbb{R} \right)$ is a Lie subgroupoid of $\Phi \left( \mathcal{V} \right)$) and the variation at different points is also smooth ($\Omega \left( \mathcal{C}\right)$ is a Lie subgroupoid of $\Phi \left( \mathcal{V} \right)$).

\begin{theorem}\label{computationalproposition24124}
Let be a body-time manifold $\mathcal{C}$. $\mathcal{C}$ presents a smooth remodeling process if, and only if, 

\begin{itemize}
    \item[i)] $dim  \left( A \Omega \left( \mathcal{C} \right)^{T}_{\epsilon \left(\left(t,X\right)\right)}\right)  $ is constant respect to $\left(t,X \right)$\\
    \item[ii)] $dim  \left( A \Omega_{X} \left( \mathbb{R} \right)^{\sharp}_{\left(t,X\right)}\right)= 1 $, for all $\left( t , X \right) \in \mathcal{C}$
\end{itemize}
Here, $ A \Omega \left( \mathcal{C} \right)^{T}_{\epsilon \left(\left(t,X\right)\right)}$ (resp. $A \Omega_{X} \left( \mathbb{R} \right)^{\sharp}_{t}$) is the fibre of $ A \Omega \left( \mathcal{C} \right)^{T}$ (resp. $A \Omega_{X} \left( \mathbb{R} \right)^{\sharp}$) at $\epsilon \left( \left(t,X\right)\right)$ (resp. $t$).

\end{theorem}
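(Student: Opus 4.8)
The plan is to unwind Definition \ref{smoothremd2345} and match each of its two requirements against one of the conditions (i), (ii), using Proposition \ref{auxiliarprop34re} to dispose of one implication for free and the characteristic-distribution machinery of Part~I (together with its material-groupoid specialisations in \cite{CHARDIST,VMMDME}) for the rest. By Definition \ref{smoothremd2345}, $\mathcal{C}$ presents a smooth remodeling exactly when (a)~$\Omega(\mathcal{C})$ is a Lie subgroupoid of $\Phi(\mathcal{V})$ and (b)~$\Omega_{X}(\mathbb{R})$ is a \emph{transitive} Lie subgroupoid of $\Phi(\mathcal{V})$ for every $X$. By Proposition \ref{auxiliarprop34re}, (a) already forces each $\Omega_{X}(\mathbb{R})$ to be a Lie subgroupoid, so (b) reduces, under (a), to: each $\Omega_{X}(\mathbb{R})$ is transitive. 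Hence it suffices to prove two equivalences: \textbf{Claim A}, that (a)$\iff$(i); and \textbf{Claim B}, that, assuming (a), all the $\Omega_{X}(\mathbb{R})$ are transitive $\iff$ (ii). Combining them closes the argument, since (i) implies (a), and then (a) together with (ii) implies all the $\Omega_{X}(\mathbb{R})$ are transitive.

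For Claim A, I first note that the material distribution $A\Omega(\mathcal{C})^{T}$ is left-invariant, so its rank at $g$ equals its rank at $\epsilon(\alpha(g))$; since $\epsilon(\mathcal{C})\subseteq\Omega(\mathcal{C})$ and $\alpha$ maps $\Omega(\mathcal{C})$ onto $\mathcal{C}$, the rank of $A\Omega(\mathcal{C})^{T}$ is constant along $\Omega(\mathcal{C})$ if and only if $(t,X)\mapsto\dim A\Omega(\mathcal{C})^{T}_{\epsilon(t,X)}$ is constant, i.e. if and only if (i) holds. If $\Omega(\mathcal{C})$ is a Lie subgroupoid, its Lie algebroid is a vector bundle over $\mathcal{C}$ whose fibre at $(t,X)$ is $A\Omega(\mathcal{C})^{T}_{\epsilon(t,X)}$, which therefore has constant rank, so (i) holds. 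Conversely, $\Omega(\mathcal{C})$ is \emph{closed} in $\Phi(\mathcal{V})$, being the largest $W$-invariant subgroupoid and $W$ continuous; then condition (i) makes $A\Omega(\mathcal{C})^{T}$ of constant rank along this closed subgroupoid, and combining the integrability of $A\Omega(\mathcal{C})^{T}$ (Theorem \ref{10.24}) with Corollary \ref{10.39} applied to the now-regular foliation it generates yields, exactly as in the material-groupoid arguments of \cite{CHARDIST,VMMDME}, that $\Omega(\mathcal{C})$ is an embedded Lie subgroupoid of $\Phi(\mathcal{V})$ with Lie algebroid $A\Omega(\mathcal{C})$.

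For Claim B, assume (a), so each $\Omega_{X}(\mathbb{R})$ is a Lie subgroupoid over the connected base $\mathbb{R}$ (identified with $\mathbb{R}\times\{X\}$). The tangent distribution of its orbit foliation is the image of the anchor of its Lie algebroid, which is precisely the base-characteristic distribution $A\Omega_{X}(\mathbb{R})^{\sharp}$ on $\mathbb{R}$. Since $T\mathbb{R}$ has rank $1$, $\dim A\Omega_{X}(\mathbb{R})^{\sharp}_{(t,X)}\in\{0,1\}$, and it equals $1$ for all $t$ if and only if the anchor is surjective, if and only if the orbit foliation has the single leaf $\mathbb{R}$, if and only if $\Omega_{X}(\mathbb{R})$ is transitive. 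Quantifying over $X$ gives that all the $\Omega_{X}(\mathbb{R})$ are transitive if and only if (ii) holds, which is Claim B; with Claim A this proves the theorem.

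The main obstacle is the reverse direction of Claim A: passing from the purely numerical hypothesis (i) to an honest differentiable-manifold structure on $\Omega(\mathcal{C})$ with the source a submersion. This is where the closedness of $\Omega(\mathcal{C})$ (a consequence of the continuity of $W$) and the full strength of the Stefan--Sussmann-based constructions of \cite{CHARDIST,VMMDME} are essential; everything else is bookkeeping with left-invariance, the anchor map, and the rank count for $T\mathbb{R}$.
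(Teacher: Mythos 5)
Your decomposition mirrors the paper's: the paper also first converts condition (ii) into transitivity of all the $X$-material groupoids (your Claim B, whose rank count in $T\mathbb{R}$ plus connectedness of $\mathbb{R}$ is essentially the paper's reasoning), and then reduces everything to showing that condition (i) forces $\Omega\left(\mathcal{C}\right)$ to be a Lie subgroupoid of $\Phi\left(\mathcal{V}\right)$ (your Claim A). The forward direction of Claim A (Lie subgroupoid $\Rightarrow$ constant rank of the algebroid) is fine.

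The genuine gap is in the reverse direction of Claim A, which you flag as the main obstacle but then settle by invoking closedness of $\Omega\left(\mathcal{C}\right)$, Theorem \ref{10.24} and Corollary \ref{10.39} ``exactly as in \cite{CHARDIST,VMMDME}''. This does not close the argument. Corollary \ref{10.39} is about the base-characteristic foliation and transitive Lie subgroupoids over leaves of the base $M$; it is not the statement needed here. Closedness is also not the operative property: a closed union of leaves of a regular foliation need not be a submanifold (in $\mathbb{R}^{2}$ foliated by horizontal lines, take the union of $\{y=0\}$ and the lines $\{y=1/n\}$, $n \in \mathbb{N}$), so ``constant rank along a closed subgroupoid'' by itself does not produce a manifold structure on $\Omega\left(\mathcal{C}\right)$. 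What has to be proved is that, once (i) makes the characteristic foliation $\overline{\mathcal{F}}$ regular, each path-connected component of each $\overline{\beta}$-fibre of $\Omega\left(\mathcal{C}\right)$ is exactly one leaf of $\overline{\mathcal{F}}$; only then are the fibres submanifolds, and the charts built as in the proof of Corollary \ref{4.4.second2324.uniform2445} turn $\Omega\left(\mathcal{C}\right)$ into a Lie subgroupoid. The paper obtains this from a new auxiliary result, Lemma \ref{lemmauxiliar213423second4453}: if a path-connected union of leaves of a regular foliation is not a single leaf, one can merge leaves along a transversal path into a strictly coarser (singular) foliation which still has $\Omega\left(\mathcal{C}\right)$ as a union of leaves contained in the $\beta$-fibres, contradicting the coarseness (maximality) property of the characteristic foliation given by Corollary \ref{10.33} --- not Corollary \ref{10.39}. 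Since this lemma is proved in the present paper rather than quoted from \cite{CHARDIST,VMMDME}, deferring the step to those references leaves the central implication (i) $\Rightarrow$ ``$\Omega\left(\mathcal{C}\right)$ is a Lie subgroupoid'' unproved in your proposal.
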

\noindent{To prove this theorem we will need an auxiliary lemma.}

\begin{lemma}\label{lemmauxiliar213423}
Let $M$ be a manifold and a path-connected subset $X$ of $M$. Consider a regular foliation $\mathcal{F}$ of $M$ such that
\begin{itemize}
    \item[i)] $X$ is union of leaves of $\mathcal{F}$.
    \item[ii)] $X$ is not a leaf of $\mathcal{F}$.
\end{itemize}
Then, there exists a strictly coarser (singular) foliation of $M$ satisfying $i)$.
\begin{proof}
Assume that $X$ is not a leaf of $\mathcal{F}$. Let be a foliation $\varphi = \left(y^{1}, \hdots , y^{n}\right)$ in a neighborhood $U$ of $x\in M$,
\begin{equation}\label{localexpresionUepsilonq434}
U:= \{ - \epsilon < y^{1} < \epsilon , \hdots , - \epsilon < y^{n} < \epsilon \},
\end{equation}
such that the $k-$dimensional disk $\{ y^{k+1}= \hdots = y^{n} = 0\}$ coincides with the path-connected component of the intersection of $\mathcal{F}\left(x\right)$ with $U$ which contains $x$, and each $k-$dimensional disk $\{ y^{k+1} = c_{k+1} , \hdots y^{n} = c_{n} \}$, where $c_{k+1}, \hdots , c_{n}$ are constants, coincides with the path-connected component of the intersection of some $\mathcal{F}\left(y\right)$ with $ U $. We may shrink $\epsilon$ enough to get that $ U   \cap X$ is path-connected.\\
Let be a point $y$ in $ U \cap X$ which is not contained in $\mathcal{F}\left(x\right)$ (i.e., $\mathcal{F}\left(y\right) \neq \mathcal{F}\left(x\right)$). Then, there exists a differentiable path $\alpha : I \rightarrow   U   \cap X$, with $I=\left[0,1\right]$, such that
$$ \alpha \left( 0 \right) = x , \ \ \ \ \ \ \ \ \ \alpha \left( 1 \right) = y.$$
Then, we will consider 
\begin{equation}
    \mathcal{C} := \{ z \in M \ : \ \mathcal{F} \left( z \right)  \cap  \overline{\alpha} \left( 0,1 \right) \neq \emptyset \}
\end{equation}
In other words, $\mathcal{C}$ is the union of all the leaves in such a way that $\alpha$ cuts to all leaves.\\
So, consider the path $\overline{\alpha}: I \rightarrow \mathcal{U}$, $\mathcal{U}= \varphi^{-1}\left(  U  \cap X \right)$, given by
$$ \overline{\alpha}= \varphi^{-1}\circ \alpha.$$
Then, by using the local expression \ref{localexpresionUepsilonq434},
\begin{small}

\begin{equation}
    \mathcal{C} \cap U : \{ - \epsilon < y^{1} < \epsilon , \hdots , - \epsilon < y^{k} < \epsilon , y^{k+1}= \alpha^{k+1}\left( t \right) ,  \dots ,  y^{k+1}= \alpha^{n}\left( t \right) \}_{t \in \left( 0,1\right)},
\end{equation}
\end{small}
where $\alpha^{i}$ are the coordinates of $\alpha$ respect to $\varphi$. Using the rank theorem we may transform $\varphi$ to get that
\begin{small}

\begin{equation}\label{321singularfoliat}
    \mathcal{C} \cap U : \{ - \epsilon < y^{1} < \epsilon , \hdots , - \epsilon < y^{k} < \epsilon , y^{k+1}= t , 0, \dots ,  0) \}_{t \in \left( 0,1\right)},
\end{equation}
\end{small}
Consider the foliation $\mathcal{G}$ of $M$ such that
\begin{itemize}
    \item $\mathcal{G}\left(z \right) = \mathcal{F}\left(z \right)$ for each $z \notin \mathcal{C}$.
    \item $\mathcal{G}\left(z \right) = \mathcal{C}$ for each $z \in \mathcal{C} $.
\end{itemize}
Obviously, $\mathcal{G}$ is a strictly coarser division of $M$ and satisfies $i)$. Furthermore, it is an easy exercise to prove that $\mathcal{G}$ is a singular foliation (see Eq. (\ref{321singularfoliat})).
\end{proof}

\end{lemma}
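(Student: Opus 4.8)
The plan is to build the coarser foliation by amalgamating into a single leaf all the $\mathcal{F}$-leaves swept out by a short arc lying inside $X$, while leaving every other leaf of $\mathcal{F}$ untouched. The engine of the argument is the combination of (i) and (ii), which I would first convert into a purely local statement: \emph{there is a foliated chart that already sees two leaves of $X$}. Indeed, suppose on the contrary that every point $x\in X$ admitted a foliated chart $U$ with $U\cap X$ contained in the plaque through $x$. Then each leaf $L$ meeting $X$ would be open in the subspace topology of $X$; its complement in $X$, a union of other such open pieces, would then also be open, so $L$ would be clopen in $X$. Since $X$ is path-connected this forces $L=X$, contradicting (ii). Hence there exist a point $x\in X$ and a foliated chart $U$, with coordinates $\varphi=(y^1,\dots,y^n)$ on a cube as in (\ref{localexpresionUepsilonq434}) in which the leaf through $x$ is the plaque $\{y^{k+1}=\cdots=y^n=0\}$, such that $U\cap X$ meets a second plaque; I pick $y\in U\cap X$ with $\mathcal{F}(y)\neq\mathcal{F}(x)$.

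Next, using path-connectedness I would join $x$ to $y$ by a smooth path $\alpha\colon I\to U\cap X$ and, after a generic perturbation fixing the endpoints and remaining in $U\cap X$, arrange that its transversal component $t\mapsto(\alpha^{k+1}(t),\dots,\alpha^n(t))$ is an embedded arc. I then set $\mathcal{C}$ to be the $\mathcal{F}$-saturation of $\alpha((0,1))$, that is, the union of all leaves meeting the arc. Because $\alpha$ lies in $X$ and $X$ is a union of leaves by (i), every leaf meeting $\alpha$ is contained in $X$, so $\mathcal{C}\subseteq X$. The rank theorem lets me straighten the transversal arc onto the $y^{k+1}$-axis, producing the normal form (\ref{321singularfoliat}); in those coordinates $\mathcal{C}\cap U=\{y^{k+2}=\cdots=y^n=0\}$ is a $(k+1)$-dimensional submanifold transverse to the plaques, and globally $\mathcal{C}$ is the immersed $(k+1)$-submanifold swept by the arc along the leaves of $\mathcal{F}$.

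Finally I define $\mathcal{G}$ by $\mathcal{G}(z)=\mathcal{C}$ for $z\in\mathcal{C}$ and $\mathcal{G}(z)=\mathcal{F}(z)$ otherwise, and verify the three required properties. Coarseness is immediate, since $\mathcal{F}(z)\subseteq\mathcal{C}=\mathcal{G}(z)$ on $\mathcal{C}$ and equality holds off $\mathcal{C}$; it is \emph{strictly} coarser because $\mathcal{C}$ contains the two distinct leaves $\mathcal{F}(x)$ and $\mathcal{F}(y)$, whence $\mathcal{G}(x)\supsetneq\mathcal{F}(x)$. Property (i) survives because $\mathcal{C}\subseteq X$ and the leaves off $\mathcal{C}$ are unchanged, so $X$ remains a union of $\mathcal{G}$-leaves. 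The main obstacle is the remaining claim, that $\mathcal{G}$ is a genuine (singular) foliation: this is precisely where the normal form (\ref{321singularfoliat}) does the work, exhibiting around each point of $\mathcal{C}$ a distinguished chart in which $\mathcal{C}$ appears as one $(k+1)$-plaque while the neighbouring $k$-dimensional plaques of $\mathcal{F}$ are untouched. Care is needed to ensure the perturbation rendering $\alpha$ transversal and embedded can always be performed inside $X$, so that $\mathcal{C}$ is a well-defined immersed submanifold of constant dimension $k+1$ and the partition $\mathcal{G}$ is locally modelled by (\ref{321singularfoliat}) at every one of its points.
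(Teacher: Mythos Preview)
Your proposal follows essentially the same strategy as the paper's proof: locate a foliated chart around a point of $X$ that meets at least two plaques of $X$, run a short arc inside $X$ between them, saturate that arc by $\mathcal{F}$-leaves to produce $\mathcal{C}$, straighten via the rank theorem to obtain the model~(\ref{321singularfoliat}), and then declare $\mathcal{G}$ to equal $\mathcal{F}$ off $\mathcal{C}$ and to have $\mathcal{C}$ as a single leaf. The verification of coarseness, of property~(i), and of the singular-foliation structure is also organised the same way.

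The one place where your write-up differs is the opening step. The paper simply asserts that one may ``shrink $\epsilon$ enough to get that $U\cap X$ is path-connected'' and then picks $y$ in that set; you instead give a clopen argument in $X$ to locate a chart where $U\cap X$ is not contained in a single plaque. Your version is the more honest of the two, since there is no general reason a small cube should meet $X$ in a path-connected set. Note, however, that your argument inherits the same soft spot one line later: having found $y\in U\cap X$ on a different plaque, you still invoke a smooth path $\alpha$ \emph{inside $U\cap X$}, and the generic perturbation keeping $\alpha$ in $X$. You flag this yourself in the last sentence, and it is exactly the point the paper also glosses over; neither proof fully justifies it. Since $X$ is a union of plaques, locally $U\cap X=(-\epsilon,\epsilon)^{k}\times S$ for some subset $S$ of the transversal, and what is really needed is an embedded arc in $S$ joining two of its points---this is where both arguments are incomplete in general, but the intended picture is the same.
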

\noindent{By separating within path-connected component we may prove the following result.}
\begin{lemma}\label{lemmauxiliar213423second4453}
Let $M$ be a manifold and a subset $X$ of $M$. Consider a regular foliation $\mathcal{F}$ of $M$ such that
\begin{itemize}
    \item[i)] $X$ is union of leaves of $\mathcal{F}$.
    \item[ii)] There is at least a path-connected component of $X$ which is not a leaf of $\mathcal{F}$.
\end{itemize}
Then, there exists a strictly coarser (singular) foliation of $M$ satisfying $i)$.
\end{lemma}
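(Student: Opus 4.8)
The plan is to reduce everything to the path-connected case already settled in Lemma \ref{lemmauxiliar213423}. First I would record the elementary observation that, since $\mathcal{F}$ is a regular foliation, each of its leaves is a connected immersed submanifold and hence path-connected; consequently, whenever a leaf $L$ of $\mathcal{F}$ meets a path-connected component $X_0$ of $X$, the whole of $L$ lies in $X$ (because $X$ is a union of leaves, so the leaf through any point of $X_0\subseteq X$ is contained in $X$) and, being path-connected, $L$ lies entirely in $X_0$. Therefore every path-connected component of $X$ is itself a union of leaves of $\mathcal{F}$.

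Next I would pick, using hypothesis (ii), a path-connected component $X_0$ of $X$ that is not a leaf of $\mathcal{F}$. By the previous paragraph $X_0$ is a path-connected subset of $M$ which is a union of leaves of $\mathcal{F}$ and which is not a leaf, so Lemma \ref{lemmauxiliar213423} applies verbatim to the pair $(M,X_0)$ and yields a strictly coarser singular foliation $\mathcal{G}$ of $M$ having $X_0$ as a union of leaves.

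The only point requiring any care is to upgrade ``$X_0$ is a union of leaves of $\mathcal{G}$'' to ``$X$ is a union of leaves of $\mathcal{G}$''. For this I would look at the construction of $\mathcal{G}$ inside the proof of Lemma \ref{lemmauxiliar213423}: $\mathcal{G}$ agrees with $\mathcal{F}$ outside the set $\mathcal{C}$, which is the union of the $\mathcal{F}$-leaves met by a path $\overline{\alpha}$ whose image lies in $X_0$, and on $\mathcal{C}$ the new foliation has the single leaf $\mathcal{C}$. Since $\overline{\alpha}(0,1)\subseteq X_0$ and $X_0$ is a union of $\mathcal{F}$-leaves, we get $\mathcal{C}\subseteq X_0\subseteq X$; so the modification is confined to $X$. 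Writing $X$ as the union of the $\mathcal{F}$-leaves it contains, those disjoint from $\mathcal{C}$ are still leaves of $\mathcal{G}$, while those meeting $\mathcal{C}$ together cover the single $\mathcal{G}$-leaf $\mathcal{C}\subseteq X$; hence $X$ is a union of leaves of $\mathcal{G}$. As $\mathcal{G}$ is already known to be a strictly coarser singular foliation, this completes the argument.

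I do not expect a genuine obstacle here: the whole content is the remark that the leaf-merging performed by Lemma \ref{lemmauxiliar213423} never leaves the chosen component, which is exactly what the inclusion $\mathcal{C}\subseteq X_0$ gives, so passing from one path-connected component to all of $X$ costs nothing.
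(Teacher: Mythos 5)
Your argument is correct and follows the same route the paper intends: the paper's proof is just the remark ``by separating within path-connected components,'' i.e.\ reduce to Lemma \ref{lemmauxiliar213423} applied to a component that is not a leaf. You additionally make explicit the two points the paper leaves implicit --- that each path-connected component of $X$ is itself a union of leaves, and that the merged set $\mathcal{C}$ built in Lemma \ref{lemmauxiliar213423} satisfies $\mathcal{C}\subseteq X_0\subseteq X$, so the coarsening never leaves $X$ --- which is exactly the verification needed.
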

Thus, roughly speaking, for each manifold $M$ and any subset $X$ which is not a submanifold of $M$, the maximal foliation satisfying $i)$ is necessarily singular.
\begin{proof}[Proof of Proposition  \ref{computationalproposition24124}]

Notice that, condition $ii)$ is equivalent to that all the $X-$material groupoids $ \Omega_{X} \left( \mathbb{R} \right)$ are transitive Lie subgroupoids of $\Phi \left( \mathcal{V}  \right)$. So, we only have to deal with condition $i)$, i.e.,
$$dim  \left( A \Omega \left( \mathcal{C} \right)^{T}_{\epsilon \left(\left(t,X\right)\right)}\right),  $$
is constant respect to $\left(t,X \right)$. Then, the material foliation $\overline{\mathcal{F}}$ is regular.\\
On the one hand, $\overline{\mathcal{F}}$ is maximal foliation whose leaves are contained in the $\overline{\beta}-$fibres (see corollary \ref{10.33}). Then, taking into account Lemma \ref{lemmauxiliar213423second4453}, the path-connected components of the $\overline{\beta}-$fibres of $\Omega \left( \mathcal{C}\right)$ have to be leaves of the foliation. Finally, the local charts of the transitive Lie subgroupoids $\Omega \left(\overline{\mathcal{F}} \left( x \right) \right)$ (see proof of Corollary \ref{4.4.second2324.uniform2445}) defines a structure on $\Omega \left( \mathcal{C} \right)$ of Lie subgroupoid of $\Phi \left( \mathcal{V} \right)$.

\end{proof}
Hence, Proposition  \ref{computationalproposition24124} provides us a computational way of dealing with the smooth remodeling processes. In other words, by Eq. (\ref{Eqmaterialgroupoid12}) and Eq. (\ref{Xmatgroup323456}), $\mathcal{C}$ is presenting a smooth remodeling if, and only if, the space of solutions of the equation,
\begin{equation}\label{otramas2324}
    \lambda \dfrac{\partial W}{\partial  t}  +  \Theta^{i}\dfrac{\partial W}{\partial  x^{i} } + \Theta^{i}_{j}\dfrac{\partial W}{\partial  y^{j}_{i} } = 0.
\end{equation}
where $\lambda$, $\Theta^{i}$ and $\Theta^{i}_{j}$ are functions depending on $t$ and $X$, has constant dimension and there exists a solution of,
\begin{equation}\label{estaesotra123}
    \lambda\dfrac{\partial W_{X}}{\partial  t } + \Theta^{i}_{j}\dfrac{\partial   W_{X}}{\partial  y^{j}_{i} } = 0
\end{equation}
with $\lambda \neq 0$. Notice that, if it were satisfied Eq. (\ref{estaesotra123}), the space of solutions of Eq. (\ref{otramas2324}) has, at least, dimension $1$.

\part{Aging}

\begin{definition}\label{1.17.2}
\rm
Let $\mathcal{C}$ be a body-time manifold. A material particle $X \in \mathcal{B}$ is presenting a \textit{aging} when it is not presenting a remodeling, i.e., not all the instants are connected by a material isomorphism. $\mathcal{C}$ is a \textit{process of aging} if it is not a process of remodeling.
\end{definition}
Clearly, if the material response is not preserved along the time via material isomorphism, the constitutive properties are changing with the time. Altough it is something natural, there is not a proper definition of \textit{smooth aging}. The presentation of this definition is other of the contributions of this paper (Definition \ref{smoothaging342345}).

\begin{proposition}\label{auxprop4342}
Let $\mathcal{C}$ be a body-time manifold. A material particle $X \in \mathcal{B}$ is presenting an aging if, and only if, the $X-$material groupoid $\Omega_{X}\left(  \mathbb{R}\right)$ is not transitive. $\mathcal{C}$ is presenting an aging if, and only if, for some material point $X$, the $X-$material groupoid $\Omega_{X}\left(  \mathbb{R}\right)$ is not transitive.
\end{proposition}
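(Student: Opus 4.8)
The plan is to obtain both equivalences by pure logical negation of statements already established earlier in the paper; the argument is essentially a book-keeping of definitions, since all the substantive work was already done in the preceding Proposition characterizing remodeling via transitivity of the $X$-material groupoid (which itself rests on the Stefan--Sussmann machinery behind the material distributions).

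First I would dispatch the pointwise statement. By Definition \ref{1.17.2}, a material particle $X \in \mathcal{B}$ is presenting an aging precisely when it is \emph{not} presenting a remodeling. The earlier Proposition on pointwise remodeling asserts that $X$ is presenting a remodeling if and only if the $X$-material groupoid $\Omega_{X}\left( \mathbb{R}\right)$ is transitive. Negating both sides of that equivalence gives: $X$ is not presenting a remodeling if and only if $\Omega_{X}\left( \mathbb{R}\right)$ is not transitive. Chaining the two yields the first claim.

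For the global statement I would argue as follows. By Definition \ref{1.17.2}, $\mathcal{C}$ is presenting an aging if and only if it is not a process of remodeling, which by Definition \ref{1.17.2.se} means $\mathcal{C}$ does not present a (global) remodeling. The same earlier Proposition states that $\mathcal{C}$ is presenting a remodeling if and only if $\Omega_{X}\left( \mathbb{R}\right)$ is transitive for \emph{every} $X \in \mathcal{B}$. Negating this universally quantified statement, $\mathcal{C}$ fails to present a remodeling exactly when there exists at least one particle $X$ for which $\Omega_{X}\left( \mathbb{R}\right)$ is not transitive, which is the second claim.

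I do not expect a genuine obstacle here. The only point requiring a moment's care is to confirm that ``process of remodeling'' in Definition \ref{1.17.2} is synonymous with ``global remodeling'' in Definition \ref{1.17.2.se} — it is, by the parenthetical ``(or simply a remodeling)'' there — so that negating ``process of remodeling'' is the same as negating ``all material points present a remodeling,'' and this is precisely what turns the universal quantifier of the remodeling criterion into the existential quantifier appearing in the statement.
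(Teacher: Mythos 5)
Your argument is correct and is exactly the intended one: the paper states this proposition without proof, treating it as the immediate logical negation of the preceding remodeling characterization (pointwise, and then with the universal quantifier turning into an existential one globally), which is precisely what you do. The only cosmetic remark is that the remodeling criterion you invoke is itself a direct restatement of the definitions of $\Omega_{X}\left(\mathbb{R}\right)$ and transitivity, not an application of the Stefan--Sussmann machinery, but this does not affect your proof.
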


\begin{corollary}
Let $\mathcal{C}$ be a body-time manifold with some state uniform. $\mathcal{C}$ is presenting an aging if, and only if, the material groupoid $\Omega \left( \mathcal{C}\right)$ is not transitive.
\end{corollary}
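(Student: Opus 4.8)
The plan is to read this off directly from Corollary \ref{corollaryunifremod2344}, the only work being the bookkeeping between the notions of remodeling, uniform remodeling and aging, together with the standing hypothesis that some state is uniform.

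First I would recall Corollary \ref{corollaryunifremod2344}: the material groupoid $\Omega\left(\mathcal{C}\right)$ is transitive if, and only if, $\mathcal{C}$ is presenting a uniform remodeling. By Definition \ref{1.17.2.se}, ``presenting a uniform remodeling'' means ``presenting a remodeling \emph{and} some (hence all) state is uniform''. Next I would invoke the hypothesis of the statement, that some state of $\mathcal{C}$ is uniform; under this hypothesis the uniformity clause is automatically met, so for such a $\mathcal{C}$ the conditions ``$\mathcal{C}$ presents a uniform remodeling'' and ``$\mathcal{C}$ presents a remodeling'' coincide. Combining the two observations, $\Omega\left(\mathcal{C}\right)$ is transitive if, and only if, $\mathcal{C}$ presents a (global) remodeling.

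Finally I would pass to the negation on both sides. By Definition \ref{1.17.2}, $\mathcal{C}$ is a process of aging precisely when it is not a process of remodeling; hence $\Omega\left(\mathcal{C}\right)$ is not transitive if, and only if, $\mathcal{C}$ is presenting an aging, which is the claim. As a consistency check one can re-derive the same equivalence through Proposition \ref{auxprop4342}: if $\Omega\left(\mathcal{C}\right)$ is not transitive then $\mathcal{C}$ is not a global remodeling, so some particle $X$ fails to present a remodeling and $\Omega_X\left(\mathbb{R}\right)$ is not transitive, whence $\mathcal{C}$ ages; conversely, if some $\Omega_X\left(\mathbb{R}\right)$ is not transitive then that particle cannot present a remodeling, so $\mathcal{C}$ is not a global remodeling and, by the equivalence above, $\Omega\left(\mathcal{C}\right)$ is not transitive.

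There is essentially no obstacle here: the substantive content is already contained in Corollary \ref{corollaryunifremod2344}. The only points requiring a little care are that the hypothesis ``some state is uniform'' is exactly what is needed to collapse ``uniform remodeling'' to ``remodeling'', and that the global notion ``process of aging'' in Definition \ref{1.17.2} is to be read as the negation of ``global remodeling'' rather than as a merely pointwise statement — which is precisely why the two routes above agree.
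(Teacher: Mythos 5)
Your proposal is correct and is exactly the intended argument: the paper states this corollary without proof as an immediate consequence of Corollary \ref{corollaryunifremod2344} and Definitions \ref{1.17.2.se} and \ref{1.17.2}, which is precisely the bookkeeping you carry out (the hypothesis ``some state uniform'' collapses uniform remodeling to remodeling, and negating both sides gives the claim). Your cross-check via Proposition \ref{auxprop4342} is consistent with the paper's placement of the corollary right after that proposition.
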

\noindent{Then, we are ready to present a definition of \textit{smooth aging}.}

Analogously to smooth remodeling, to define \textit{smooth aging} of the global body-time manifold as the smooth aging of all the material particles is not enough. We need also to impose smoothness on the variation along the material particles.
\begin{definition}\label{smoothaging342345}
\rm
Let be a body-time manifold $\mathcal{C}$. $\mathcal{C}$ is presenting a \textit{smooth aging} if $\Omega \left( \mathcal{C}\right)$ is a Lie subgroupoid of $\Phi \left( \mathcal{V} \right)$ and, there is a particle $X$ such that the $X-$material groupoid $\Omega_{X} \left( \mathbb{R} \right)$ is a not transitive Lie subgroupoid of $\Phi \left( \mathcal{V} \right)$.

\end{definition}
In other words, $\mathcal{C}$ is presenting a smooth aging if the variation of the body is ``\textit{smooth}'' ($\Omega \left( \mathcal{C}\right)$ is a Lie subgroupoid of $\Phi \left( \mathcal{V} \right)$) and it is not a smooth remodeling.\\
Observe that, taking into account Proposition \ref{auxiliarprop34re}, if $\Omega \left( \mathcal{C}\right)$ is a Lie subgroupoid of $\Phi \left( \mathcal{V} \right)$, then for all particle $X$, the $X-$material groupoid $\Omega_{X} \left( \mathbb{R} \right)$ is a Lie subgroupoid of $\Phi \left( \mathcal{V} \right)$. So, the unique imposition is given over the lack of transitivity of a $X-$material groupoids.\\\\

\noindent{Consider $\Omega \left( \mathcal{C} \right)$, the material groupoid associated to the body-time manifold $\mathcal{C}$. Then, we have available the material distribution $A \Omega \left( \mathcal{C} \right)$, body-material distribution $A \Omega \left( \mathcal{C} \right)^{\sharp}$ and the uniform-material distribution $A \Omega \left( \mathcal{C} \right)^{B}$ and their associated foliations, the material foliation $\overline{\mathcal{F}}$, body-material foliation $\mathcal{F}$ and uniform-material foliation $\mathcal{G}$, respectively.}\\

\begin{proposition}\label{computationalproposition24124aging}
Let be a body-time manifold $\mathcal{C}$. $\mathcal{C}$ presents a smooth aging process if, and only if, 

\begin{itemize}
    \item[i)] $dim  \left( A \Omega \left( \mathcal{C} \right)^{T}_{\epsilon \left(\left(t,X\right)\right)}\right)  $ is constant respect to $\left(t,X \right)$\\
    \item[ii)] For some $X$, $dim  \left( A \Omega_{X} \left( \mathbb{R} \right)^{\sharp}_{t} \right)= 0 $, for some $t$.
\end{itemize}
Here, $ A \Omega \left( \mathcal{C} \right)^{T}_{\epsilon \left(\left(t,X\right)\right)}$ (resp. $A \Omega_{X} \left( \mathbb{R} \right)^{\sharp}_{t}$) is the fibre of $ A \Omega \left( \mathcal{C} \right)^{T}$ (resp. $A \Omega_{X} \left( \mathbb{R} \right)^{\sharp}$) at $\epsilon \left( \left(t,X\right)\right)$ (resp. $t$).

\begin{proof}
Notice that, if $\Omega \left( \mathcal{C}\right)$ is a Lie subgroupoid of $\Phi \left( \mathcal{V} \right)$, then, by Proposition \ref{auxiliarprop34re}, all the $X-$material groupoids are Lie subgroupoids of $\Phi \left( \mathcal{V} \right)$. Then, $dim  \left( A \Omega_{X} \left( \mathbb{R} \right)^{\sharp}_{t}\right)$ does not depend on time at any instant $t$. Therefore, $\Omega_{X} \left( \mathbb{R} \right)$ is not transitive if, and only if, $dim  \left( A \Omega_{X} \left( \mathbb{R} \right)^{\sharp}_{t}\right)= 0 $, for all $t$. Then, the proof is analogous to theorem \ref{computationalproposition24124}.
\end{proof}
\end{proposition}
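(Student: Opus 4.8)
The plan is to repeat, with one change of sign, the argument used for Theorem~\ref{computationalproposition24124}. The proof breaks into three steps: (a) condition $i)$ is equivalent to $\Omega\left(\mathcal{C}\right)$ being a Lie subgroupoid of $\Phi\left(\mathcal{V}\right)$; (b) once this holds, every $X$-material groupoid is automatically a Lie subgroupoid; and (c) non-transitivity of some $\Omega_X\left(\mathbb{R}\right)$ is exactly condition $ii)$. Matching these three facts against Definition~\ref{smoothaging342345} then yields the equivalence.

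Step (a). The forward implication is the substantive one. If $\dim\left(A\Omega\left(\mathcal{C}\right)^{T}_{\epsilon\left(t,X\right)}\right)$ is constant, the material foliation $\overline{\mathcal{F}}$ is regular; since $\overline{\mathcal{F}}$ is, by Theorem~\ref{10.24} and Corollary~\ref{10.33}, the coarsest foliation of $\Phi\left(\mathcal{V}\right)$ whose leaves lie inside the $\overline{\beta}$-fibres and of which $\Omega\left(\mathcal{C}\right)$ is a union of leaves, Lemma~\ref{lemmauxiliar213423second4453}, applied inside each $\beta$-fibre of $\Phi\left(\mathcal{V}\right)$ (which is a manifold), forces every path-connected component of every $\overline{\beta}$-fibre of $\Omega\left(\mathcal{C}\right)$ to be a leaf of $\overline{\mathcal{F}}$, hence a submanifold. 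The charts $\Psi_{\mathcal{D},\mathcal{E}}$ built in the proof of Corollary~\ref{4.4.second2324.uniform2445} from the transitive Lie subgroupoids $\Omega\left(\overline{\mathcal{F}}\left(x\right)\right)$ over the leaves of the body-material foliation (Theorem~\ref{10.20}) then patch into a smooth structure exhibiting $\Omega\left(\mathcal{C}\right)$ as a Lie subgroupoid of $\Phi\left(\mathcal{V}\right)$. Conversely, if $\Omega\left(\mathcal{C}\right)$ is a Lie subgroupoid then $\overline{\beta}$ is a submersion whose fibres are submanifolds, and $A\Omega\left(\mathcal{C}\right)^{T}$ restricted to the identities is tangent to these fibres; all fibres of a submersion have the same dimension $\dim\Omega\left(\mathcal{C}\right)-\dim\mathcal{C}$, which is $i)$.

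Step (b) is immediate: granted $i)$, i.e. $\Omega\left(\mathcal{C}\right)$ a Lie subgroupoid of $\Phi\left(\mathcal{V}\right)$, Proposition~\ref{auxiliarprop34re} gives that every $\Omega_X\left(\mathbb{R}\right)$ is a Lie subgroupoid of $\Phi\left(\mathcal{V}\right)$. For step (c), fix $X$; then $\Omega_X\left(\mathbb{R}\right)$ is a Lie subgroupoid over the connected one-dimensional base $\mathbb{R}$, so the fibre $A\Omega_X\left(\mathbb{R}\right)^{\sharp}_{t}$ is the tangent space at $t$ to the orbit of $\Omega_X\left(\mathbb{R}\right)$ through $t$, of dimension $0$ or $1$. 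If this dimension equals $1$ for every $t$, every orbit is open in $\mathbb{R}$, and a partition of the connected space $\mathbb{R}$ into nonempty open sets has a single member, so $\Omega_X\left(\mathbb{R}\right)$ is transitive; the converse is clear. Hence $\Omega_X\left(\mathbb{R}\right)$ fails to be transitive exactly when $\dim\left(A\Omega_X\left(\mathbb{R}\right)^{\sharp}_{t_{0}}\right)=0$ for some $t_{0}$ --- which, by Proposition~\ref{auxprop4342}, is also precisely the condition that $X$ presents an aging.

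Finally, by Definition~\ref{smoothaging342345}, $\mathcal{C}$ presents a smooth aging if, and only if, $\Omega\left(\mathcal{C}\right)$ is a Lie subgroupoid of $\Phi\left(\mathcal{V}\right)$ and some $\Omega_X\left(\mathbb{R}\right)$ is a non-transitive Lie subgroupoid of $\Phi\left(\mathcal{V}\right)$. By step (a) the first clause is $i)$; by step (b) the Lie-subgroupoid requirement on $\Omega_X\left(\mathbb{R}\right)$ is then automatic for every $X$, so the second clause reduces to ``some $\Omega_X\left(\mathbb{R}\right)$ is non-transitive'', which by step (c) is $ii)$. The only genuinely delicate point is the forward implication of step (a) --- producing an actual manifold structure on $\Omega\left(\mathcal{C}\right)$ from the constancy of a dimension --- and it is handled exactly as in Theorem~\ref{computationalproposition24124}, through Lemmas~\ref{lemmauxiliar213423} and~\ref{lemmauxiliar213423second4453} and the chart construction of Corollary~\ref{4.4.second2324.uniform2445}; everything else rests only on Proposition~\ref{auxiliarprop34re} and the connectedness of $\mathbb{R}$.
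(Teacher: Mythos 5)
Your proposal is correct and follows essentially the same route as the paper: condition $i)$ is handled exactly as in Theorem~\ref{computationalproposition24124} (regularity of $\overline{\mathcal{F}}$, Lemma~\ref{lemmauxiliar213423second4453}, and the charts of Corollary~\ref{4.4.second2324.uniform2445}), while Proposition~\ref{auxiliarprop34re} plus constancy of $\dim\left(A \Omega_{X} \left( \mathbb{R} \right)^{\sharp}_{t}\right)$ and the connectedness of $\mathbb{R}$ reduce condition $ii)$ to non-transitivity of some $\Omega_{X}\left(\mathbb{R}\right)$, which is precisely Definition~\ref{smoothaging342345}. You merely spell out the steps the paper compresses into ``the proof is analogous to Theorem~\ref{computationalproposition24124}''.
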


In this way, again, we present a result characterizing smooth aging of the evolution material which gives a computational way of testing it.\\\\

\begin{definition}\label{1.17.5}
\rm
Let $\mathcal{C}$ be a body-time manifold. The body $\mathcal{B}$ is said to be undergone a \textit{uniform aging} if for each $t \in \mathbb{R}$ all the points $\left( t , X \right) \in \mathcal{C}$ are isomorphic and it is not presenting a uniform remodeling.
\end{definition}
Intuitively, in a process of uniform aging the material properties change equally at all the points.

\begin{proposition}

Let $\mathcal{C}$ be a body-time manifold. The body $\mathcal{B}$ presents uniform aging if, and only if, for all $t$, the $t-$material groupoid $\Omega_{t} \left( \mathcal{B} \right)$ is transitive and the material groupoid $\Omega \left( \mathcal{C}\right)$ is not transitive.
\end{proposition}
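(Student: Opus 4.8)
The plan is to prove the equivalence by unwinding Definition~\ref{1.17.5} clause by clause and translating each clause into the asserted groupoid condition, so that the whole statement becomes a short chain of ``if and only if''s; no computation is needed, only two previously established facts.

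The first clause of Definition~\ref{1.17.5} says that for each fixed $t \in \mathbb{R}$ all the points $\left(t,X\right) \in \mathcal{C}$, with $X$ ranging over $\mathcal{B}$, are materially isomorphic. I would observe that, under the identification of $\{t\}\times\mathcal{B}$ with the state $\mathcal{B}_{t}$ and the corresponding identification $\Omega_{t}\left(\mathcal{B}\right) = \Omega\left(\mathcal{B}_{t}\right)$ recorded above, these are precisely the body points of the simple body $\mathcal{B}_{t}$. Since ``materially isomorphic'' is an equivalence relation, the orbits of $\Omega_{t}\left(\mathcal{B}\right)$ partition $\mathcal{B}$; hence all the $\left(t,X\right)$ being mutually isomorphic is equivalent to $\Omega_{t}\left(\mathcal{B}\right)$ having a single orbit, i.e.\ to $\Omega_{t}\left(\mathcal{B}\right)$ being transitive --- which is exactly the proposition characterizing uniformity of a body by transitivity of its material groupoid, applied to $\mathcal{B}_{t}$. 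Quantifying over $t$ yields the first condition on the right-hand side.

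The second clause of Definition~\ref{1.17.5} says that $\mathcal{C}$ is not presenting a uniform remodeling. Here I would simply invoke Corollary~\ref{corollaryunifremod2344}: $\mathcal{C}$ presents a uniform remodeling if and only if $\Omega\left(\mathcal{C}\right)$ is transitive, so by contraposition $\mathcal{C}$ is not presenting a uniform remodeling if and only if $\Omega\left(\mathcal{C}\right)$ is not transitive. Conjoining the two equivalences gives the claimed biconditional in both directions at once. A minor consistency remark worth including: as soon as every $\Omega_{t}\left(\mathcal{B}\right)$ is transitive, every state is uniform, so the hypothesis built into the notion of uniform remodeling (some, hence all, state uniform) is automatically satisfied and the appeal to Corollary~\ref{corollaryunifremod2344} is legitimate.

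There is no real obstacle here; the only point demanding a little care is the first translation, namely spelling out that ``all points $\left(t,X\right)$ mutually isomorphic'' coincides with transitivity of $\Omega_{t}\left(\mathcal{B}\right)$ through the identification $\Omega_{t}\left(\mathcal{B}\right)=\Omega\left(\mathcal{B}_{t}\right)$. Everything else is a direct application of Corollary~\ref{corollaryunifremod2344} together with the definitions of remodeling, of uniform remodeling, and of the $t$-material groupoid.
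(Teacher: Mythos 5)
Your proof is correct and follows exactly the route the paper intends: the paper states this proposition without proof, treating it as an immediate unwinding of Definition \ref{1.17.5} via the identification $\Omega_{t}\left(\mathcal{B}\right)=\Omega\left(\mathcal{B}_{t}\right)$ (uniformity of each state $\Leftrightarrow$ transitivity of each $\Omega_{t}\left(\mathcal{B}\right)$) together with the contrapositive of Corollary \ref{corollaryunifremod2344}. Your added remark about the uniformity hypothesis is harmless but not needed, since that corollary carries no such hypothesis.
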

So, a process of aging is uniform if all the states of the body are uniform but the intrinsic properties of the body vary along the time.
\begin{proposition}
Let be a body-time manifold $\mathcal{C}$. $\mathcal{C}$ presents a smooth uniform aging process if, and only if, 

\begin{itemize}
    \item[i)] $dim  \left( A \Omega \left( \mathcal{C} \right)^{T}_{\epsilon \left(\left(t,X\right)\right)}\right)  $ is constant respect to $\left(t,X \right)$\\
    \item[ii)] For some $X$ and $t$, $dim  \left( A \Omega_{X} \left( \mathbb{R} \right)^{\sharp}_{t} \right)= 0 $.
    \item[iii)] For all $t$ and some $X$, $dim  \left( A \Omega_{t} \left( \mathcal{C}\right)^{\sharp}_{X} \right)= 3 $.
\end{itemize}
Notice that, by proposition \ref{auxiliarprop34re}, $dim  \left( A \Omega_{X} \left( \mathbb{R} \right)^{\sharp}_{t} \right)$ and $dim  \left( A \Omega_{t} \left( \mathcal{C}\right)^{\sharp}_{X} \right)$ are constant on $t$ and $X$, respectively.\\\\

\end{proposition}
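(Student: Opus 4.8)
The plan is to first translate the (unstated but evident) notion of \emph{smooth uniform aging} into its components: by combining Definition \ref{smoothaging342345} with Definition \ref{1.17.5}, $\mathcal{C}$ presents a smooth uniform aging exactly when (a) $\Omega(\mathcal{C})$ is a Lie subgroupoid of $\Phi(\mathcal{V})$, (b) there is a material particle $X$ for which $\Omega_X(\mathbb{R})$ is not transitive, and (c) the state $\mathcal{B}_t$ is uniform for every $t$; clause (b) already forces $\mathcal{C}$ not to present a uniform remodeling, so no extra negative hypothesis is needed. I would then prove the proposition by establishing (a) $\Leftrightarrow$ i), and --- granted (a) --- (b) $\Leftrightarrow$ ii) and (c) $\Leftrightarrow$ iii).

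For (a) $\Leftrightarrow$ i) I would reuse the argument in the proof of Theorem \ref{computationalproposition24124} (equivalently Proposition \ref{computationalproposition24124aging}): constancy of $\dim\!\left(A\Omega(\mathcal{C})^{T}_{\epsilon(t,X)}\right)$ in $(t,X)$ says the material foliation $\overline{\mathcal{F}}$ is regular; since by Corollary \ref{10.33} $\overline{\mathcal{F}}$ is the coarsest foliation of $\Phi(\mathcal{V})$ whose leaves lie inside the $\overline{\beta}$-fibres, Lemma \ref{lemmauxiliar213423second4453} forces the path-connected components of the $\overline{\beta}$-fibres of $\Omega(\mathcal{C})$ to be leaves, and then the charts of the transitive Lie subgroupoids $\Omega(\overline{\mathcal{F}}(g))$ constructed in the proof of Corollary \ref{4.4.second2324.uniform2445} turn $\Omega(\mathcal{C})$ into a Lie subgroupoid of $\Phi(\mathcal{V})$; the reverse implication is immediate, a Lie subgroupoid having a characteristic distribution of locally constant rank. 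From now on I assume (a)/i) holds, so by Proposition \ref{auxiliarprop34re} every $\Omega_X(\mathbb{R})$ and every $\Omega_t(\mathcal{B})=\Omega_t(\mathcal{C})$ is a Lie subgroupoid of $\Phi(\mathcal{V})$, and the dimensions in ii) and iii) are constant as recorded in the statement.

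For (b) $\Leftrightarrow$ ii): $A\Omega_X(\mathbb{R})^{\sharp}$ is a distribution on the connected one-dimensional base $\mathbb{R}$, so its (constant) rank is $1$ or $0$; a partition of $\mathbb{R}$ into open intervals is trivial, so rank $1$ is equivalent to $\Omega_X(\mathbb{R})$ transitive and rank $0$ to its negation. Thus ``$\dim\!\left(A\Omega_X(\mathbb{R})^{\sharp}_{t}\right)=0$ for some $X,t$'' says, via the constancy and Proposition \ref{auxprop4342}, that some particle $X$ presents aging, which is (b). For (c) $\Leftrightarrow$ iii): $A\Omega_t(\mathcal{B})^{\sharp}$ is a distribution on the three-dimensional body $\mathcal{B}$, and since the leaf dimensions of $\mathcal{F}_t$ are the fibre dimensions of $A\Omega_t(\mathcal{B})^{\sharp}$, Theorem \ref{14.1.second} shows that this (constant) rank equals $3$ everywhere iff $\mathcal{B}_t$ is a single smoothly uniform leaf, i.e. iff $\Omega_t(\mathcal{B})$ is a transitive Lie subgroupoid (Corollary \ref{4.4}), i.e. iff $\mathcal{B}_t$ is uniform; so iii) is exactly (c). Combining, i) $\wedge$ ii) $\wedge$ iii) $\Leftrightarrow$ (a) $\wedge$ (b) $\wedge$ (c) $\Leftrightarrow$ $\mathcal{C}$ presents a smooth uniform aging.

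I expect the only real obstacle to be conceptual bookkeeping rather than new mathematics: pinning down the intended meaning of ``smooth uniform aging'' (and checking that the ``not a uniform remodeling'' clause is automatic once some particle ages), and making sure the dimension counts ($0$ versus $1$ over $\mathbb{R}$, rank $3$ over the three-dimensional $\mathcal{B}$) and the constancy assertions are each tied to the correct earlier result. The single genuinely non-trivial step --- upgrading regularity of $\overline{\mathcal{F}}$ to a Lie-subgroupoid structure on $\Omega(\mathcal{C})$ --- is already carried out in the proof of Theorem \ref{computationalproposition24124}, so it can be invoked rather than redone.
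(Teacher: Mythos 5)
Your proposal is correct and follows exactly the route the paper intends: the paper leaves this proposition unproved, referring implicitly to the arguments of Theorem \ref{computationalproposition24124} and Proposition \ref{computationalproposition24124aging}, and your argument is precisely that expected filling-in --- condition i) giving the Lie subgroupoid structure via regularity of $\overline{\mathcal{F}}$, the rank $0/1$ dichotomy over $\mathbb{R}$ for aging of some particle, and the rank $3$ condition over $\mathcal{B}$ (via Theorem \ref{14.1.second}, Corollary \ref{4.4} and Proposition \ref{auxiliarprop34re}) for uniformity of every state. Your bookkeeping of the unstated definition of ``smooth uniform aging'' and of the automatic failure of uniform remodeling once some particle ages is consistent with the paper's definitions, so no gap remains beyond those already present in the paper's own sketched arguments.
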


\section*{Acknowledgments}
M. de Leon and V. M. Jiménez acknowledge the partial finantial support from MICINN Grant PID2019-106715GB-C21 and the ICMAT Severo Ochoa project CEX2019-000904-S.

\bibliographystyle{plain}

\bibliography{Library}

\end{document}